\newtheorem{defi}{Definition}
\newtheorem{theo}{Theorem}
\newtheorem{rema}{Remark}
\newtheorem{asum}{Assumption}
\newtheorem{prop}{Proposition}
\newcommand{\svar}{t}
\newcommand{\tvar}{s}
\newcommand{\auxvar}{\tau_{\max}}
\newcommand{\lvar}{\Lambda}
\newcommand{\tpar}{\text{\normalfont par}}
\newcommand{\refer}{\text{\normalfont ref}}
\newcommand{\barj}{{\tilde j}}
\newcommand{\maxvar}{k_{\max}}
\newcommand{\newi}{{i,\varl}}
\newcommand{\newij}{{{i_j},\varl}}
\newcommand{\varl}{l}
\newcommand{\tc}{c}
\newcommand{\windowC}{C(x(\tvar_j),\eta(\tvar_j),c)}%
\newcommand{\itilde}{1}
\newcommand{\itildel}{{1,l}}
\newcommand{\ivar}[1]{{i_{#1}}}
\newcommand{\maxvart}[1]{k_{2}}
\newcommand{\norm}[1]{\left\lVert#1\right\rVert}
\newcommand{\abs}[1]{\left|#1\right|}
\colorlet{istorange}{orange}
\colorlet{istgreen}{green!50!black}
\colorlet{istblue}{blue} %
\colorlet{istred}{red!90!black}
\begin{document}
	\title{Robust dynamic  self-triggered control for nonlinear systems using hybrid Lyapunov functions
	 }
	\author{ Michael Hertneck and Frank Allg\"ower
		\thanks{Funded by Deutsche
			Forschungsgemeinschaft (DFG, German Research Foundation) under Germany’s
			Excellence Strategy - EXC 2075 - 390740016 and under grant
			AL 316/13-2 - 285825138. We acknowledge the support by the Stuttgart
			Center for Simulation Science (SimTech).}
		\thanks{The authors are with the University of Stuttgart, Institute for Systems Theory and Automatic Control, 70569 Stuttgart, Germany (email:
			{\{{hertneck}{,allgower}\}}{@ist.uni-stuttgart.de}).}}
	
	\maketitle
	
	\begin{abstract}
		Self-triggered control (STC) is a resource efficient approach to determine sampling instants for Networked Control Systems. At each sampling instant, an STC mechanism determines not only the control inputs but also the next sampling instant. 
		In this article, an STC framework for perturbed nonlinear systems is proposed. In the framework, a dynamic variable is used in addition to current state information to determine the next sampling instant, rendering the STC mechanism dynamic. Using dynamic variables has proven to be powerful for increasing sampling intervals for the closely related concept of event-triggered control, but has so far not been exploited for STC.
		Two variants of the dynamic STC framework are presented. The first variant can be used without further knowledge on the disturbance and leads to guarantees on input-to-state stability. The second variant exploits a known disturbance bound to determine sampling instants and guarantees asymptotic stability of a set containing the origin.  In both cases, hybrid Lyapunov function techniques are used to derive the respective stability guarantees.
		Different choices for the dynamics of the dynamic variable, that lead to different particular STC mechanisms, are presented for both variants of the framework.  The resulting dynamic STC mechanisms are illustrated with two numerical examples to emphasize their benefits in comparison to existing static STC approaches.  
	\end{abstract}
	
	\begin{IEEEkeywords}
		Self-triggered control, Control over Communications, Sampled-data control, Hybrid Systems,
		 Communication Networks
	\end{IEEEkeywords}

\section{Introduction}
Many modern control applications, e.g., in the field of Networked Control Systems (NCS), involve the implementation of feedback laws on shared hardware or with limited communication resources \cite{hespanha2007survey}. 
Classically, feedback laws are evaluated periodically with a sampling frequency that is determined before the runtime of the controller. 
However, such periodic sampling may lead in many situations to a waste of resources, as reported, e.g., in \cite{astroem1999comparison,arzen1999simple}. Therefore, event- and self-triggered control have been developed as alternatives to periodic sampling \cite{heemels2012introduction}. 
In event-triggered control (ETC), a state-dependent trigger condition is used to determine sampling instants. The trigger condition is monitored continuously or at periodic time instants and a transmission of sampled state information is triggered if the trigger condition is fulfilled. 
While ETC can typically significantly reduce the required amount of transmissions, the continuous/periodic monitoring of the trigger condition may be impractical in many NCS setups, as it requires that the trigger mechanism has access to current state information at each time the trigger condition shall be checked. Moreover, the resulting traffic patterns for ETC are hard to predict, which may result in undesired effects like packet loss or highly time varying delays~\cite{blind2011analysis}.

In self-triggered control (STC), the controller determines at each sampling instant based on sampled state information when the next sample should be taken, thus lowering the effort for monitoring the plant state. STC can reduce the network load for NCS in comparison to periodic sampling significantly, as it has been demonstrated in \cite{mazo2009self,anta2010sample}. Moreover, it allows for efficient scheduling of sampling instants \cite{samii2010dynamic}. For linear systems, a variety of STC approaches are available, see, e.g., \cite{heemels2012introduction,brunner2019event} and the references therein. For nonlinear systems, there are only a few approaches available to to this date. 
In \cite{anta2010sample,delimpaltadakis2020isochronous,delimpaltadakis2020region} a conservative prediction for the trigger condition of ETC is used to determine sampling instants. For an efficient online evaluation, the state-space is divided into regions with the same sampling interval.    
In \cite{benedetto2011digital,tiberi2013simple,liu2015small,theodosis2018self,proskurnikov2019lyapunov}, Lipschitz continuity properties are used to determine sampling instants such that the decrease of a Lyapunov function can be guaranteed.  

In all aforementioned works on nonlinear STC, only the current state of the system is taken into account to determine sampling instants. However, for the related approach of STC, it has been demonstrated, that taking also information about the past system behavior into account for the trigger condition can reduce the required amount of sampling significantly \cite{girard2015dynamic}. 
In \cite{girard2015dynamic} the past system behavior is encoded in a dynamic variable, such that an averaged version of a purely state-dependent trigger rule can be used to determine sampling instants, leading to significantly increased sampling intervals.
Similar benefits as for such \textit{dynamic} ETC can also be expected for STC if the past system behavior is taken into account, yet no dynamic STC approaches exist to this date. 

In a preliminary study for the work at hand \cite{hertneck21dynamic}, we have presented a \textit{dynamic} STC mechanism based on hybrid Lyapunov functions that bridges this gap. In \cite{hertneck21dynamic}, a dynamic variable is used to encode the past system behavior. In particular, therein the dynamic variable is chosen as a filtered average of the values of a Lyapunov function at previous sampling instants. At each sampling instant, the next sampling instant is selected such that the value of the Lyapunov function at the next sampling instant is bounded by the value of the dynamic variable. Hybrid Lyapunov techniques adapted from \cite{carnevale2007lyapunov,nesic2009explicit,hertneck20stability} are used in \cite{hertneck21dynamic} to guarantee this bound and to derive stability guarantees for the dynamic STC mechanism. It is demonstrated in \cite{hertneck21dynamic} that the proposed dynamic STC mechanism based on hybrid Lyapunov functions can significantly reduce the required amount of samples in comparison to static STC.    

In this article, we extend the results from \cite{hertneck21dynamic} in various directions. The main contributions compared to the preliminary results in \cite{hertneck21dynamic} are: 

(1) We present a general framework for dynamic STC based on hybrid Lyapunov functions that includes different particular dynamic STC mechanisms. While in \cite{hertneck21dynamic}, the dynamics of the dynamic variable were fixed to be a finite impulse response (FIR) filter for the Lyapunov function, the general framework in this article allows different choices. We present two particular alternatives, namely choosing the dynamic variable as an infinite impulse response (IIR) filter for the Lyapunov function and choosing it such that it implements a tunable reference function as a bound for the Lyapunov function and illustrate these options with numerical examples. Further choices can be included in the general framework as well. 

(2) We extend the results from \cite{hertneck21dynamic} to perturbed nonlinear systems. Note that only few results on  STC for perturbed nonlinear systems are available, and all require knowledge of  bound on the disturbance (see \cite{delimpaltadakis2020region} and the references therein for a literature overview). We present two different variants of the dynamic STC framework such that either input-to-state stability (ISS) or robust asymptotic stability (RAS) of a sublevel set of the Lyapunov function can be guaranteed. The variant with ISS guarantees requires no prior knowledge on the disturbance signal such as a disturbance bound. The variant that ensures RAS of a level set of the Lyapunov function does require knowledge of a bound on the disturbance, but offers two advantages in comparison to the ISS variant: Firstly, it can be used locally, i.e., assumptions on the system dynamics that are required for using hybrid Lyapunov functions only need to hold locally. Secondly, by taking the disturbance bound explicitly into account for determining sampling intervals, less frequent triggering may be required in some situations in comparison to the ISS variant. 

(3) For the RAS variant of our framework, we modify the dynamic STC mechanism from \cite{hertneck21dynamic} such that it can use different parameters for different level sets of the Lyapunov function. This can significantly reduce the required amount of sampling instants, since the employed techniques for hybrid Lyapunov functions may allow for larger sampling intervals when it can be ensured that the system state stays in certain level sets of the Lyapunov function. 

(4) We present more compact proofs of the main results compared to \cite{hertneck21dynamic}.   

The remainder of this article is organized as follows. The problem setup is described in Section~\ref{sec_setup}. In Section~\ref{sec_concept}, the variant of the general framework for dynamic STC based on hybrid Lyapunov functions that requires no disturbance bound is presented. Specific dynamic STC mechanisms with ISS guarantees for this variant are introduced in Section~\ref{sec_spec}. The variant of the framework that takes explicitly into account a disturbance bound and respective specific dynamic STC mechanisms with RAS guarantees are presented in Section~\ref{sec_loc}. In Section~\ref{sec_ex}, the proposed mechanisms are illustrated with two numerical examples. Section~\ref{sec_conc} concludes the article.

\subsubsection*{Notation and definitions}
The nonnegative real numbers are denoted by  $\mathbb{R}_{\geq 0} $. The natural numbers are denoted by $\mathbb{N}$, and we define $\mathbb{N}_0:=\mathbb{N}\cup  \left\lbrace 0 \right\rbrace $. 
We denote the Euclidean norm by $\abs{\cdot}$ and the infinity norm by $\abs{\cdot}_\infty$. A continuous function $\alpha: \mathbb{R}_{\geq 0} \rightarrow \mathbb{R}_{\geq 0}$ is a class $ \mathcal{K}$ function if it is strictly increasing and $\alpha(0) = 0$. It is a class $\mathcal{K}_\infty$ function if it is a class $\mathcal{K}$ function and it is unbounded. A continuous function $\beta:\mathbb{R}_{\geq 0}\times \mathbb{R}_{\geq 0} \rightarrow \mathbb{R}_{\geq 0}$ is a class $\mathcal{K}\mathcal{L}$ function, if $\beta(\cdot,t)$ is a class $\mathcal{K}$ function for each $t \in\mathbb{R}_{\geq 0}$ and $\beta(q,\cdot)$ is nonincreasing and satisfies $\lim\limits_{t \rightarrow \infty} \beta(q,t) = 0$ for each $q\in\mathbb{R}_{\geq 0}$. A function $\beta:\mathbb{R}_{\geq 0}\times \mathbb{R}_{\geq 0} \times \mathbb{R}_{\geq 0} \rightarrow \mathbb{R}_{\geq 0}$ is a class $\mathcal{K}\mathcal{L}\mathcal{L}$ function if for each $r \geq 0$, $\beta(\cdot,r,\cdot)$ and $\beta(\cdot,\cdot,r)$ are class $\mathcal{K}\mathcal{L}$ functions.

To characterize a hybrid model of the considered NCS, we use the following definitions, that are taken from \cite{goebel2006solutions,heemels2010networked}.

\begin{defi}
	\cite{heemels2010networked} A \textit{compact hybrid time domain} is a set $\mathcal{D} = \bigcup^{J-1}_{j = 0} \left(\left[t_j,t_{j+1}\right],j\right) \subset \mathbb{R}_{\geq0} \times \mathbb{N}_{0}$ with $J \in \mathbb{N}_{0}$ and $0 = t_0 \leq t_1 \dots \leq t_J$. A \textit{hybrid time domain} is a set $\mathcal{D} \subset \mathbb{R}_{\geq0} \times \mathbb{N}_{ 0}$ such that $\mathcal{D} \cap \left(\left[0,T\right] \times \left\lbrace 0,\dots,J\right\rbrace\right)$ is a compact hybrid time domain for each $\left(T,J\right) \in \mathcal{D}$.
\end{defi}
\begin{defi}
	\cite{heemels2010networked} A \textit{hybrid trajectory} is a pair (dom~$\xi$, $\xi$) consisting of the hybrid time domain dom~$\xi$ and a function $\xi$ defined on dom~$\xi$ that is absolutely continuous in $t$ on (dom~$\xi$)$\cap\left(\mathbb{R}_{\geq 0} \times \left\lbrace j \right\rbrace \right)$ for each $j\in\mathbb{N}_{ 0}$. 
\end{defi}
\begin{defi}
	\cite{heemels2010networked}	For the hybrid system $\mathcal{H}$ given by the state space $\mathbb{R}^{n_\xi}$, the input space $\mathbb{R}^{n_w}$ and the data $\left(F,G,C,D\right),$ where $F:\mathbb{R}^{n_\xi}\times\mathbb{R}^{n_w}\rightarrow \mathbb{R}^{n_\xi}$ is continuous, $G:\mathbb{R}^{n_\xi}\rightarrow \mathbb{R}^{n_\xi}$ is locally bounded, and $C$ and $D$ are subsets of $\mathbb{R}^{n_\xi}$, a hybrid trajectory (dom $\xi$,$\xi$) with $\xi:$ dom~$\xi \rightarrow \mathbb{R}^{n_\xi}$ is a \textit{solution to} $\mathcal{H}$ for a locally integrable input function $w:\mathbb{R}_{\geq 0} \rightarrow \mathbb{R}^{n_w}$ if
	\begin{enumerate}
		\item for all $j \in \mathbb{N}_{ 0}$ and for almost all $t\in I_j \coloneqq\left\lbrace t|(t,j) \in \text{dom}~\xi \right\rbrace$  
		we have $\xi(t,j) \in C$ and $\dot{\xi}(t,j) = F(\xi(t,j),w(t))$;
		\item for all $(t,j)\in$ dom~$\xi$ such that $(t,j+1) \in $ dom~$\xi$, we have $\xi(t,j)\in D$ and $\xi(t,j+1) = G(\xi(t,j))$.
	\end{enumerate}
\end{defi}
We thus consider hybrid system models of the form
\begin{equation}
	\begin{alignedat}{2}
		\dot{\xi}(t,j) =& F(\xi(t,j),w(t))\qquad\xi(t,j)&&\in C,\\
		\xi(\svar_{j+1},j+1) =& G(\xi(\svar_{j+1},j))\qquad\xi(\svar_{j+1},j) &&\in D.
	\end{alignedat}
\end{equation}

We sometimes omit the time arguments and write 
\begin{equation}
\label{eq_sys_hyb}
\begin{alignedat}{2}
\dot{\xi} =& F(\xi,w)\quad&&\xi \in C,\\
\xi^+ =& G(\xi)\quad&&\xi \in D,
\end{alignedat}
\end{equation}
where we denoted $\xi(\svar_{j+1},j+1)$ as $\xi^+$. For further details on these definitions, see \cite{goebel2006solutions}. 
\section{Setup}
\label{sec_setup}
In this section, we model first the considered dynamic STC setup as a hybrid system. Then, we present stability definitions for the considered setup and a precise problem statement.
\subsection{Hybrid system model for dynamic STC}
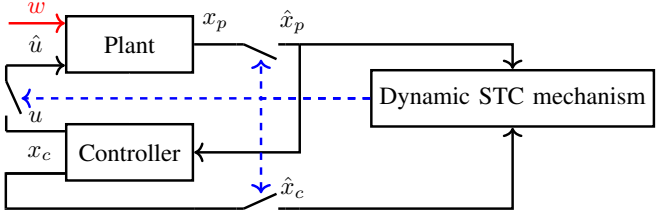
\begin{figure}[tb]
\centering
\resizebox{\columnwidth}{!}{
	\begin{circuitikz}
		[line width=1.0]
		\usetikzlibrary{calc}
		\ctikzset{bipoles/thickness=1}
		\node[draw,minimum width=1.8cm,minimum height=.8cm,anchor=south west] at (0,0) (node1){Plant};
		\node[draw,below= 7mm of node1, minimum width=1.8cm,minimum height=.8cm] (node2) {Controller};
		\draw [color= white,line width = 0pt] (node1) -- node [midway,align = center](helpnode2) {} (node2);
		\node[draw,right= 33mm of helpnode2, minimum width=1.8cm,minimum height=.8cm] (node3) {Dynamic STC mechanism};

		\coordinate[left = 0 mm of node1](Q0a);
		\coordinate[below = 3 mm of Q0a](Q0b);
		\coordinate[left = 0 mm of node2](Q1a);
		\coordinate[above = 3 mm of Q1a](Q1b);
		\coordinate[below = 3 mm of Q1a](Q1c);
		\coordinate[left = 8.4 mm of Q1c](Q1d);
		\coordinate[below = 5 mm of Q1d](Q1e);
		
		\node[left = 7 mm of Q0b] (helpnode0) {};
		\node[left = 7 mm of Q1b] (helpnode1) {};

		\coordinate (Q0) at (helpnode0);
		\coordinate (Q1) at (helpnode1);
		\coordinate (Q2) at (helpnode2);

		\draw [color= white,line width = 0pt] (Q0) -- node [midway,align = center](helpnode3) {} (Q1);

		\coordinate[right = 1 mm of helpnode3](Q3);
		\coordinate[right = 0mm of node1] (Q4); 
		\coordinate[right = 0mm of node2] (Q5a);
		\coordinate[below = 8mm of Q5a] (Q5);
		\coordinate[right = 6mm of node1] (Q6); 
		\coordinate[right = 6mm of Q5] (Q7);
		\coordinate[right = 13mm of Q4] (Q8); 
		\coordinate[right = 13mm of Q5] (Q9);
		\coordinate[above = 0mm of Q5a] (Q10);
		\coordinate[right = 2mm of Q8] (Q11); 
		
		\draw [color= white,line width = 0pt] (Q6) -- node [midway,align = center](helpnode12) {} (Q8);
		\draw [color= white,line width = 0pt] (Q7) -- node [midway,align = center](helpnode13) {} (Q9);
		
		\coordinate[below = 1 mm of helpnode12](Q12);
		\coordinate[above= 1 mm of helpnode13](Q13);
		\coordinate[right = 2mm of Q9] (Q14); 
		
		\coordinate[above = 3mm of Q0a] (Q15); 
		\coordinate[left = 8mm of Q15] (Q16); 
		
		\draw (Q0) to[normal open switch] (Q1);
		\draw (Q6) to[normal open switch,mirror] (Q8);
		\draw (Q7) to[normal open switch] (Q9);
		
 		\draw [-,line width = 1pt] (Q1) -- node [,midway,above = 0.0em,align = center] {${u}$}(Q1b);
 		\draw [->,line width = 1pt] (Q0) -- node [,midway,above = 0.15em,align = center] {$\hat{u}$} (Q0b);
 		\draw [->,line width = 1pt,dashed,color = blue] (node3) -- node [dashed,midway,above = 0.25em] {} (Q3);
 		\draw [-,line width = 1pt] (Q4) -- node [,midway,above = 0.15em,align = center] {${x_p}$}(Q6);
 		\draw [-,line width = 1pt] (Q5) -- node [,midway,above = 0.05em,align = center] {}(Q7);
 		\draw [->,line width = 1pt] (Q11) |- node [,midway,above = 0.05em,align = center] {}(Q10);
		\draw [-,line width = 1pt] (Q8) -- node [,midway,above = 0.05em,align = center] {$\hat{x}_p$}(Q11);
		\draw [->,line width = 1pt] (Q11) -| node [,midway,above = 0.05em,align = center] {}(node3);
		\draw [->,line width = 1pt,dashed,color = blue] (node3) -| node [dashed,midway,above = 0.25em] {} (Q12);
		\draw [->,line width = 1pt,dashed,color = blue] (node3) -| node [dashed,midway,above = 0.25em] {} (Q13);
		\draw [-,line width = 1pt] (Q9) -- node [,midway,above = 0.05em,align = center] {$\hat{x}_c$}(Q14);
		\draw [->,line width = 1pt] (Q14) -| node [,midway,above = 0.05em,align = center] {}(node3);
		\draw [<-,line width = 1pt, color=red] (Q15) -- node [,midway,above = 0.05em,align = center]{$w$}(Q16);
		\draw [-,line width = 1pt] (Q1c) -| node [,midway,above = 0.15em,align = center] {}(Q1e);
		\draw [-,line width = 1pt] (Q1e) |- node [,midway,above = 0.15em,align = center] {}(Q5);
		\draw [-,line width = 1pt] (Q1c) -- node [,midway,above = 0.15em,align = center] {$~x_c$}(Q1d);
	\end{circuitikz}}
\caption{Sketch of the considered setup.}
\label{fig_setup}
\end{figure}
We consider a nonlinear plant that exchanges information with a (possibly) dynamic controller only at discrete sampling
 instants as depicted in Figure~\ref{fig_setup}. The plant is given by
\begin{equation}
\label{eq_plant}
\dot{x}_p = f_p(x_p,\hat{u},w),
\end{equation}
where $x_p(t) \in \mathbb{R}^{n_{x_p}}$ is the plant state with initial condition  $x_p(0) = x_{p,0}$, $\hat{u}(t) \in \mathbb{R}^{n_u}$ is the last input that has been received by the plant and $w(t)\in\mathbb{R}^{n_w}$ is a disturbance input. The controller is given by
\begin{equation}
\label{eq_controller}
\begin{split}
\dot{x}_c = f_c(x_c,\hat{x}_p)\\
u = g_c(x_c,x_p),
\end{split}
\end{equation}
where $x_c(t)\in\mathbb{R}^{n_{x_c}}$ is the state of the controller with some initial condition $x_c(0) = x_{c,0}$ and $\hat{x}_p(t)$ is the last plant state that has been received by the controller.

The sampled values of $\hat{u}$ and $\hat{x}_p$ are updated at discrete sampling instants $\svar_j, j \in \mathbb{N}_0$, generated by a self-triggered sampling mechanism that will be specified later. The updates are based on the current values of $u$ and $x$, i.e., $\hat{u}(\svar_j) = u(\svar_j)$ and $\hat{x}_p(\svar_j) = x_p(\svar_j)$. We will subsequently denote the value of the controller state at the last sampling instant as $\hat{x}_c$, i.e., $\hat{x}_c(\svar_j) = x_c(\svar_j)$. Between sampling instants, $\hat{x}_p, \hat{x}_c$ and $\hat{u}$ are thus constant, which corresponds to a zero-order-hold (ZOH) scenario.
We define the sampling-induced error as $e=\left[e_{x_p}^\top,e_{x_c}^\top\right]^\top \coloneqq\left[(\hat{x}_p-x_p)^\top,(\hat{x}_c-x_c)^\top\right]^\top$ and the combined state as $x \coloneqq \left[x_p^\top,x_c^\top\right]^\top$. Note that $e(t) \in \mathbb{R}^{n_e}$ and $x(t) \in \mathbb{R}^{n_x}$ for $n_x = n_e = n_{x_p} + n_{x_c}$.

The sampling instants are determined by a dynamic sampling mechanism that can be described as  $\svar_{j+1} \coloneqq \svar_j + \Gamma(x(\svar_j),\eta(\svar_j)),$
where $\Gamma:\mathbb{R}^{n_x}\times\mathbb{R}^{n_\eta} \rightarrow \left[\svar_{\min},\infty\right)$ for some $\svar_{\min} > 0$ to be specified later. Here $\eta \in \mathbb{R}^{n_\eta}$ is an internal state of the sampling mechanism. It is updated according to $	\eta(\svar_{j+1}) = S(\eta(\svar_j),x(\svar_j))$ for some $S:\mathbb{R}^{n_\eta}\times\mathbb{R}^{n_x} \rightarrow \mathbb{R}^{n_\eta}$,  which allows to take the past system behavior into account for determining transmission times.  

We define the timer variable $\tau$ with $\tau(t_j) = 0$ and $\dot{\tau}(t) = 1$ for $\svar_j \leq t < \svar_{j+1}$, which keeps track of the elapsed time since the last sampling instant and the auxiliary variable $\auxvar$ which encodes the next sampling interval. Using this, we can model the overall networked control system as a hybrid system $\mathcal{H}_{STC}$ according to \eqref{eq_sys_hyb}
with $\xi \coloneqq \left[x^\top,e^\top,\eta^\top,\tau,\auxvar\right]^\top,$
\begin{equation*}
F(\xi) \coloneqq \left(f(x,e,w)^\top,g(x,e,w)^\top,0,1,0\right)^\top,
\end{equation*}
with
\begin{align*}
f(x,e,w) = \begin{bmatrix}
f_p(x_p,g_c(x_c+e_{x_c},x_p+e_{x_p}),w)\\
f_c(x_c,x_p+e_{x_p}) 
\end{bmatrix}	
\end{align*}
and $g(x,e,w) = -f(x,e,w)$,
\begin{equation*}
G(\xi) \coloneqq \left(x^\top,0,S(\eta,x)^\top,0,\Gamma(x,\eta)\right)^\top,
\end{equation*}
and with 
\begin{equation}
\begin{split}
C := \left\lbrace \xi \in \mathbb{R}^{n_x+n_e+n_\eta+2} | \tau \leq \auxvar \right\rbrace\\
D := \left\lbrace \xi \in \mathbb{R}^{n_x+n_e+n_\eta+2} | \tau = \auxvar \right\rbrace.
\end{split}
\end{equation}

Note that jumps of the hybrid system~\eqref{eq_sys_hyb} correspond exactly to sampling instants of the self-triggered sampling mechanism and thus the sampling sequence $\svar_j, j \in \mathbb{N}_0$ corresponds exactly to the time indices when \eqref{eq_sys_hyb} jumps. Thus, we describe the hybrid time before the sampling at time $\svar_j$  by $\tvar_j \coloneqq  (\svar_j,j)$ and the hybrid time directly after the sampling at time $\svar_j$ by $\tvar_j^+ \coloneqq (\svar_j,j+1)$. Using this notation, the dynamic STC mechanism can be described by Algorithm~\ref{algo_stc}.
	\begin{algorithm}[tb]
		\caption[Caption for LOF]{Dynamic STC mechanism at sampling instant $t_j$}
		\label{algo_stc}
		\begin{algorithmic}[1]
			\STATE Measure $x_p(\tvar_j)$
			\STATE Set $\hat{x}_p(\tvar_j^+) = x_p(\tvar_j)$
			\STATE Set $\hat{u}(\tvar_j^+) = g_c(x_c(\tvar_j),x_p(\tvar_j))$
			\STATE Set $\auxvar(\tvar_j^+) = \Gamma(x(\tvar_j),\eta(\tvar_j))$ 
			\STATE Set $\eta(\tvar_j^+) = S(\eta(\tvar_j),x(\tvar_j))$
			\STATE Wait until $\svar_{j+1} = \svar_j + \auxvar(\tvar_j^+)$
		\end{algorithmic}
	\end{algorithm}

For simplicity, we assume that the self-triggered sampling mechanism is executed at the initial time $t_0 = 0$, i.e., there is a jump at $t = 0$ and we have that $\xi(0,1) = G(\xi(0,0))$. This corresponds to a restriction of the initial conditions of the hybrid system for $\tau(0,0)$ and $\auxvar(0,0)$ to $\auxvar(0,0) =\tau(0,0)$. Note that without this assumption, the first sampling instant might be not well-defined.

\subsection{Stability definitions and problem statement}
In this article, we will, depending on the disturbance signal $w(t)$, consider different stability notions. In the nominal case, i.e., if $w(t) = 0$ for all $t \geq 0$, we will be interested in guaranteeing uniform global asymptotic stability of the origin of $\mathcal{H}_{STC}$ according to the following definition, which is adapted to our setup from \cite{carnevale2007lyapunov}.
\begin{defi}
	\label{def_asym_stab}
	For the hybrid system~$\mathcal{H}_{STC}$ with $w(t) = 0$ for all $t \geq 0$, the set $\left\lbrace \left(x,e,\eta,\tau,\auxvar\right): x = 0, e= 0, \eta = 0 \right\rbrace$ is uniformly globally asymptotically stable (UGAS), if there exists $\beta \in \mathcal{K}\mathcal{L}\mathcal{L}$ such that, for each initial condition $x(0,0) \in \mathbb{R}^{n_x}$, $\eta(0,0) \in \mathbb{R}^{n_\eta}$, $e(0,0) \in \mathbb{R}^{n_e}$, $\tau(0,0) \in \mathbb{R}_{\geq 0}$ and $\auxvar(0,0) = \tau(0,0) $, and each corresponding solution
	\begin{equation}
	\label{eq_stab_bound}
	\abs{\begin{bmatrix}
		x(t,j)\\
		e(t,j)\\
		\eta(t,j)
		\end{bmatrix}} \leq \beta\left(\abs{\begin{bmatrix}
		x(0,0)\\
		e(0,0)\\
		\eta(0,0)
		\end{bmatrix}},t, j\right)
	\end{equation}
	for all $(t,j)$ in the solutions domain. 
\end{defi} 

Since the disturbance signal will in practice be nonzero, stability notions that take into account the effects of the disturbance need to be considered.
An important stability notion for perturbed systems is input-to-state stability, which we define for the hybrid system $\mathcal{H}_{STC}$ as follows.
\begin{defi}
	\label{def_iss}
	The hybrid system $\mathcal{H}_{STC}$ is input-to-state stable (ISS), if there exist $\beta \in \mathcal{K}\mathcal{L}\mathcal{L}$ and $\psi \in \mathcal{K}_\infty$, such that, for each initial condition $x(0,0) \in \mathbb{R}^{n_x}$, $\eta(0,0) \in \mathbb{R}^{n_\eta}$, $e(0,0) \in \mathbb{R}^{n_e}$, $\tau(0,0) \in \mathbb{R}_{\geq 0}$ and $\auxvar(0,0) = \tau(0,0) $, and each corresponding solution 
	 	\begin{equation}
	 	\abs{\begin{bmatrix}
	 		x(t,j)\\
	 		e(t,j)\\
	 		\eta(t,j)
	 		\end{bmatrix}} \leq \beta\left(\abs{\begin{bmatrix}
	 		x(0,0)\\
	 		e(0,0)\\
	 		\eta(0,0)
	 		\end{bmatrix}},t,j\right) + \psi(\norm{w}_\infty)
	 	\end{equation}
	 	for all $(t,j)$ in the solutions domain. 	
\end{defi}
Note that if $\mathcal{H}_{STC}$ is ISS, this implies UGAS of the set $\left\lbrace \left(x,e,\eta,\tau,\auxvar\right): x = 0, e= 0, \eta = 0 \right\rbrace$ if $w(t) = 0$ for all $t \geq 0$.
If a bound on the disturbance is known, i.e., if $\abs{w(t)} \in \mathcal{W}\forall t \geq 0$ for a compact set $\mathcal{W}$, then we can instead consider stability of a sublevel set of a radially unbounded positive definite function $V:\mathbb{R}^{n_x}\rightarrow \mathbb{R}_{\geq 0}$ according to the following definition.
\begin{defi}
	\label{def_set_stab}
	Consider $c_{\max}\geq c_w>0$. For the hybrid system~$\mathcal{H}_{STC}$ with $\abs{w(t)}\in \mathcal{W}$ for all $t \geq 0$, the set $\mathcal{R}\coloneqq\left\lbrace \left(x,e,\eta,\tau,\auxvar\right): V(x) \leq c_w \right\rbrace$ is robustly asymptotically stable (RAS) with region of attraction (ROA) $\mathcal{X}_{c_{\max}} \coloneqq \left\lbrace x|V(x) \leq c_{\max} \right\rbrace \supseteq \mathcal{R}$, if there exists $\beta \in \mathcal{K}\mathcal{L}\mathcal{L}$ such that, for each initial condition $x(0,0) \in \mathcal{X}_{c_{\max}}$, $\eta(0,0) \in \mathbb{R}^{n_\eta}$, $e(0,0) \in \mathbb{R}^{n_e}$, $\tau(0,0) \in \mathbb{R}_{\geq 0}$ and $\auxvar(0,0) = \tau(0,0) $, and each corresponding solution
	\begin{equation}
		\label{eq_stab_bound_set}
		\begin{split}	
			&V(x(t,j)) \\
			\leq& c_w+\beta\left(\max\left\lbrace V(x(0,0)) - c_w, 0 \right\rbrace + \abs{\begin{bmatrix}
					e(0,0)\\
					\eta(0,0)
			\end{bmatrix}},t, j\right)\end{split}
	\end{equation}
	for all $(t,j)$ in the solutions domain. 
\end{defi} 

Hereby, considering the compact region of attraction $\mathcal{R}$ allows us to consider local results, which may be beneficial for nonlinear systems, where global stability results can not always be obtained. 

In this article, our goal is to design functions $\Gamma$ and $S$ for $\mathcal{H}_{STC}$, such that for the resulting dynamic STC mechanisms, the time spans between sampling instants are maximized whilst at the same time the stability properties mentioned above are ensured. To this end,
 the dynamic variable $\eta$ shall be exploited.
\section{Hybrid Lyapunov functions for dynamic STC}
\label{sec_concept}
In this section, we present a general framework for dynamic STC based on hybrid Lyapunov functions. This framework includes different particular dynamic STC mechanisms. We assume in this section and in Section~\ref{sec_spec} that neither the disturbance signal nor a bound on it is known. We present in the first subsection some preliminaries on hybrid Lyapunov functions and then discuss in the second subsection how they can be used to determine sampling instants for dynamic STC in a general framework.
 The framework allows different particular choices for $\Gamma$ and $S$ which lead to different dynamic STC mechanisms with different properties of the closed-loop system. Specific choices for $\Gamma$ and $S$ with guarantees for ISS will be discussed in Section~\ref{sec_spec}.

\subsection{Hybrid Lyapunov functions}
We make the following assumption based on \cite[Condition IV.6]{heemels2010networked}, that can be used to derive a hybrid Lyapunov function.
\begin{asum}
	\label{asum_hybrid_lyap}
	 There exist a locally Lipschitz function $W:\mathbb{R}^{n_e} \rightarrow \mathbb{R}_{\geq0}$, a locally Lipschitz function $V:\mathbb{R}^{n_x} \rightarrow \mathbb{R}_{\geq0}$, a continuous function $H:\mathbb{R}^{n_x}\times\mathbb{R}^{n_e}\times\mathbb{R}^{n_w} \rightarrow \mathbb{R}_{\geq0}$, constants $L, \gamma\in \mathbb{R}_{>0}$, $\epsilon\in \mathbb{R}$, and  $\underline{\alpha}_W$, $\overline{\alpha}_W, \underline{\alpha}_V, \overline{\alpha}_V,\alpha_w \in \mathcal{K}_\infty$  such that for all $e\in\mathbb{R}^{n_e}$,
	\begin{equation}
		\label{eq_w_bound}
		\underline{\alpha}_W(\abs{e}) \leq W(e) \leq \overline{\alpha}_W(\abs{e}),
	\end{equation}
	for all $x \in\mathbb{R}^{n_x}$,
	\begin{equation}
		\label{eq_V_bound_K}
		\underline{\alpha}_V(\abs{x}) \leq V(x) \leq \overline{\alpha}_V(\abs{x}),
	\end{equation}
	and for all  $x \in \mathbb{R}^{n_x}, w \in \mathbb{R}^{n_w} $ and almost all $e\in\mathbb{R}^{n_e},$ 
	\begin{equation}
		\left\langle \frac{\partial W(e)}{\partial e},g(x,e,w)\right\rangle \leq L W(e) + H(x,e,w).  \label{eq_w_est}
	\end{equation}
	Moreover, for all $e \in \mathbb{R}^{n_e}, w \in \mathbb{R}^{n_w} $ and almost all $x \in \mathbb{R}^{n_x}$,  
	\begin{equation}
		\begin{split}
			&\left\langle \nabla V(x),f(x,e,w) \right\rangle\\
			\leq &- \epsilon V(x) -H^2(x,e,w)
			+ \gamma^2 W^2(e) + \alpha_w\left(\abs{w}\right).
		\end{split}
		\label{eq_v_desc_hybrid}
	\end{equation}
\end{asum}
Note that there are some differences between Assumption~\ref{asum_hybrid_lyap} and \cite[Condition IV.6]{heemels2010networked}. In \eqref{eq_v_desc_hybrid}, we add the additional term $- \epsilon V(x)$. For $\epsilon > 0$, this term requires that the system with continuous feedback is exponentially stable. For $\epsilon < 0$, it allows potentially to chose smaller values for $\gamma$, which we will exploit subsequently to maximize the time between triggering instants. Furthermore, we use here a $\mathcal{K}_\infty$ function $\alpha_w\left(\abs{w}\right)$ instead of $\theta^p \abs{w}^p$ for some $\theta > 0$. This relaxation is possible since we investigate input-to-state stability instead of particular $\mathcal{L}_p$-gains, as it was done in \cite{heemels2010networked}. Approaches for verifying Assumption~\ref{asum_hybrid_lyap} can, e.g., be adapted from \cite{hertneck20simple}.

Note also that Assumption~\ref{asum_hybrid_lyap} can hold simultaneously for different choices of $\epsilon, \gamma$ and $L$. If we can find one parameter set for which the assumption holds, then we will typically also be able to find many different parameter sets.

To determine sampling intervals, the proposed STC framework will employ a bound on the evolution of $V(x)$, which is adapted from \cite{hertneck20stability}. This bound is based on the function
\begin{equation}
	T_{\max}(\gamma,\lvar ) \coloneqq \begin{cases}\vspace{1mm}
	\frac{1}{\lvar r} \mathrm{arctan}(r) & \gamma > \lvar \\ \vspace{1mm}
	\frac{1}{\lvar } & \gamma = \lvar \\
	\frac{1}{\lvar r} \mathrm{arctanh}(r) &\gamma < \lvar 
	\end{cases}
\end{equation}
where
\begin{equation}
	\label{eq_def_r}
	r\coloneqq\sqrt{\abs{
	\left(\frac{\gamma}{\lvar }\right)^2-1}},
\end{equation} 
that was originally used in \cite{nesic2009explicit} to determine the maximum allowable transmission interval. The choice of the parameter $\Lambda$ will be discussed subsequently. A visualization of $T_{\max}(\gamma,\Lambda)$ is given in Figure~\ref{fig_surf}. 

\begin{figure}
	\resizebox{.9\columnwidth}{!}{\includegraphics{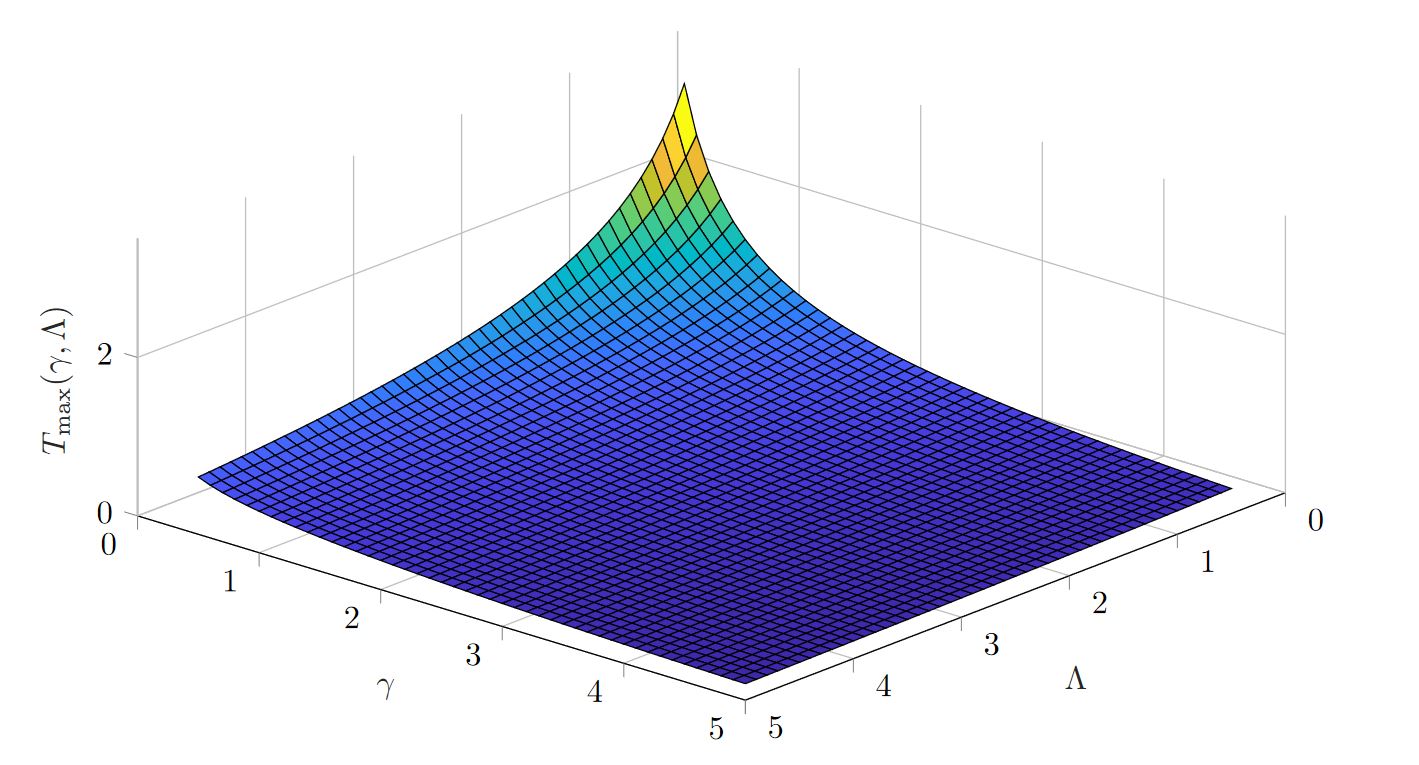}}
	\caption{Surface plot of $T_{\max}(\gamma,\Lambda)$.}
	\label{fig_surf}
\end{figure}

 We adapt from \cite[Proposition 12]{hertneck20stability} the following result.
\begin{prop}
	\label{prop_hybrid}
	Consider the hybrid system $\mathcal{H}_{STC}$ at sampling instant $\tvar_j^+$ for $j \in \mathbb{N}_0$.  Let Assumption~\ref{asum_hybrid_lyap} hold 
	for $\gamma, \epsilon$ and $L$.
	Moreover, let $0 < \auxvar(\tvar_j^+) < T_{\max} (\gamma,\lvar )$  for $\lvar  >0$.	
	Consider 
	\begin{equation}
	\label{eq_def_u}
	U(\xi) \coloneqq V(x)+\gamma \phi(\tau) W^2(e),
	\end{equation}
	where 	 $\phi : [0,\auxvar(\tvar_j^+)] \rightarrow \mathbb{R}$ is the solution to
	\begin{equation}
	\label{eq_def_phi}
	\dot{\phi} = -2\lvar \phi-\gamma(\phi^2+1),~ \phi(0) = \lambda^{-1}
	\end{equation}
	for some sufficiently small $\lambda \in \left(0,1\right)$.
	Then, for all $\svar_j \leq t \leq \svar_j+\auxvar(\tvar_j^+)$, it holds that	
	\begin{equation}
	\begin{split}
	&V(x(\svar,j+1))\\
	 \leq& U(\xi(\svar,j+1))
	\leq	e^{\max \left\lbrace -\epsilon, 2(L - \lvar ) \right\rbrace (t-\svar_j)}V(x(\tvar_j))\\
	&+ \int_{t_j}^{t} e^{\max \left\lbrace -\epsilon, 2(L - \lvar ) \right\rbrace (t-\tau)}
	 \alpha_w\left(\abs{w}\right)  d\tau .
	\end{split}
	\label{eq_prop_hybrid1}
	\end{equation}
\end{prop}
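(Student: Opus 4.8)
My plan is to track the candidate function $U$ from \eqref{eq_def_u} along the single flow interval of $\mathcal{H}_{STC}$ that follows the jump at $\tvar_j^+$, and to close the argument with a comparison (Grönwall) estimate.

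First I would verify that, for $\lambda \in (0,1)$ chosen sufficiently small, the solution $\phi$ of the Riccati-type equation \eqref{eq_def_phi} is well defined, strictly positive (in fact bounded below by $\lambda$) and nonincreasing on all of $[0,\auxvar(\tvar_j^+)]$. This is the classical property underlying the maximum-allowable-transmission-interval bound of \cite{nesic2009explicit}: $T_{\max}(\gamma,\lvar)$ is the limit, as $\lambda \to 0^+$, of the time it takes $\phi$ to decrease from $\lambda^{-1}$ to $\lambda$, read off from the explicit solution of \eqref{eq_def_phi}; since $\auxvar(\tvar_j^+) < T_{\max}(\gamma,\lvar)$ is a \emph{strict} inequality, one can pick $\lambda$ small enough that $\phi(\tau)\geq\lambda$ for all $\tau\in[0,\auxvar(\tvar_j^+)]$. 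This is precisely where the ``sufficiently small $\lambda$'' hypothesis is used.

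Next, on the flow interval with hybrid index $j+1$ the timer satisfies $\tau = \svar-\svar_j$ while $x$ and $e$ are absolutely continuous and solve $\dot x = f(x,e,w)$, $\dot e = g(x,e,w)$; since $V$ and $W$ are only locally Lipschitz, I would work with the time derivative of $\svar\mapsto U(\xi(\svar,j+1))$, which exists for almost every $\svar$, noting that the comparison lemma applies to absolutely continuous functions. Differentiating, using $\tfrac{d}{d\svar}W^2(e)=2W(e)\langle \partial W/\partial e,g(x,e,w)\rangle$ together with $\phi\geq 0$, and inserting \eqref{eq_w_est}, \eqref{eq_v_desc_hybrid} and the right-hand side of \eqref{eq_def_phi}, the cross terms reorganize by completing the square: the contributions in $H^2$, $H\,W(e)$ and $W^2(e)$ collapse to $-(H(x,e,w)-\gamma\phi(\tau)W(e))^2\leq 0$ (the $\pm\gamma^2 W^2$ terms cancel, the $2\gamma\phi L\,W^2$ and $-2\gamma\lvar\phi W^2$ terms combine), leaving only the residual $2(L-\lvar)\gamma\phi(\tau)W^2(e)$. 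Hence, for almost all $\svar\in[\svar_j,\svar_j+\auxvar(\tvar_j^+)]$,
\begin{equation*}
\tfrac{d}{d\svar}U(\xi(\svar,j+1)) \leq -\epsilon V(x(\svar,j+1)) + 2(L-\lvar)\gamma\phi(\tau)W^2(e(\svar,j+1)) + \alpha_w(\abs{w(\svar)}).
\end{equation*}
Because $V(x)\geq 0$ and $\gamma\phi(\tau)W^2(e)\geq 0$, each of the first two terms on the right is dominated by $\mu:=\max\{-\epsilon,\,2(L-\lvar)\}$ times itself, so $\tfrac{d}{d\svar}U \leq \mu\,U + \alpha_w(\abs{w(\svar)})$ almost everywhere.

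Finally I would apply the comparison lemma on $[\svar_j,\svar]$. At the jump, $G$ leaves $x$ unchanged, resets $e$ to $0$ and sets $\tau$ to $0$; since $W(0)=0$ (from \eqref{eq_w_bound}, as $W(0)\leq\overline{\alpha}_W(0)=0$) and $\phi(0)=\lambda^{-1}$, we get $U(\xi(\tvar_j^+))=V(x(\svar_j))$, the required initial value. This yields the upper estimate in \eqref{eq_prop_hybrid1}, while the lower estimate $V(x(\svar,j+1))\leq U(\xi(\svar,j+1))$ is immediate from $\gamma\phi(\tau)W^2(e)\geq 0$. The step I expect to be the most delicate is the joint positivity/monotonicity of $\phi$ on the \emph{closed} interval via the explicit Riccati solution and the definition of $T_{\max}$; the completion of the square is mechanical once the form of \eqref{eq_def_phi} is taken into account, and the extra $-\epsilon V$ term --- the only genuine departure from the classical construction --- is absorbed harmlessly into the $\max\{-\epsilon,\cdot\}$.
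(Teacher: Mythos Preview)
Your proposal is correct and follows essentially the same approach as the paper: establish $\phi(\tau)\in[\lambda,\lambda^{-1}]$ on the whole flow interval via the explicit Riccati analysis behind $T_{\max}$, differentiate $U$ along the flow, complete the square to obtain $\dot U\le \max\{-\epsilon,2(L-\lvar)\}\,U+\alpha_w(|w|)$, and close with the comparison lemma using $U(\xi(\tvar_j^+))=V(x(\tvar_j))$. The paper's only stylistic difference is that it cites \cite{carnevale2007lyapunov,nesic2009explicit} for the $\phi$-bounds rather than re-deriving them.
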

\begin{proof}
	The proof is given in Appendix~\ref{proof_prop_hyb}.
\end{proof}				
Proposition~\ref{prop_hybrid} delivers for parameters $\epsilon,\gamma, L$  and a $\mathcal{K}_\infty$ function $\alpha_w$ that satisfy Assumption~\ref{asum_hybrid_lyap} an upper bound on the evolution of $U(\xi)$ and thus since $U(\xi) \geq V(x)$ also an upper bound on the evolution of $V(x)$. This bound is valid, if the time between two sampling instants is bounded by $T_{\max}(\gamma,\Lambda)$ for some $\Lambda > 0$. The actual bound depends on the parameters from Assumption~\ref{asum_hybrid_lyap} and on $\Lambda$. Particularly, if $\epsilon > 0$ and $\Lambda > L$, then the bound is exponentially decreasing in the nominal case (i.e. for $w(t) = 0$ for $\svar_j \leq t \leq \svar_{j} + \auxvar\left(\tvar_j^+\right)$). In contrast, if $\epsilon < 0$ and $\Lambda < L$, then the bound is increasing. However, the admissible time between sampling instants $T_{\max}(\gamma,\Lambda)$ decreases when $\gamma$ and $\Lambda$ are increased.  Note that smaller values for $\epsilon$ require larger values for $\gamma$ for \eqref{eq_v_desc_hybrid} in Assumption~\ref{asum_hybrid_lyap} to hold. We thus observe in Proposition~\ref{prop_hybrid} a trade-off between the admissible time between sampling instants and the growth of the bound on $V(x)$. Particularly, if the time between sampling instants is small, then we will be able to chose $\epsilon$ and $\Lambda$ large and thus obtain an exponentially decreasing bound on $V(x)$ for the nominal case. In contrast, if the time between two sampling instants is large, then we need to chose $\Lambda$ and $\epsilon$ small to be able to derive a bound on $V(x)$, which has the effect that this bound may be increasing. Next, we discuss how the bound
 on $V(x)$ 
 can be used to determine sampling instants.
\subsection{Using hybrid Lyapunov functions to determine sampling instants}
\label{sec_def_algo}
For periodic time-triggered sampling, to determine sampling instants based on Proposition~\ref{prop_hybrid}, we would have to choose $\epsilon$ and $\Lambda$ such that the bound on $V(x)$ is decreasing in order to obtain stability guarantees. The value of $\epsilon$ would determine the smallest possible values of $\gamma$ and $L$ and as a consequence also the maximum allowable sampling interval, for which we would be able to guarantee stability. 

However the upper bound on $V(x)$ from Proposition~\ref{prop_hybrid} is typically conservative, i.e., $V(x)$ decreases in the nominal case often more strongly than guaranteed, and the effect of the disturbance is often not the worst-case effect, as is considered in Proposition~\ref{prop_hybrid}. It may therefore be reasonable to allow sometimes a certain increase of $V(x)$ as long as an average decrease of $V(x)$ can be guaranteed or a comparable condition, that ensures desired stability properties, is still satisfied. We will now describe how this can be exploited by a dynamic STC mechanism to maximize sampling intervals.

Suppose the condition, that the dynamic STC mechanism shall ensure when choosing at time $\tvar_j$ the next sampling instant $\svar_{j+1}$, has the form that in the nominal case
\begin{equation}
	\label{eq_cond_dec}	V(x(t,j+1)) \leq e^{-\epsilon_\refer(t - \svar_j)} C(x(\tvar_j),\eta(\tvar_j))
\end{equation}
has to hold  for $\svar_j \leq t \leq \svar_{j+1}$, a tunable constant $\epsilon_\refer > 0$ and a function $C:\mathbb{R}^{n_x}\times\mathbb{R}^{n_\eta}\rightarrow\mathbb{R}_{\geq 0}$ that will be specified later. The condition is stated for the nominal case since we assume in this section that the disturbance signal $w(t)$ may be arbitrary and unknown, and thus, the dynamic STC mechanism cannot take it explicitly into account for determining sampling instants. Nevertheless, we will later derive ISS guarantees for the STC mechanism despite the disturbance for different choices of $C(x,\eta)$. 

 The dynamic STC mechanism can now use the bound on $V(x)$ from Proposition~\ref{prop_hybrid} to choose the next sampling instant, i.e., to determine at time $\tvar_j$ a preferably large value for $\auxvar(\tvar_j^+) = \svar_{j+1} - \svar_j$ such that \eqref{eq_cond_dec} holds in the nominal case. For that, suppose the dynamic STC mechanism has access to $n_\tpar$ different parameter sets $\epsilon_i,\gamma_i,L_i$, $i\in\left\lbrace1,\dots,n_\tpar\right\rbrace$, for which Assumption~\ref{asum_hybrid_lyap} holds for the same functions $\alpha_w$ and $V(x)$. We will later require that at least for one of the parameter sets, to which we assign the index 1, $\epsilon_1 > 0$ holds. For all other parameter sets, $\epsilon_i$ may be negative. In the nominal case, \eqref{eq_cond_dec} holds due to Proposition~\ref{prop_hybrid}, if for one of the parameter sets
\begin{equation}
\label{eq_dec_gen}
	\begin{split}
		&e^{\max \left\lbrace -\epsilon_i, 2(L_i - \lvar_i ) \right\rbrace (t-\svar_j)}V(x(\tvar_j))\\
		 \leq& e^{-\epsilon_\refer(t - \svar_j)} C(x(\tvar_j),\eta(\tvar_j))
	\end{split}
\end{equation} 
holds for some $\lvar_i > 0$ and 
\begin{equation}
\label{eq_bound_aux}
	t-\svar_j \leq \svar_{j+1}-\svar_j = \auxvar(s_j^+)< T_{\max}(\gamma_i,\lvar_i).
\end{equation} 
Thus, the dynamic STC mechanism needs to search at hybrid time $s_j$ for a preferably large value for $\auxvar(s_j^+)$ such that \eqref{eq_dec_gen} and \eqref{eq_bound_aux} hold for some $i \in \left\lbrace1,\dots,n_\tpar\right\rbrace$ and $t_j \leq t \leq t_j + \auxvar(s_j^+)$. For an efficient search, we make two simplifications. First, we replace condition \eqref{eq_bound_aux} by 
\begin{equation}
	\label{eq_dec_gen_mod}
	\auxvar(s_j^+)\leq \delta T_{\max}(\gamma_i,\lvar_i)
\end{equation} 
for some\footnote{Typically, $\delta$ will be chosen close to 1 (e.g. 0.999) in order to obtain a preferably large sampling interval.} $\delta \in \left(0,1\right)$. Second, we fix\footnote{Note that choosing $\Lambda_i$ larger would not provide any advantage, whilst smaller choices could in some situations be advantageous. Since the expected advantage is typically minor, we omit it in the STC mechanism to reduce the computational complexity. A line search could be included to exploit different values for $\Lambda_i$.}  $\Lambda_i = \max \left\lbrace L_i+\frac{\epsilon_i}{2}, 1-\delta \right\rbrace$.  Here the (typically small) positive value of $1-\delta$ avoids that $\Lambda_i \leq 0$. Note that Proposition~\ref{prop_hybrid} still applies despite the simplifications. 

The second simplification allows us to rewrite \eqref{eq_dec_gen} as
\begin{equation}
\left(-\epsilon_i +\epsilon_\refer\right)(t-\svar_j)
\leq \log\left(\frac{C(x(\tvar_j),\eta(\tvar_j))}{V(x(\tvar_j))}\right).
\label{eq_ln}
\end{equation}

Now suppose that $C(x(\tvar_j),\eta(\tvar_j)) \geq  V(x(\tvar_j))$. In this case, maximizing  $\auxvar(\tvar_j^+)$ for a given $i \in \left\lbrace1,\dots, n_\tpar\right\rbrace$ such that \eqref{eq_dec_gen_mod} and \eqref{eq_ln} hold is straightforward. In particular, if $-\epsilon_i + \epsilon_\refer > 0$, then we obtain
\begin{align*}
\nonumber
\auxvar(\tvar_j^+)  =& \min\left\lbrace \delta T_{\max}(\gamma_i,\Lambda_i)\vphantom{\frac{\log(\windowC)-\log(V(x(\tvar_j)))}{\max\left\lbrace -\epsilon_i,2(L_i-\lvar_i ) \right\rbrace + \epsilon_\refer}}\right.,\\
&\left.\frac{\log(C(x(\tvar_j),\eta(\tvar_j)))-\log(V(x(\tvar_j)))}{ -\epsilon_i + \epsilon_\refer} \right\rbrace
\end{align*} as the maximum value. Otherwise, i.e., if $ -\epsilon_i + \epsilon_\refer \leq 0$, we obtain the maximum value $\auxvar(\tvar_j^+) = \delta T_{\max}(\gamma_i,\Lambda_i)$. 

The case that  $C(x(\tvar_j),\eta(\tvar_j)) <  V(x(\tvar_j))$ is typically not relevant and therefore omitted\footnote{Note that it is addressed in the preliminary study \cite{hertneck21dynamic}.} by the dynamic STC mechanism for simplicity. In this case, it will use a fall-back strategy detailed subsequently.

An efficient search for a preferably large value of $\auxvar(s_j^+)$ is thus possible for any given parameter set that satisfies Assumption~\ref{asum_hybrid_lyap}. The STC mechanism can now simply iterate over all parameter sets for $i \in \left\lbrace2,\dots,n_\tpar\right\rbrace$ and use the largest value for $\auxvar(\tvar_j^+)$ for which a guarantee can be obtained that \eqref{eq_cond_dec} holds if $\svar_{j+1} - \svar_j = \auxvar(\tvar_j^+)$. It may however happen that a guarantee that \eqref{eq_cond_dec} holds cannot be obtained for any $i \in \left\lbrace2,\dots,n_\tpar\right\rbrace$. In this case, the STC mechanism uses a fall-back strategy that exploits that $\epsilon_1 > 0$. In particular, it chooses then $\auxvar(\tvar_j^+) = \delta T_{\max}\left(\epsilon_1,L_1+\frac{\gamma_1}{2}\right)$, for which it follows in the nominal case from Proposition~\ref{prop_hybrid} for $\svar_{j} \leq t \leq \svar_{j+1}$ that
\begin{equation*}
	V(x(t,j+1)) \leq e^{-\epsilon_1\left(t - \svar_j\right)} V(x(\svar_j)),
\end{equation*}   
which can be employed to obtain stability guarantees.

	\begin{algorithm}[tb]
		\caption{Computation of $\Gamma(x,\eta)$ for the dynamic STC framework for some $\delta \in \left(0,1\right)$ and given $C(x,\eta)$. }
		\label{algo_trig_window}
		\begin{algorithmic}[1]
			\STATE $V \leftarrow V(x)$, $C \leftarrow C(x,\eta)$ %
			\STATE $\bar h \leftarrow \delta T_{\max}\left(\gamma_\itilde,L_\itilde+\frac{\epsilon_\itilde}{2}\right)$ \label{line_fallback}
			\FOR{\text{\bf each} $i \in \left\lbrace2,\dots,n_\tpar\right\rbrace$ } 
			\STATE $\Lambda_i \leftarrow \max \left\lbrace L_i + \frac{\epsilon_i}{2},(1-\delta) \right\rbrace$
			\IF{$C \geq V$} \label{line_for_start} %
			\IF{$-\epsilon_i+\epsilon_\refer > 0$}
			\STATE $\bar h_i \leftarrow \min\left\lbrace \delta T_{\max}(\gamma_i,\Lambda_i),				\frac{\log(C)-\log(V)}{ -\epsilon_i+ \epsilon_\refer} \right\rbrace$ \label{line_hi}
			\ELSE
			\STATE $\bar h_i \leftarrow \delta T_{\max}(\gamma_i,\Lambda_i)$
			\ENDIF
			\ELSE
			\STATE $\bar h_i \leftarrow 0$
			\ENDIF\label{line_for_end}
			\IF{$\bar h_i > \bar h$}
			\STATE $\bar h \leftarrow \bar h_i$\label{line_h_update}
			\ENDIF
			\ENDFOR 			
			\STATE $\Gamma(x,\eta) \leftarrow \bar{h}$
		\end{algorithmic}
	\end{algorithm}
In Algorithm~\ref{algo_trig_window}, the procedure to determine a preferably large value for $\auxvar(s_j^+)$ 
such that \eqref{eq_cond_dec} holds in the nominal case is summarized. It will later serve for the dynamic STC mechanism as an implicit definition of the function $\Gamma(x,\eta)$ for given $C(x,\eta)$. The algorithm first computes the values of $V(x)$ and $C(x,\eta)$. Then it sets $\bar h$ to the minimum value for the fall-back strategy, that is given by $\bar h=T_{\max}\left(\gamma_1,L_1+\frac{\epsilon_1}{2}\right)$. After that, an iteration over all other parameter sets is started. For each parameter set, the algorithm determines a preferably large value for which \eqref{eq_cond_dec} holds if $\auxvar(\tvar_j^+)$ is chosen accordingly. If $\bar h$ is smaller than that value, then $\bar h$ is updated accordingly. Thus, after the iteration, the variable $\bar h$ contains a preferably large value for $\auxvar(\tvar_j^+)$, for which it is guaranteed that $\eqref{eq_cond_dec} $ holds in the nominal case or it is set according to the fall-back strategy.

In the next section, we will present particular choices for $C(x,\eta)$ and for $S(\eta,x)$ that lead, together with the implicit definition of $\Gamma(x,\eta)$ in Algorithm~\ref{algo_trig_window}, to particular dynamic STC mechanisms with different closed-loop properties. 
\section{Specific dynamic STC mechanisms with ISS guarantees}
\label{sec_spec}
In this section, we present 
three different dynamic STC mechanisms that are based on the implicit definition of $\Gamma(x,\eta)$ in Algorithm~\ref{algo_trig_window}. The first two mechanisms will use different (time-varying) linear filters for the values of the Lyapunov function $V(x)$ at past sampling instants to determine the next sampling instant. The third one uses instead a time-dependent but state-independent reference function.
Based on Assumption~\ref{asum_hybrid_lyap}, we derive for all three mechanisms guarantees for ISS, which also implies UGAS of the origin in the nominal case. 
\subsection{Dynamic STC based on an FIR filter  for the Lyapunov function}
\label{subsec_fir}
In this subsection, we present a dynamic STC mechanism that uses a time-varying FIR filter for the values of the Lyapunov function $V(x)$ at past sampling instants to determine the next sampling instant. Note that a preliminary version of this mechanism has been presented in \cite{hertneck21dynamic}. 
The mechanism uses Algorithm~\ref{algo_trig_window} to choose the next sampling instant $\svar_{j+1}$ at time $\tvar_j$ such that for some $m>1$,
\begin{equation}
	V(x(\tvar_{j+1})) \leq e^{-\epsilon_\refer(\svar_{j+1} - \svar_j)} \frac{1}{m} \sum_{k = j-m+1}^{j} e^{\epsilon_\refer \left(t_j - t_k\right)} V(x(s_k))
	\label{eq_fir_goal}
\end{equation}
would hold in the nominal case, i.e., such that the Lyapunov function at the next sampling instant would be bounded by a discounted average of the values of the Lyapunov function at the past $m$ sampling instants. To implement this choice in the setup from Section~\ref{sec_setup} and with Algorithm~\ref{algo_trig_window}, we set $n_\eta = m-1$ as the dimension of the dynamic variable and define the update rule of the dynamic variable as 
\begin{equation}
\label{eq_S_window}
S(\eta,x) = \begin{pmatrix}
e^{-\epsilon_\refer\Gamma(x,\eta)}\eta_2\\
\vdots\\
e^{-\epsilon_\refer\Gamma(x,\eta)}\eta_{m-1}\\
e^{-\epsilon_\refer\Gamma(x,\eta)}V(x)
\end{pmatrix},
\end{equation}
where $\Gamma(x,\eta)$ is defined by Algorithm~\ref{algo_trig_window} for a function $C(x,\eta)$ that is still to be determined. Note that  $t_{j+1}-t_{j} = \auxvar(\tvar_j^+) = \Gamma(x(\tvar_j),\eta(\tvar_j))$. Hence, for this choice of $S(\eta,x)$,
\begin{equation}
	\label{eq_fir_eta}
	\eta_k(s_j) = e^{-\epsilon_\refer (t_j-t_{j-m+k})} V(x(s_{j-m+k}))	
\end{equation}
holds if $k > m-j$, which implies
\begin{equation*}
V(x(\tvar_j))+\sum_{k=1}^{m-1} \eta_k(\tvar_j) = \sum_{k = j-m+1}^{j} e^{\epsilon_\refer \left(t_j - t_k\right)} V(x(s_k))
\end{equation*}
 for $j > m$. 
 Then, \eqref{eq_cond_dec} is equal to \eqref{eq_fir_goal} if we choose
\begin{equation}
	\label{eq_fir_C}
	C(x,\eta) = \frac{1}{m} \left(V(x)+ \sum_{k=1}^{m-1} \eta_k\right).
\end{equation}
 For $k \leq m-j$, the value of $\eta_k$ is determined by the initial condition $\eta(0,0)$ and does not influence the stability properties.
Therefore, we define the function $\Gamma(x,\eta)$ for the dynamic STC mechanism, that is based on a discounted average of the Lyapunov function, implicitly by Algorithm~\ref{algo_trig_window} with $C(x,\eta)$ according to \eqref{eq_fir_C}.
This leads us to the following result.
\begin{theo}
	\label{prop_fir}
	Assume there are $n_\tpar$ different parameter sets $\epsilon_i, \gamma_i, L_i$, $i \in \left\lbrace 1,\dots,n_\tpar\right\rbrace$, for which Assumption~1 holds with the same functions $V$ and $\alpha_w$. Let $\epsilon_1>0$. Further, consider $\mathcal{H}_{STC}$ with $S(\eta,x)$ and $\Gamma(x,\eta)$ defined according to \eqref{eq_S_window} and by Algorithm~\ref{algo_trig_window} with $C(x,\eta)$ according to \eqref{eq_fir_C} and some $\delta \in 
	\left(0,1\right)$. Then $\svar_{j+1}- \svar_j \geq t_{\min} \coloneqq \delta T_{\max}\left(\gamma_1,L_1+\frac{\epsilon_1}{2}\right)\forall j\in\mathbb{N}_0$ and $\mathcal{H}_{STC}$ is ISS.
\end{theo}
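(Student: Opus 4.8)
The plan is to turn the per-sample bound of Proposition~\ref{prop_hybrid}, together with the recursive structure of the filter~\eqref{eq_S_window}, into a max-type discrete-time inequality for the sampled Lyapunov values $v_j\coloneqq V(x(\tvar_j))$, to read off from it an input-to-state estimate, and to lift that to the trajectory bound of Definition~\ref{def_iss}. First I would dispose of the dwell-time claim and fix the uniform constants: Algorithm~\ref{algo_trig_window} initializes $\bar h\coloneqq\delta T_{\max}(\gamma_1,L_1+\tfrac{\epsilon_1}{2})=t_{\min}$ on line~\ref{line_fallback} and only ever increases it (line~\ref{line_h_update}), so $\Gamma(x,\eta)\ge t_{\min}$, which is strictly positive since $\epsilon_1,L_1>0$; hence $\svar_{j+1}-\svar_j\ge t_{\min}$. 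Since every $T_{\max}(\gamma_i,\Lambda_i)$ is finite and there are only $n_\tpar$ parameter sets, there is also a uniform upper bound $\svar_{j+1}-\svar_j\le\bar T<\infty$, so along every maximal solution $j\,t_{\min}\le\svar_j\le j\,\bar T$; together with the trajectory bounds below this rules out finite escape times and Zeno behaviour, so maximal solutions are complete.

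The core step is the per-sample estimate. Because $\Lambda_i=\max\{L_i+\tfrac{\epsilon_i}{2},1-\delta\}\ge L_i+\tfrac{\epsilon_i}{2}$, one has $\max\{-\epsilon_i,2(L_i-\Lambda_i)\}=-\epsilon_i$ for every $i$, so~\eqref{eq_prop_hybrid1} always reads $V(x(t,j+1))\le e^{-\epsilon_i(t-\svar_j)}v_j+\int_{\svar_j}^{t}e^{-\epsilon_i(t-\tau)}\alpha_w(\abs{w})\,d\tau$ for the parameter set $i$ used at $\tvar_j$, and since $t-\svar_j\le\bar T$ with finitely many $\epsilon_i$, the integral is at most $\tilde\psi(\norm{w}_\infty)$ for a fixed $\tilde\psi\in\mathcal K_\infty$. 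I would then split at each jump $\tvar_j$. If $\auxvar(\tvar_j^+)=t_{\min}$, the fall-back set $i=1$ applies (Proposition~\ref{prop_hybrid} holds since $t_{\min}<T_{\max}(\gamma_1,L_1+\tfrac{\epsilon_1}{2})$), and using $\epsilon_1>0$ gives $V(x(t,j+1))\le e^{-\epsilon_1(t-\svar_j)}v_j+\tilde\psi(\norm{w}_\infty)$ on $\svar_j\le t\le\svar_{j+1}$, hence $v_{j+1}\le\rho\,v_j+\tilde\psi(\norm{w}_\infty)$ with $\rho\coloneqq e^{-\epsilon_1 t_{\min}}<1$. If $\auxvar(\tvar_j^+)>t_{\min}$, then by construction of Algorithm~\ref{algo_trig_window} there is $i^\ast\in\{2,\dots,n_\tpar\}$ with $C(x(\tvar_j),\eta(\tvar_j))\ge v_j$ for which~\eqref{eq_dec_gen} holds; combining~\eqref{eq_dec_gen} with~\eqref{eq_prop_hybrid1} yields $V(x(t,j+1))\le e^{-\epsilon_\refer(t-\svar_j)}C(x(\tvar_j),\eta(\tvar_j))+\tilde\psi(\norm{w}_\infty)$ on $\svar_j\le t\le\svar_{j+1}$, and since by~\eqref{eq_fir_eta} the discount factors in $C$ are at most one, so that $C(x(\tvar_j),\eta(\tvar_j))\le\tfrac1m\sum_{k=j-m+1}^{j}v_k\le\max\{v_{j-m+1},\dots,v_j\}$ once $j\ge m$, while $e^{-\epsilon_\refer\auxvar(\tvar_j^+)}\le e^{-\epsilon_\refer t_{\min}}<1$, one gets $v_{j+1}\le\mu\,\max\{v_{j-m+1},\dots,v_j\}+\tilde\psi(\norm{w}_\infty)$ with $\mu\coloneqq e^{-\epsilon_\refer t_{\min}}<1$. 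Combining the two cases,
\begin{equation*}
v_{j+1}\le\lambda\,\max\{v_{j-m+1},\dots,v_j\}+\tilde\psi(\norm{w}_\infty),\qquad\lambda\coloneqq\max\{\rho,\mu\}<1,
\end{equation*}
for all $j\ge m$; for the finitely many transient steps $j<m$ the same two bounds, with $C(x(\tvar_j),\eta(\tvar_j))$ then also involving the discounted, hence $\abs{\eta(0,0)}$-bounded, initial filter entries, show $\max\{v_0,\dots,v_{m-1}\}$ is bounded by a $\mathcal K_\infty$ function of $\max\{\overline\alpha_V(\abs{x(0,0)}),\abs{\eta(0,0)}\}$ plus a $\mathcal K_\infty$ function of $\norm{w}_\infty$.

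Finally I would pass from the recursion to the estimate of Definition~\ref{def_iss}. The inequality $v_{j+1}\le\lambda\max\{v_{j-m+1},\dots,v_j\}+b$ with $\lambda<1$ is a standard max-type discrete ISS inequality: the $m$-step maxima $N_j\coloneqq\max\{v_{j-m+1},\dots,v_j\}$ satisfy $N_{j+m}\le\lambda N_j+b$ as long as $N_j\ge\tfrac{b}{1-\lambda}$, and once $N_j$ has entered the ball of radius $\tfrac{b}{1-\lambda}$ it stays there, so $v_j\le\beta_0(\max\{v_0,\dots,v_{m-1}\},j)+\tfrac{b}{1-\lambda}$ for some $\beta_0\in\mathcal K\mathcal L$. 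On $\svar_j\le t\le\svar_{j+1}$ the per-sample estimate bounds $V(x(t,j+1))$ by $N_j+\tilde\psi(\norm{w}_\infty)$, so $\abs{x(t,j+1)}\le\underline\alpha_V^{-1}(\cdot)$ by~\eqref{eq_V_bound_K}; $\abs{e(t,j+1)}$ is controlled through $U(\xi)=V(x)+\gamma\phi(\tau)W^2(e)$ from~\eqref{eq_def_u}, using that $\tau\le\auxvar\le\delta T_{\max}(\gamma_i,\Lambda_i)$ keeps $\phi(\tau)$ above a uniform positive constant $\phi_{\min}$ over the $n_\tpar$ parameter sets (recall $\delta<1$), whence $W^2(e)\le U(\xi)/((\min_i\gamma_i)\phi_{\min})$ and~\eqref{eq_w_bound} applies (for $(t,j)=(0,0)$ the bound holds trivially by choosing $\beta$ large enough at the origin, the error being reset at $t=0$ anyway); and $\abs{\eta(t,j)}$ is bounded via~\eqref{eq_fir_eta} by the recent $v_k$'s, plus $\abs{\eta(0,0)}$ during the first $m$ samples. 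The dwell-time bounds $j\,t_{\min}\le\svar_j\le j\,\bar T$ then let the $\mathcal K\mathcal L$-in-$j$ decay of $\beta_0$ be replaced by a $\mathcal K\mathcal L\mathcal L$-in-$(t,j)$ function $\beta$, and collecting all $\norm{w}_\infty$-dependent terms into a single $\psi\in\mathcal K_\infty$ gives ISS.

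I expect the conceptual part to be painless: once the per-sample inequality is available, the fact that bounding $V$ at the next sample by a discounted average of recent samples while sampling strictly before $T_{\max}$ produces a genuine contraction is exactly the mechanism behind dynamic triggering. The real effort will be bookkeeping — fusing the two cases ($\auxvar=t_{\min}$ versus $\auxvar>t_{\min}$) into the single recursion, propagating the dependence on the initial filter state $\eta(0,0)$ through the first $m$ samples so that it enters the ISS bound correctly, and verifying that the lower bound $\phi_{\min}>0$ used to recover $\abs{e}$ is uniform over all $n_\tpar$ parameter sets.
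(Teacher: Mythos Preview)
Your proposal is correct and gives a valid route to ISS, but it differs from the paper's proof in the core argument. The paper does not set up a max-type discrete recursion on the sampled values $v_j$. Instead it shows directly by induction that the continuous-time bound
\[
V(x(t,j))\le e^{-\bar\epsilon t}\max\{V(x(0,0)),\abs{\eta(0,0)}\}+k_{\max}\int_0^t e^{-\bar\epsilon(t-\tau)}\alpha_w(\abs{w(\tau)})\,d\tau,
\qquad \bar\epsilon=\min\{\epsilon_1,\epsilon_\refer\},
\]
holds for all $(t,j)\in\text{dom}\,\xi$, where $k_{\max}$ is a uniform constant absorbing the factors $e^{(-\epsilon_i+\epsilon_\refer)t_{\max}}$. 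The induction step uses the same case split as yours (fall-back versus a non-fall-back index $i_{\barj}$ for which \eqref{eq_dec_gen} holds), but plugs the induction hypothesis into the FIR relation~\eqref{eq_fir_eta} to bound $C(x(\tvar_{\barj}),\eta(\tvar_{\barj}))$ by the same exponential-plus-integral expression, and then combines with Proposition~\ref{prop_hybrid} (applied to $U_{i_{\barj}}$, not just $V$) to push the bound to $\tvar_{\barj+1}$. From this bound the paper recovers $\abs{(x,e)}$ via a single $\mathcal K_\infty$ function $\phi_1$ below all the $U_{i_j}$, and $\abs{\eta}$ via~\eqref{eq_fir_eta}--\eqref{eq_eta_fir_res2}, exactly as you do. What your approach buys is modularity: the per-sample contraction and the gluing step are cleanly separated, and the discrete small-gain argument is standard. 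What the paper's approach buys is an explicit decay rate $\bar\epsilon$ and a convolution-integral disturbance term rather than a flat $\norm{w}_\infty$ bound, which makes the constants more transparent and would generalize more readily to $\mathcal L_p$-type gains. One point to watch in your write-up: your intermediate bound $V(x(t,j+1))\le N_j+\tilde\psi(\norm{w}_\infty)$ is actually needed for $U_{i_j}(\xi(t,j+1))$ (not only $V$) in order to extract $\abs{e}$ at the end; Proposition~\ref{prop_hybrid} gives this for free, but you should state it.
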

\begin{proof}
	The proof is given in Appendix~\ref{proof_prop_fir}.
\end{proof}
\begin{rema}
	Theorem~\ref{prop_fir} directly implies UGAS of the set $\left\lbrace \left(x,e,\eta,\tau,\auxvar\right): x = 0, e= 0, \eta = 0 \right\rbrace$ in the nominal case, as this is a direct consequence of the definition of ISS.	
\end{rema}
\begin{rema}
	The combination of \eqref{eq_S_window} and \eqref{eq_fir_C} corresponds to a (time-varying) FIR filter for the Lyapunov function $V(x)$ evaluated at sampling instants. The $e^{-\epsilon_\refer \Gamma(x,\eta)}$ terms are included here to determine the convergence speed of the dynamic STC mechanism in the nominal case. Instead, a constant factor could be used as it was done in the preliminary study \cite{hertneck21dynamic}. However, then the convergence behavior of the filter would be influenced by the time between sampling instants.
\end{rema}

\subsection{Dynamic STC based on an IIR filter for the Lyapunov function}
\label{subsec_iir}
In this subsection, we present a second approach to choose the dynamics of the dynamic variable. While the approach from the previous subsection was based on an FIR filter, we consider in this subsection an approach that is based on a (time-varying) IIR filter. 
To implement the IIR filter, we set $n_\eta = 1$ and choose 
\begin{equation}
\label{eq_S_iir}
	S(\eta,x) = e^{-\epsilon_\refer \Gamma(x,\eta)}\left(r_1 \eta + r_2 V(x)\right)
\end{equation}
where  $r_1 \in \mathbb{R}_{>0}$ and $r_2 \in \mathbb{R}_{>0}$ are some constants satisfying $r_1+r_2 \leq 1$ and $\Gamma(x,\eta)$ is again defined by Algorithm~\ref{algo_trig_window} for a function $C(x,\eta)$ that will be determined next. The trigger decision is made for this STC mechanism such that 
\begin{equation}
	V(x(\tvar_{j+1})) \leq e^{-\epsilon_\refer\left(\svar_{j+1}-\svar_j\right)} \eta(\tvar_j)
\end{equation}
holds in the nominal case, i.e., such that the value of the Lyapunov function at the next sampling instant is bounded by the state of the filter. This can be achieved by using Algorithm~\ref{algo_trig_window} with 
\begin{equation}
	\label{eq_iir_C}
	C(x,\eta) = \eta.
\end{equation}
Thus, we define the function $\Gamma(x,\eta)$ for the dynamic STC mechanism, that is based on a  time varying IIR filter, implicitly by Algorithm~\ref{algo_trig_window} with $C(x,\eta)$ according to \eqref{eq_iir_C}. We obtain the following result. 
\begin{theo}
	\label{theo_iir}
	Assume there are $n_\tpar$ different parameter sets $\epsilon_i, \gamma_i, L_i$, $i \in \left\lbrace 1,\dots,n_\tpar\right\rbrace$, for which Assumption~1 holds with the same functions $V$ and $\alpha_w$. Let $\epsilon_1,r_1,r_2>0$ and $r_1+r_2 \leq 1$. Further, consider $\mathcal{H}_{STC}$ with $S(\eta,x)$ and $\Gamma(x,\eta)$ defined according to \eqref{eq_S_iir} and by Algorithm~\ref{algo_trig_window} with $C(x,\eta)$ according to \eqref{eq_iir_C} and some $\delta \in 
	\left(0,1\right)$. Then $\svar_{j+1}- \svar_j \geq t_{\min} \coloneqq \delta T_{\max}\left(\gamma_1,L_1+\frac{\epsilon_1}{2}\right)$ and $\mathcal{H}_{STC}$ is ISS.
\end{theo}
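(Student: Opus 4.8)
The plan is to mirror the proof strategy of Theorem~\ref{prop_fir}, adapted to the IIR filter. The argument decomposes into three parts: (i) a lower bound on the inter-sampling times, (ii) establishing that the nominal decrease condition \eqref{eq_cond_dec} with $C(x,\eta)=\eta$ holds along solutions, and (iii) constructing an ISS-Lyapunov-type argument for the hybrid system using the combined quantity built from $V(x)$, $W(e)$ (via the function $U(\xi)$ from Proposition~\ref{prop_hybrid}), and the filter state $\eta$. First I would dispense with the dwell-time bound: since $\epsilon_1>0$ is available, the fall-back branch in Algorithm~\ref{algo_trig_window} always yields $\bar h \geq \delta T_{\max}(\gamma_1, L_1+\tfrac{\epsilon_1}{2})$, so $\svar_{j+1}-\svar_j \geq t_{\min}$ unconditionally, exactly as in Theorem~\ref{prop_fir}. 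This also guarantees completeness of solutions and rules out Zeno behavior.

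Next I would verify the trigger guarantee. By construction, Algorithm~\ref{algo_trig_window} with $C(x,\eta)=\eta$ returns $\auxvar(\tvar_j^+)$ such that \eqref{eq_cond_dec} holds in the nominal case, provided $\eta(\tvar_j) \geq V(x(\tvar_j))$; I would show this inequality is maintained inductively. At $\tvar_0^+$ the filter is initialized from $\eta(0,0)$, and I would argue (as with the FIR case, where early $\eta_k$ are determined by initial conditions and do not affect stability) that after the transient the relation $V(x(\tvar_j)) \leq \eta(\tvar_j)$ holds: indeed, if it holds at $\tvar_j$, then \eqref{eq_cond_dec} gives $V(x(\tvar_{j+1})) \leq e^{-\epsilon_\refer \auxvar(\tvar_j^+)} \eta(\tvar_j) \leq e^{-\epsilon_\refer \auxvar(\tvar_j^+)}(r_1\eta(\tvar_j)+r_2 V(x(\tvar_j)))\cdot\frac{\eta(\tvar_j)}{r_1\eta(\tvar_j)+r_2 V(x(\tvar_j))}$; since $r_1+r_2\leq 1$ and $V(x(\tvar_j))\leq\eta(\tvar_j)$, one has $r_1\eta(\tvar_j)+r_2V(x(\tvar_j)) \leq \eta(\tvar_j)$, hence $\eta(\tvar_{j+1}) = e^{-\epsilon_\refer\auxvar(\tvar_j^+)}(r_1\eta(\tvar_j)+r_2V(x(\tvar_j))) \geq V(x(\tvar_{j+1}))$, closing the induction. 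If the fall-back branch is taken, Proposition~\ref{prop_hybrid} directly gives $V(x(t,j+1)) \leq e^{-\epsilon_1(t-\svar_j)}V(x(\svar_j)) \leq e^{-\epsilon_1(t-\svar_j)}\eta(\tvar_j)$, which is even stronger than \eqref{eq_cond_dec}, so the induction step goes through in that case too.

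For the ISS conclusion I would track the scalar quantity $M_j \coloneqq \max\{U(\xi(\tvar_j^+)), \eta(\tvar_j^+)\}$ (or a sum-type surrogate), show it satisfies a discrete decrease estimate $M_{j+1} \leq \rho M_j + \sigma(\norm{w}_\infty)$ for some $\rho\in(0,1)$ and $\sigma\in\mathcal{K}_\infty$ — the geometric factor coming from the $e^{-\epsilon_\refer\auxvar}$ and the $r_1+r_2\leq 1$ contraction together with $\auxvar \geq t_{\min}>0$, and the disturbance term from the integral in \eqref{eq_prop_hybrid1} bounded using $\auxvar < T_{\max}$ — then interpolate over continuous flow using Proposition~\ref{prop_hybrid} and the bounds $\underline\alpha_V, \overline\alpha_V, \underline\alpha_W, \overline\alpha_W$ from Assumption~\ref{asum_hybrid_lyap} to convert the sampled-time estimate into the $\mathcal{K}\mathcal{L}\mathcal{L}$-plus-$\mathcal{K}_\infty$ bound of Definition~\ref{def_iss}, recovering $\abs{x}, \abs{e}$ from $V, W$ and $\abs{\eta}$ from $\eta$ (here $\eta$ is scalar and nonnegative, so $\abs{\eta}=\eta$, which simplifies matters relative to the FIR case). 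The main obstacle I anticipate is the same one as in Theorem~\ref{prop_fir}: correctly handling the interleaving of the generic branch and the fall-back branch of Algorithm~\ref{algo_trig_window} within a single uniform estimate — in particular ensuring the filter state $\eta$ does not grow when fall-back is repeatedly invoked, and that the claimed $\rho<1$ holds in all branch combinations — together with bookkeeping the transient phase where the inductive inequality $V(x(\tvar_j))\leq\eta(\tvar_j)$ has not yet been established and $\eta$ is governed by its initial value. Since this is essentially a special case ($n_\eta=1$) of the mechanism design already analyzed, I expect the proof to reduce largely to invoking the argument of Theorem~\ref{prop_fir} with the FIR-specific filter identity \eqref{eq_fir_eta} replaced by the IIR contraction estimate above.
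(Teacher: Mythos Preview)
Your inductive claim in part~(ii), that $V(x(\tvar_j)) \leq \eta(\tvar_j)$ propagates once established, is false. Suppose it holds at $\tvar_j$; the trigger guarantee gives (nominally) $V(x(\tvar_{j+1})) \leq e^{-\epsilon_\refer\auxvar}\eta(\tvar_j)$, while $\eta(\tvar_{j+1}) = e^{-\epsilon_\refer\auxvar}\bigl(r_1\eta(\tvar_j)+r_2V(x(\tvar_j))\bigr)$. To close the induction you would need $r_1\eta(\tvar_j)+r_2V(x(\tvar_j)) \geq \eta(\tvar_j)$, but $r_1+r_2\leq 1$ and $V(x(\tvar_j))\leq\eta(\tvar_j)$ give exactly the \emph{opposite} inequality. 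Concretely, with $r_1=r_2=\tfrac12$, $\eta(\tvar_j)=10$, $V(x(\tvar_j))=1$, the trigger only guarantees $V(x(\tvar_{j+1}))\leq 10\,e^{-\epsilon_\refer\auxvar}$ while $\eta(\tvar_{j+1})=5.5\,e^{-\epsilon_\refer\auxvar}$, so the invariant can break immediately. Your tautological rewriting $\eta = (r_1\eta+r_2V)\cdot\tfrac{\eta}{r_1\eta+r_2V}$ obscures but does not repair this. Consequently the fall-back branch may be invoked at arbitrary later times (precisely when $C=\eta<V$), and your discrete contraction argument in part~(iii), which implicitly relies on the generic branch being active, does not go through.

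The paper's proof avoids this altogether: it never asserts $V\leq\eta$. Instead it carries by induction the \emph{pair} of bounds
\[
V(x(t,j)) \leq R(t), \qquad \eta(t,j) \leq R(t),
\]
where $R(t)\coloneqq e^{-\bar\epsilon t}\max\{V(x(0,0)),\abs{\eta(0,0)}\} + \maxvar\int_0^t e^{-\bar\epsilon(t-\tau)}\alpha_w(\abs{w(\tau)})\,d\tau$ and $\bar\epsilon=\min\{\epsilon_1,\epsilon_\refer\}$. In the generic branch the first bound follows from $e^{-\epsilon_{i_j}(t-\svar_j)}V(x(\tvar_j))\leq e^{-\epsilon_\refer(t-\svar_j)}\eta(\tvar_j)\leq e^{-\epsilon_\refer(t-\svar_j)}R(\svar_j)$; in the fall-back branch it follows from $e^{-\epsilon_1(t-\svar_j)}V(x(\tvar_j))\leq e^{-\epsilon_1(t-\svar_j)}R(\svar_j)$ using the induction hypothesis on $V$ \emph{directly}, not via $\eta$. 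The bound on $\eta$ propagates because $r_1+r_2\leq 1$ makes $r_1\eta(\tvar_j)+r_2V(x(\tvar_j))\leq R(\svar_j)$ whenever both summands are $\leq R(\svar_j)$. This simultaneous bound is the missing idea; once you have it, the ISS conclusion follows as in Theorem~\ref{prop_fir}.
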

\begin{proof}
	The proof is given in Appendix~\ref{proof_theo_iir}
\end{proof}
\begin{rema}
	The update of $\eta$ according to \eqref{eq_S_iir} can be interpreted as a time-varying IIR filter for the values of the Lyapunov function at past sampling instants. Similar as in the previous subsection, the additional term $e^{-\epsilon_\refer \Gamma(x,\eta)}$ is  included to determine the convergence speed of the closed-loop system with the dynamic STC mechanism in the nominal case. This term could also be replaced by a constant term. However, then the convergence behavior of the filter would depend on the time between sampling instants, which may be undesired. 
\end{rema}
\begin{rema}
	The parameters $r_1$ and $r_2$ can be used to tune the behavior of the IIR filter. The condition $r_1+r_2 \leq 1$ ensures that the interconnection of system and filter is stable.
\end{rema}
\subsection{Dynamic STC based on a time-dependent reference function}
\label{subsec_ref}
In the two previous subsections, we have presented two dynamic STC mechanisms that are based on a linear filter for the Lyapunov function $V(x)$ at past sampling instants and that are thus based on past system states. In this subsection, we will present a different approach, that instead bounds the Lyapunov function $V(x)$ at the next sampling instant by a reference function that depends only on time and on the initial state of the NCS. It is thus independent of the actual state evolution of the system. Such a reference function-based approach may, e.g., be advantageous for setpoint changes. In particular, the goal of the dynamic STC mechanism that we present in this subsection is to ensure  for a function $ V_\refer:\mathbb{R}_{\geq 0}\times\mathbb{R}^{n_x}\rightarrow\mathbb{R}_{\geq 0}$, that
\begin{equation}
	\label{eq_ref_goal}
	V(x(t,j)) \leq V_\refer(t,x(0,0))
\end{equation}
holds in the nominal case for all $(t,j) \in \text{dom}~\xi$. We assume for simplicity
that $\eta(0,0) = V(x(0,0))$, which is not restrictive if the initial value of the dynamic variable can be set by the user, and focus on the specific reference function choice $V_\refer(t,x(0,0))\coloneqq e^{-\epsilon_\refer t} V(x(0,0))$ for $\epsilon_\refer > 0$. This choice can be implemented by using the dynamic variable with $n_\eta=1$ and
\begin{equation}
	\label{eq_S_ref}
	S(\eta,x) = e^{-\epsilon_\refer\Gamma(x,\eta)} \eta,
\end{equation}
where $\Gamma(x,\eta)$ is again defined by Algorithm~\ref{algo_trig_window} for a function $C(x,\eta)$ that will be determined next. Recall that we want to choose sampling instants such that \eqref{eq_ref_goal} holds.   
This can be achieved by using Algorithm~\ref{algo_trig_window} with
\begin{equation}
	\label{eq_ref_C}
	C(x,\eta) = \eta.
\end{equation}
Hence, the function $\Gamma(x,\eta)$ is defined for the dynamic STC mechanism, that is based on the reference function $V_\refer$, by Algorithm~\ref{algo_trig_window} with $C(x,\eta)$ according to \eqref{eq_ref_C}. This leads us to the following result. 
\begin{theo}
	\label{theo_ref}
	Assume there are $n_\tpar$ different parameter sets $\epsilon_i, \gamma_i, L_i$, $i \in \left\lbrace 1,\dots,n_\tpar\right\rbrace$, for which Assumption~1 holds with the same functions $V$ and $\alpha_w$. Let $\epsilon_1>0$ and $\eta(0,0) = V(x(0,0))$. Further, consider $\mathcal{H}_{STC}$ with $S(\eta,x)$ and $\Gamma(x,\eta)$ defined according to \eqref{eq_S_ref} and by Algorithm~\ref{algo_trig_window} with $C(x,\eta)$ according to \eqref{eq_ref_C} and some $\delta \in 
	\left(0,1\right)$. Then $\svar_{j+1}- \svar_j \geq t_{\min} \coloneqq \delta T_{\max}\left(\gamma_1,L_1+\frac{\epsilon_1}{2}\right)$ and $\mathcal{H}_{STC}$ is ISS.
\end{theo}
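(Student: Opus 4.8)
The plan is to follow the same three ingredients as in the proofs of Theorems~\ref{prop_fir} and \ref{theo_iir}: (i) a uniform positive lower bound on the inter-sampling times, (ii) a nominal comparison bound showing that $V(x)$ is kept below the (autonomously decaying) reference, and (iii) an ISS estimate obtained by chaining Proposition~\ref{prop_hybrid} across all flow intervals and adding the reset behaviour of $e$ together with the decay of $\eta$. Throughout I would use that there are only finitely many parameter sets, so all constants obtained by maximising over $i$ are finite.

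First I would establish the dwell time. Algorithm~\ref{algo_trig_window} initialises $\bar h$ to the fall-back value $\delta T_{\max}(\gamma_1,L_1+\tfrac{\epsilon_1}{2})=t_{\min}$ and afterwards only ever increases it, so $\Gamma(x,\eta)\geq t_{\min}>0$ for all arguments; hence $\svar_{j+1}-\svar_j\geq t_{\min}$ for all $j\in\mathbb{N}_0$, which in particular precludes accumulation of jump times and gives $\svar_j\geq j\,t_{\min}$. Since in \eqref{eq_S_ref} the update $S(\eta,x)=e^{-\epsilon_\refer\Gamma(x,\eta)}\eta$ does not depend on $V(x)$, a direct induction yields $\eta(\tvar_j)=e^{-\epsilon_\refer\svar_j}\eta(0,0)=e^{-\epsilon_\refer\svar_j}V(x(0,0))$ for all $j$, irrespective of the disturbance, so $\eta$ decays exponentially to zero. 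For the nominal case I would then show by induction on $j$ that $V(x(t,j))\leq e^{-\min\{\epsilon_1,\epsilon_\refer\}\,t}V(x(0,0))$: at step $j$ the mechanism either, via Algorithm~\ref{algo_trig_window} and Proposition~\ref{prop_hybrid}, enforces \eqref{eq_cond_dec} with $C(x,\eta)=\eta$ for some parameter set, i.e.\ $V(x(t,j+1))\leq e^{-\epsilon_\refer(t-\svar_j)}\eta(\tvar_j)=e^{-\epsilon_\refer t}V(x(0,0))$ for $\svar_j\leq t\leq\svar_{j+1}$, or it uses the fall-back, in which case Proposition~\ref{prop_hybrid} with parameter set $1$ (note $\Lambda_1=L_1+\tfrac{\epsilon_1}{2}$, so that $\max\{-\epsilon_1,2(L_1-\Lambda_1)\}=-\epsilon_1$) gives $V(x(t,j+1))\leq e^{-\epsilon_1(t-\svar_j)}V(x(\tvar_j))$ and the induction hypothesis propagates the bound. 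In particular \eqref{eq_ref_goal} holds whenever $\epsilon_\refer\leq\epsilon_1$, and in all cases $V(x)$, hence $x$ via \eqref{eq_V_bound_K}, converges exponentially.

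For the ISS estimate I would re-run the interval-by-interval argument keeping the disturbance term of Proposition~\ref{prop_hybrid}. Over $[\svar_j,\svar_{j+1}]$ the bound \eqref{eq_prop_hybrid1} (with the exponent again collapsing to $-\epsilon_i$ thanks to $\Lambda_i=\max\{L_i+\tfrac{\epsilon_i}{2},1-\delta\}$, which I would verify) splits into a homogeneous part — bounded by $e^{-\epsilon_\refer(t-\svar_j)}\eta(\tvar_j)$ in the branch where \eqref{eq_dec_gen} is used and by $e^{-\epsilon_1(t-\svar_j)}V(x(\tvar_j))$ in the fall-back branch — and an inhomogeneous part $\int_{\svar_j}^{t}e^{-\epsilon_i(t-\mu)}\alpha_w(\abs{w})\,d\mu$, which is at most $c_1\,\alpha_w(\norm{w}_\infty)$ for a constant $c_1$ because $t-\svar_j\leq\max_i\delta T_{\max}(\gamma_i,\Lambda_i)$. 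Iterating over jumps and summing the resulting geometric series, which converges because $\epsilon_1>0$ and $\svar_j\geq j\,t_{\min}$, yields $V(x(t,j))\leq\beta_V(V(x(0,0)),t,j)+\psi_V(\norm{w}_\infty)$ for some $\beta_V\in\mathcal{KLL}$, $\psi_V\in\mathcal{K}_\infty$. Finally, since $e$ is reset to $0$ at every jump and, by the construction of $U$ in \eqref{eq_def_u}, $\gamma\phi(\tau)W^2(e(t,j+1))$ is bounded above by the right-hand side of \eqref{eq_prop_hybrid1} while $\phi(\tau)$ stays above a positive constant on $[0,\max_i\delta T_{\max}(\gamma_i,\Lambda_i)]$ (as $\delta<1$), the inequalities \eqref{eq_w_bound}, \eqref{eq_V_bound_K} turn this into a $\mathcal{KLL}$-plus-$\mathcal{K}_\infty$ bound on $\abs{e}$; combining it with the bounds on $V(x)$ and on $\eta$, and using $\eta(0,0)=V(x(0,0))\leq\overline{\alpha}_V(\abs{x(0,0)})$ to absorb the initial dynamic variable into the other initial data, gives the ISS estimate of Definition~\ref{def_iss}.

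The step I expect to be the main obstacle is the uniform $\mathcal{KLL}$ bookkeeping when chaining Proposition~\ref{prop_hybrid} over infinitely many flow intervals whose active parameter set — and hence exact decay exponent — changes from interval to interval, while consistently handling transitions between the "$C\geq V$" branch and the fall-back branch; here the positivity of $\epsilon_1$ together with the dwell time $t_{\min}>0$ are precisely what keep the geometric series and the disturbance gain finite, and the autonomy of the $\eta$-dynamics is what lets the analysis of $\eta$ be decoupled from that of $(x,e)$. The remaining computations — the collapse of $\max\{-\epsilon_i,2(L_i-\Lambda_i)\}$ to $-\epsilon_i$, and the passage from $V$- and $W$-level bounds to norm bounds via the class-$\mathcal{K}_\infty$ functions of Assumption~\ref{asum_hybrid_lyap} — are routine.
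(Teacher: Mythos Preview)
Your proposal is correct and follows essentially the same route as the paper: dwell time from the initialisation of $\bar h$, the explicit formula $\eta(\tvar_j)=e^{-\epsilon_\refer \svar_j}V(x(0,0))$, and an interval-by-interval induction using Proposition~\ref{prop_hybrid} with the two-branch distinction (Algorithm~\ref{algo_trig_window} succeeds vs.\ fall-back). The one cosmetic difference is in the disturbance bookkeeping: the paper introduces the constant $\maxvar$ of \eqref{eq_def_maxvar} to rewrite every per-interval disturbance integral with the \emph{same} exponent $-\epsilon_\refer$, so that the integrals telescope into a single $\int_0^t e^{-\bar\epsilon(t-\tau)}\alpha_w(\abs{w(\tau)})\,d\tau$; you instead bound each interval by $c_1\,\alpha_w(\norm{w}_\infty)$ and sum. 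Both yield ISS, but note that your ``geometric series'' phrasing is slightly loose: the recursion is not $B_{j+1}\le\rho B_j+c$ uniformly, since in the success branch the bound comes from $\eta(\tvar_j)$ rather than $V(x(\tvar_j))$; the clean way to close the induction is precisely the hypothesis $V(x(\tvar_j))\le e^{-\bar\epsilon \svar_j}V(x(0,0))+C\alpha_w(\norm{w}_\infty)$, which both branches preserve.
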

\begin{proof}
	The proof is given in Appendix~\ref{proof_theo_ref}.
\end{proof}

A numerical example for the mechanisms presented in this section is given in Section~\ref{sec_ex_1}.
\section{Local results for bounded disturbances}
\label{sec_loc}
Up to this point, we have not posed any assumptions on the disturbance signal $w$ other than it is locally integrable. We have derived guarantees on UGAS and ISS for such a setup. 

However, these results require Assumption~\ref{asum_hybrid_lyap} to hold globally, since the disturbance may be arbitrarily large.  This may be restrictive in some situations. Moreover, it may lead to unnecessarily conservative results, as parameters $\epsilon,\gamma$ and $L$ may be chosen in a less conservative manner if only subsets of the state-space need to be considered when verifying Assumption~\ref{asum_hybrid_lyap}.  To overcome these restrictions, we present in this section local results for the case that $w(t) \in \mathcal{W} \coloneqq \left\lbrace w|\abs{w} \leq \bar w \right\rbrace$ for all $t \geq 0$ and some $\bar w >0$, for which a local version of Assumption~\ref{asum_hybrid_lyap} can be exploited.

\subsection{Using disturbance bounds in the general framework for dynamic STC}
Subsequently, we will modify the dynamic STC framework from the previous sections such that RAS of a sublevel set of $V$ will be guaranteed for bounded disturbances.  

To obtain guarantees for RAS, we will use the following local version of Assumption~\ref{asum_hybrid_lyap}, that is stated for a sublevel set $\mathcal{X}_c \coloneqq \left\lbrace x | V(x) \leq c \right\rbrace$ of $V$ for some $c > 0$. We use subsequently the notation $\hat{x} \coloneqq \begin{bmatrix}
	\hat{x}_p^\top&
	\hat{x}_c^\top
\end{bmatrix}^\top$. 
\begin{asum}
	\label{asum_hybrid_lyap_loc}
	 There exist a locally Lipschitz function $W:\mathbb{R}^{n_e} \rightarrow \mathbb{R}_{\geq0}$, a locally Lipschitz function $V:\mathbb{R}^{n_x} \rightarrow \mathbb{R}_{\geq0}$, a continuous function $H:\mathbb{R}^{n_x}\times\mathbb{R}^{n_e}\times \mathcal{W} \rightarrow \mathbb{R}_{\geq0}$, constants $L, \gamma\in \mathbb{R}_{>0}$, $\epsilon\in \mathbb{R}$, and  $\underline{\alpha}_W$, $\overline{\alpha}_W, \underline{\alpha}_V, \overline{\alpha}_V,\alpha_w \in \mathcal{K}_\infty$  such that for all $e \in \mathbb{R}^{n_e}$
	\begin{equation}
		\label{eq_w_bound_loc}
		\underline{\alpha}_W(\abs{e}) \leq W(e) \leq \overline{\alpha}_W(\abs{e}),
	\end{equation}
	for all $x \in \mathbb{R}^{n_x}$,
	\begin{equation}
		\label{eq_V_bound_K_loc}
		\underline{\alpha}_V(\abs{x}) \leq V(x) \leq \overline{\alpha}_V(\abs{x}),
	\end{equation}
	and for all  $x \in \mathcal{X}_c, w \in \mathcal{W} $ and almost all $e = \hat{x} - x$ with $\hat{x} \in \mathcal{X}_c,$ 
	\begin{equation}
		\left\langle \frac{\partial W(e)}{\partial e},g(x,e,w)\right\rangle \leq L W(e) + H(x,e,w).  \label{eq_w_est_loc}
	\end{equation}
	Moreover, for all $e = \hat{x} -x$, with $\hat{x} \in \mathcal{X}_c$, $x \in \mathcal{X}_c$, all  $w \in \mathcal{W}$  and almost all $x \in \mathcal{X}$,
	\begin{equation}
		\begin{split}
			\left\langle \nabla V(x),f(x,e,w) \right\rangle
			\leq &- \epsilon V(x) -H^2(x,e,w)\\ 
			&+ \gamma^2 W^2(e) + \alpha_w\left(\abs{w}\right).
		\end{split}
		\label{eq_v_desc_hybrid_loc}
	\end{equation}
\end{asum}
We will later exploit that the modified assumption may become less restrictive in the sense that smaller values for $\gamma$ and $L$ may be possible as $c$ decreases. Based on the modified assumption, we obtain the following modified version of Proposition~\ref{prop_hybrid}.
\begin{prop}
	\label{prop_hybrid_loc}
	Consider the hybrid system $\mathcal{H}_{STC}$ at sampling instant $\tvar_j^+$ for $j \in \mathbb{N}_0$.  Let Assumption~\ref{asum_hybrid_lyap_loc} hold 
	for $\gamma, \epsilon$, $L$ and $c$ and let $w(t) \in \mathcal{W}$ for all $t \geq 0$.
	Moreover, let $0 < \auxvar(\tvar_j^+) < \delta T_{\max} \left(\gamma,\max\left\lbrace L-\frac{\epsilon}{2},(1-\delta)\right\rbrace \right)$  for some $\delta \in \left(0,1\right)$.	
	Consider $U(\xi)$ according to \eqref{eq_def_u}. 
If 
\begin{equation}
	\label{eq_bound_U_loc}
	\begin{split}
		 & e^{ -\epsilon\auxvar\left(\tvar_j^+\right)}V(x(\tvar_j)) 
		+ \frac{\alpha_w\left(\abs{\bar w}\right)}{\epsilon} \left(1-e^{-\epsilon\auxvar(\tvar_j^+)}\right)\leq c
	\end{split}	
\end{equation}
 and $V(x(\tvar_j)) \leq c$, then  
	\begin{equation}
	\begin{split}
	&V(x(\svar,j+1))\\ 
	\leq& U(\xi(\svar,j+1)) 
	\leq	e^{ -\epsilon (t-\svar_j)}V(x(\tvar_j))\\
	& + \frac{\alpha_w\left(\abs{\bar w}\right)}{\epsilon} \left(1-e^{-\epsilon(t-\svar_j)}\right)
	\end{split}
	\label{eq_prop_hybrid1_loc}
	\end{equation}
	holds for all $\svar_j \leq t \leq \svar_j+\auxvar(\tvar_j^+)$.	
\end{prop}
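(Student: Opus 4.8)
The plan is to mimic the proof of Proposition~\ref{prop_hybrid}, which already establishes the bound \eqref{eq_prop_hybrid1}, and then specialize it to the case $\epsilon_i \to \epsilon$ with the worst-case disturbance $\abs{w(t)} \le \bar w$, while carefully tracking that the state stays inside the sublevel set $\mathcal{X}_c$ so that the \emph{local} Assumption~\ref{asum_hybrid_lyap_loc} remains applicable throughout the interval $[\svar_j,\svar_j+\auxvar(\tvar_j^+)]$. The key observation is that the choice $\Lambda = \max\{L-\tfrac{\epsilon}{2},(1-\delta)\}$ makes $\max\{-\epsilon,2(L-\Lambda)\} = -\epsilon$ (when $L-\tfrac{\epsilon}{2} \ge 1-\delta$, which is the generic case; the other case only makes the bound larger and is handled by monotonicity), so the exponential rate in \eqref{eq_prop_hybrid1} collapses to $e^{-\epsilon(t-\svar_j)}$. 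Substituting $\alpha_w(\abs{w}) \le \alpha_w(\bar w)$ into the integral in \eqref{eq_prop_hybrid1} and evaluating $\int_{\svar_j}^{t} e^{-\epsilon(t-\tau)}\alpha_w(\bar w)\,d\tau = \tfrac{\alpha_w(\bar w)}{\epsilon}(1-e^{-\epsilon(t-\svar_j)})$ then yields exactly the right-hand side of \eqref{eq_prop_hybrid1_loc}.

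The main obstacle — and the reason the hypothesis \eqref{eq_bound_U_loc} is imposed — is that Proposition~\ref{prop_hybrid} (and hence the differential inequalities \eqref{eq_w_est}, \eqref{eq_v_desc_hybrid} in their local form) is only valid as long as $x(t,j+1) \in \mathcal{X}_c$ and the jump-reset error satisfies $e = \hat x - x$ with $\hat x \in \mathcal{X}_c$. So the argument must be run as a continuity/bootstrap argument: first I would note that immediately after the jump at $\tvar_j^+$ we have $V(x(\svar_j)) \le c$ by hypothesis and $e(\svar_j,j+1) = 0$, so the state is in $\mathcal{X}_c$ and the error-reset condition holds trivially. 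Then, on the maximal subinterval of $[\svar_j,\svar_j+\auxvar(\tvar_j^+)]$ on which $V(x(\cdot,j+1)) \le c$ holds, Proposition~\ref{prop_hybrid} applies and gives \eqref{eq_prop_hybrid1_loc}; but the right-hand side of \eqref{eq_prop_hybrid1_loc} is monotone in $t-\svar_j$ (increasing if $\epsilon<0$, decreasing if $\epsilon>0$, but in either case bounded above by its value at $t-\svar_j = \auxvar(\tvar_j^+)$ when $\epsilon<0$ and by $V(x(\svar_j))\le c$ when $\epsilon>0$), and \eqref{eq_bound_U_loc} is precisely the statement that this supremum over the interval does not exceed $c$. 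Hence $V(x(t,j+1)) \le c$ cannot be violated on the interior, so the maximal subinterval is the whole interval and the bound propagates.

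More concretely, I would argue: suppose for contradiction there exists $t^\star \in (\svar_j,\svar_j+\auxvar(\tvar_j^+)]$ with $V(x(t^\star,j+1)) > c$; by continuity of $t \mapsto V(x(t,j+1))$ let $t^\star$ be the first such time, so $V(x(t,j+1)) \le c$ for all $t \in [\svar_j,t^\star)$ and $V(x(t^\star,j+1)) = c$. On $[\svar_j,t^\star)$ the local assumption is in force (the error $e(t,j+1) = \hat x - x(t,j+1)$ has $\hat x = x(\svar_j) \in \mathcal{X}_c$ and $x(t,j+1) \in \mathcal{X}_c$), so Proposition~\ref{prop_hybrid} yields \eqref{eq_prop_hybrid1_loc} for all such $t$, and by continuity also at $t^\star$. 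Evaluating the right-hand side of \eqref{eq_prop_hybrid1_loc} and using that the map $h \mapsto e^{-\epsilon h}V(x(\svar_j)) + \tfrac{\alpha_w(\bar w)}{\epsilon}(1-e^{-\epsilon h})$ is monotone, its value at $h = t^\star - \svar_j \le \auxvar(\tvar_j^+)$ is bounded by its value at $h = \auxvar(\tvar_j^+)$ (for $\epsilon < 0$) or by $V(x(\svar_j)) \le c$ (for $\epsilon \ge 0$), both of which are $\le c$ by \eqref{eq_bound_U_loc} and $V(x(\svar_j)) \le c$. Thus $V(x(t^\star,j+1)) \le c$, contradicting $V(x(t^\star,j+1)) = c$ only if one is careful — actually it gives $V(x(t^\star,j+1)) \le c$ with equality possible, so one refines by taking $t^\star$ to be where $V$ first \emph{exceeds} $c$ and uses a strict inequality margin, or equivalently argues on $[\svar_j,\svar_j+\auxvar(\tvar_j^+)]$ directly that $V \le c$ is invariant. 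Either way the conclusion is that $V(x(t,j+1)) \le c$ holds on the whole interval, so Proposition~\ref{prop_hybrid} (local version) applies on the whole interval, delivering \eqref{eq_prop_hybrid1_loc} and, combined with $U(\xi) \ge V(x)$, the full chain of inequalities. I expect the only genuinely delicate point to be this invariance/bootstrap bookkeeping — the pure computation reducing \eqref{eq_prop_hybrid1} to \eqref{eq_prop_hybrid1_loc} via $\max\{-\epsilon,2(L-\Lambda)\} = -\epsilon$ and the integral evaluation is routine.
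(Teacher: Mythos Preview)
Your approach is correct and essentially the same as the paper's: obtain $\dot U \le -\epsilon\, U + \alpha_w(|w|)$ locally as in the proof of Proposition~\ref{prop_hybrid}, apply the comparison lemma with the disturbance bounded by $\alpha_w(\bar w)$, and run a bootstrap/invariance argument (your first-exit-time version is the paper's ``use this argumentation iteratively'') using \eqref{eq_bound_U_loc} together with $V(x(\tvar_j)) \le c$ to keep $x(t,j+1) \in \mathcal{X}_c$ so that Assumption~\ref{asum_hybrid_lyap_loc} remains in force on the whole interval. One minor arithmetic slip: with $\Lambda = L - \tfrac{\epsilon}{2}$ you get $2(L-\Lambda) = \epsilon$, so the identity $\max\{-\epsilon,2(L-\Lambda)\} = -\epsilon$ actually requires $\Lambda = L + \tfrac{\epsilon}{2}$, which is what the paper uses throughout (Algorithms~\ref{algo_trig_window} and~\ref{algo_trig_window_loc}, and the proofs of Theorems~\ref{prop_fir}--\ref{theo_ref}); the sign in the proposition statement appears to be a typo.
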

\begin{proof}
	The proof is given in Appendix~\ref{proof_prop_hyb_loc}.
\end{proof}
Note that we have already used the simplification $\Lambda = \max\left\lbrace L+\frac{\epsilon}{2},(1-\delta)\right\rbrace$ in the proposition. 

Proposition~\ref{prop_hybrid_loc} can be used in order to determine sampling instants such that condition~\eqref{eq_cond_dec} is satisfied for all possible disturbance signals that satisfy the disturbance bound (and not only in the nominal case). Suppose there are $n_\tpar$ parameter sets $\epsilon_\newi,\gamma_\newi,L_\newi,~i\in\left\lbrace1,\dots,n_\tpar\right\rbrace$, for which Assumption~\ref{asum_hybrid_lyap_loc} holds for some $c_\varl$ and with $\epsilon_{1,\varl} > 0$ and $\epsilon_\newi < 0$ for all $i \in \left\lbrace 2,\dots, n_\tpar \right\rbrace$.

If $V(x(\tvar_j)) <c_l, C(x(\tvar_j),\eta(\tvar_j)) \leq c_l$ and $w(t) \in \mathcal{W}$ for $\svar_j \leq t \leq \svar_{j+1}$, then \eqref{eq_cond_dec} can be ensured with Proposition~\ref{prop_hybrid_loc}, if there is a parameter set with index $i_j \in \left\lbrace2,\dots,n_\tpar\right\rbrace$ for Assumption~\ref{asum_hybrid_lyap_loc} for which
\begin{equation}
\label{eq_dec_gen_loc}
\begin{split}
&e^{ -\epsilon_{\newij} (t-\svar_j)}\left(V(x(\tvar_j)) - \frac{\alpha_w\left(\abs{\bar w}\right)}{\epsilon_{\newij}} \right)\\
\leq& e^{-\epsilon_\refer(t - \svar_j)} C(x(\tvar_j),\eta(\tvar_j)) - \frac{\alpha_w\left(\abs{\bar w}\right)}{\epsilon_{\newij}} 
\end{split}
\end{equation} 
holds for $\svar_j \leq t \leq \svar_{j+1}$ and 
\begin{equation}
\label{eq_bound_aux_loc}
\begin{split}
&t-\svar_j \leq \svar_{j+1}-\svar_j = \auxvar(s_j^+)\\
<& \delta T_{\max}\left(\gamma_{\newij},\max\left\lbrace L_{\newij}+\frac{\epsilon_{\newij}}{2}, (1-\delta)\right\rbrace\right)
\end{split}
\end{equation} 
holds for some $\delta > 0$. Note that 
\begin{equation}
	- \frac{\alpha_w\left(\abs{\bar w}\right)}{\epsilon_{\newij}} \geq -e^{-\epsilon_\refer (t-\svar_j)}  \frac{\alpha_w\left(\abs{\bar w}\right)}{\epsilon_{\newij}} 
\end{equation}
holds if $\epsilon_{i_j} \leq 0$. 

A checkable sufficient condition for \eqref{eq_dec_gen_loc}, that can be used to determine $\auxvar(\tvar_j^+)$, can thus be derived for the case that $C(x(\tvar_j),\eta(\tvar_j)) \geq V(x(\tvar_j))$ as 
\begin{equation}
	\begin{split}
		&\left(-\epsilon_{\newij} +\epsilon_\refer\right)(t-\svar_j)
	\leq \log\left(\frac{C(x(\tvar_j),\eta(\tvar_j)) -   \frac{\alpha_w\left(\abs{\bar w}\right)}{\epsilon_{\newij}}}{V(x(\tvar_j))-\frac{\alpha_w\left(\abs{\bar w}\right)}{\epsilon_{\newij}}}\right).
	\end{split}
\label{eq_ln_loc}
\end{equation}

In order to reduce potential conservativity when determining sampling instants, different values for $c$ can be used for Assumption~\ref{asum_hybrid_lyap_loc} at different sampling instances depending on the current values of $V(x(\svar_{j}))$ and $C(x(\svar_{j}),\eta(\svar_{j}))$. In particular, suppose there are $n_\tc \in \mathbb{N}$ variables $c_l, l \in \left\lbrace 1,\dots,n_c\right\rbrace$, for each of which Assumption~\ref{asum_hybrid_lyap_loc} has been verified offline for $n_\tpar$ specific parameter sets $\epsilon_{\newi}, \gamma_\newi$ and $L_\newi$. Then, choosing $l$ such that $c_l$ is as small as possible and satisfies $c_l \geq \max\left\lbrace V(x(\svar_j)),C(x(\svar_{j}),\eta(\svar_{j})) \right\rbrace$, leads to reduced conservativity when determining sampling instants. 

	\begin{algorithm}[tb]
		\caption{Computation of $\Gamma(x,\eta)$ for the dynamic STC framework for some $\delta \in \left(0,1\right)$ and given $C(x,\eta)$, taking into account the disturbance bound.}
		\label{algo_trig_window_loc}
		\begin{algorithmic}[1]
			\STATE $V \leftarrow V(x)$, $C \leftarrow C(x,\eta)$ %
			\STATE $l = \underset{l\in\left\lbrace 1,\dots,n_c \right\rbrace}{\min}~l$ s.t. $c_l \geq \max\left\lbrace V,C \right\rbrace$ \label{line_l}
			\STATE $\bar h \leftarrow \delta T_{\max}\left(\gamma_\itildel,L_\itildel+\frac{\epsilon_\itildel}{2}\right)$ \label{line_fallback_loc}
			\FOR{\text{\bf each} $i \in \left\lbrace2,\dots,n_\tpar\right\rbrace$ } 
			\STATE $\Lambda_{\newi} \leftarrow \max \left\lbrace L_\newi + \frac{\epsilon_\newi}{2},(1-\delta) \right\rbrace$
			\IF{$C \geq V$} \label{line_for_start_loc} %
			\IF{$-\epsilon_\newi+\epsilon_\refer > 0$}
			\STATE $\bar h_i \leftarrow \min\left\lbrace \delta T_{\max}(\gamma_\newi,\Lambda_\newi),\vphantom{\frac{\log\left(C-\frac{\alpha_w(\abs{\bar w}) }{\epsilon_\newi}\right)-\log\left(V-\frac{\alpha_w(\abs{\bar w})}{\epsilon_\newi}\right)}{ -\epsilon_\newi+ \epsilon_\refer} }\right.$
			
			$\hphantom{\bar h_i \leftarrow  }\left.\frac{\log\left(C-\frac{\alpha_w(\abs{\bar w}) }{\epsilon_\newi}\right)-\log\left(V-\frac{\alpha_w(\abs{\bar w})}{\epsilon_\newi}\right)}{ -\epsilon_\newi+ \epsilon_\refer} \right\rbrace$ \label{line_hi_loc}
			\ELSE
			\STATE $\bar h_i \leftarrow \delta T_{\max}(\gamma_\newi,\Lambda_\newi)$
			\ENDIF
			\ELSE
			\STATE $\bar h_i \leftarrow 0$
			\ENDIF\label{line_for_end_loc}
			\IF{$\bar h_i > \bar h$}
			\STATE $\bar h \leftarrow \bar h_i$\label{line_h_update_loc}
			\ENDIF
			\ENDFOR 			
			\STATE $\Gamma(x,\eta) \leftarrow \bar{h}$
		\end{algorithmic}
	\end{algorithm}

A modified version of Algorithm~\ref{algo_trig_window} that considers explicitly the bound on $w$ and that uses different values for $c$ in Assumption~\ref{asum_hybrid_lyap_loc} is given by Algorithm~\ref{algo_trig_window_loc}. Note that this algorithm follows essentially the same main steps as Algorithm~\ref{algo_trig_window}. 

Differences to Algorithm~\ref{algo_trig_window} are that first a suitable value for $l$ is chosen in Line~\ref{line_l} such that $c_l$ is as small as possible but satisfies $c_l \geq \max \left\lbrace V(x(\tvar_j)),C(x(\tvar_j),\eta(\tvar_j)) \right\rbrace$. Then, an iteration over all parameter sets is started. For each corresponding parameter set, a preferably large value for $\bar h_i$ is determined such that \eqref{eq_cond_dec} can be guaranteed if $\svar_{j+1} = \svar_j + \bar h_i$  for $l$ and for all possible disturbance realizations. This is ensured by the choice of $\bar{h}_i$ in Line~\ref{line_hi_loc} that is modified in comparison to the respective line in Algorithm~\ref{algo_trig_window_loc}. Similar as in Algorithm~\ref{algo_trig_window}, the maximum such $\bar h_i$ is selected as value for $\Gamma(x(\tvar_j),\eta(\tvar_j)) = \bar h$. If there is no parameter set for the considered $l$, for which $\eqref{eq_cond_dec}$ can be guaranteed to hold based on Proposition~\ref{proof_prop_fir_loc}, the fall-back strategy based on $\epsilon_{1,l}$ is used. In particular, the Algorithm sets in this case $\bar h = \delta T_{\max}(\gamma_\itildel,L_\itildel+\frac{\epsilon_\itildel}{2})$. For this fall-back strategy, it is important to note that Proposition~\ref{prop_hybrid_loc} delivers in this case 
 that 
\begin{equation}
\label{eq_eps1_loc}
\begin{split}
V(x(\tvar_{j+1})) 
\leq&	e^{ -\epsilon_1 (\svar_{j+1}-\svar_j)}V(x(\tvar_j))\\
& + \frac{\alpha_w\left(\abs{\bar w}\right)}{\epsilon_{1,l}} \left(1-e^{-\epsilon_{1,l}(\svar_{j+1}-\svar_j)}\right).
\end{split}
\end{equation} 
To guarantee RAS of the set $\mathcal{R}$ with ROA $\mathcal{X}_{c_{\max}}$, $c_w$ needs to be such that $V(x(\tvar_{j+1})) < V(x(\tvar_j))$ holds for the fall-back strategy for $x(\tvar_j) \in \mathcal{X}\backslash\mathcal{R}$ and $V(x(\tvar_{j+1})) \leq c_w$ for all $x(\tvar_j) \in \mathcal{R}$. This is ensured by \eqref{eq_eps1_loc} if $c_w \geq \underset{l \in \left\lbrace 1,\dots,n_c \right\rbrace}{\max} \frac{\alpha_w\left(\abs{\bar w}\right)}{\epsilon_{1,l}}$, which we use as a lower bound for possible values of $c_w$. It thus also determines the minimum size of $\mathcal{R}$. Moreover, it is necessary that $c_w \leq
 c_{\max} \leq \underset{l \in \left\lbrace 1,\dots,n_c \right\rbrace}{\max} c_l$ such that suitable parameter sets for Assumption~\ref{asum_hybrid_lyap_loc} are available for all $x \in \mathcal{X}_{c_{\max}}$.

Next, we discuss how the particular dynamic STC mechanisms from Section~\ref{sec_spec} needs to be modified in order to guarantee RAS of the set $\mathcal{R}$ with ROA $\mathcal{X}_{c_{\max}}$.  

\subsection{Modifications for the dynamic STC mechanisms to guarantee RAS}
We discuss in this subsection, which additional modifications are required for the particular dynamic STC mechanisms from Section~\ref{sec_spec} in order to guarantee RAS of the set $\mathcal{R}$ with region of attraction $\mathcal{X}_{c_{\max}}$. For simplicity, we assume that $c_{\max} = \underset{l \in \left\lbrace 1,\dots,n_c \right\rbrace}{\max} c_l$ and $c_w = \underset{l \in \left\lbrace 1,\dots,n_c \right\rbrace}{\min} c_l$. 
\subsubsection{Dynamic STC based on an FIR filter} 
To guarantee RAS  of the set $\mathcal{R}$ with region of attraction $\mathcal{X}_{c_{\max}}$, the dynamic STC mechanism based on an FIR filter for $V$ from Subsection~\ref{subsec_fir} needs to be modified such that it ensures that the system state stays in the set $\mathcal{X}_{c_{\max}}$ for all times, since Assumption~\ref{asum_hybrid_lyap_loc} is only valid in this set. Keeping the system state in $\mathcal{X}_{c_{\max}}$ can be achieved by limiting the maximum value of $C(x,\eta)$ by $c_{\max}$. Moreover, in order to enlarge the time between sampling instants, it is beneficial to set the value of $C(x,\eta)$ to $c_w$ if the filter state would else result in smaller values. Both can be achieved by replacing the definition of $C(x,\eta)$ from \eqref{eq_fir_C} by 
\begin{equation}
\label{eq_fir_C_loc}
C(x,\eta) = \max\left\lbrace\min\left\lbrace\frac{1}{m} \left(V(x)+ \sum_{k=1}^{m-1} \eta_k\right),c_{\max}\right\rbrace,c_w\right\rbrace.
\end{equation}
Using this modification, we obtain the following result.
\begin{theo}
	\label{prop_fir_loc}
	Suppose  $w(t) \in \mathcal{W}$ for all $t\geq 0$. Assume there are $n_c$ parameters $c_l, l \in \left\lbrace 1,\dots,n_c\right\rbrace$, each with $n_\tpar$ different parameter sets $\epsilon_\newi, \gamma_\newi, L_\newi$, $i \in \left\lbrace1,\dots,n_\tpar\right\rbrace$, for which Assumption~2 holds
	 with the same functions $V$ and $\alpha_w$.  Let $\epsilon_{1,l}>0$ and $\epsilon_{\newi} < 0$ for $i > 1$ and each $l$. Further, let $c_{\max} \geq c_w \geq \underset{l \in \left\lbrace 1,\dots,n_c\right\rbrace}{\max} \frac{\alpha_w\left(\abs{\bar w}\right)}{\epsilon_{1,l}}$. Consider $\mathcal{H}_{STC}$ with $S(\eta,x)$ and $\Gamma(x,\eta)$ defined according to \eqref{eq_S_window} and by Algorithm~\ref{algo_trig_window_loc} with $C(x,\eta)$ according to \eqref{eq_fir_C_loc} and some $\delta \in 
	\left(0,1\right)$. Then $\svar_{j+1}- \svar_j \geq t_{\min} = \underset{l \in \left\lbrace 1,\dots, n_c \right\rbrace}{\min}\delta T_{\max}\left(\gamma_{1,l},L_{1,l}+\frac{\epsilon_{1,l}}{2}\right)$ and the set $\mathcal{R}$ is RAS for $\mathcal{H}_{STC}$ with region of attraction $\mathcal{X}_{c_{\max}}$.
\end{theo}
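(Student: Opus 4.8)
The plan is to follow the proof of Theorem~\ref{prop_fir} almost verbatim, with Proposition~\ref{prop_hybrid} replaced by its local version Proposition~\ref{prop_hybrid_loc} and with the sublevel‑set constraints carried along. Three things must be shown: the uniform lower bound $\svar_{j+1}-\svar_j\ge t_{\min}$; forward invariance of $\mathcal X_{c_{\max}}$, so that Assumption~\ref{asum_hybrid_lyap_loc} is applicable along the entire solution; and the $\mathcal{K}\mathcal{L}\mathcal{L}$ estimate \eqref{eq_stab_bound_set}. The dwell‑time bound is immediate: Algorithm~\ref{algo_trig_window_loc} initialises $\bar h$ in Line~\ref{line_fallback_loc} to $\delta T_{\max}(\gamma_{1,l},L_{1,l}+\epsilon_{1,l}/2)$ and afterwards only increases it, and $l\in\{1,\dots,n_c\}$, so $\Gamma(x,\eta)\ge t_{\min}>0$ for every argument, which in particular excludes Zeno behaviour. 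I would also record that every $T_{\max}(\gamma,\Lambda)$ is finite, so $\Gamma$ is uniformly upper bounded by some $\bar h_{\max}<\infty$; I need this later to convert step counts into continuous time.

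Next I would prove $V(x(\tvar_j))\le c_{\max}$ by induction on $j$, the base case being $x(0,0)\in\mathcal X_{c_{\max}}$. For the inductive step, note that \eqref{eq_fir_C_loc} always gives $c_w\le C(x(\tvar_j),\eta(\tvar_j))\le c_{\max}$, hence the index $l$ selected in Line~\ref{line_l} exists and satisfies $V(x(\tvar_j))\le c_l$ and $C(x(\tvar_j),\eta(\tvar_j))\le c_l$. Whichever branch the algorithm then takes, the hypotheses of Proposition~\ref{prop_hybrid_loc} hold with $c=c_l$: the precondition $V(x(\tvar_j))\le c_l$ is immediate, and \eqref{eq_bound_U_loc} follows because its right‑hand side is a convex combination of $V(x(\tvar_j))\le c_l$ and $\alpha_w(\abs{\bar w})/\epsilon_{1,l}\le c_w\le c_l$ in the fall‑back case, and from $C(x(\tvar_j),\eta(\tvar_j))\le c_l$ together with the definition of $\auxvar(\tvar_j^+)$ in Line~\ref{line_hi_loc} (i.e.\ \eqref{eq_ln_loc}--\eqref{eq_bound_aux_loc}) when a parameter set $i_j\ge2$ is used. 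Proposition~\ref{prop_hybrid_loc} then yields, for every admissible $w$ and all $\svar_j\le t\le\svar_{j+1}$, either \eqref{eq_cond_dec}, so that $V(x(t,j+1))\le C(x(\tvar_j),\eta(\tvar_j))\le c_{\max}$ since $e^{-\epsilon_\refer(t-\svar_j)}\le1$, or the bound \eqref{eq_prop_hybrid1_loc}/\eqref{eq_eps1_loc} which, for $\epsilon_{1,l}>0$, is a convex combination of $V(x(\tvar_j))\le c_{\max}$ and $\alpha_w(\abs{\bar w})/\epsilon_{1,l}\le c_w\le c_{\max}$; either way $V(x(t,j+1))\le c_{\max}$, closing the induction. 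Boundedness of $x$, $e$, $\eta$ together with the dwell time then gives completeness of the maximal solution.

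For convergence I would introduce $M_j\coloneqq\max\{V(x(\tvar_j)),\eta_1(\tvar_j),\dots,\eta_{m-1}(\tvar_j)\}$ and $\rho\coloneqq\max\{e^{-\epsilon_\refer t_{\min}},\max_l e^{-\epsilon_{1,l}t_{\min}}\}\in(0,1)$. By \eqref{eq_S_window}, every component of $\eta(\tvar_{j+1})$ is $e^{-\epsilon_\refer\auxvar(\tvar_j^+)}$ times a component of $(\eta_2(\tvar_j),\dots,\eta_{m-1}(\tvar_j),V(x(\tvar_j)))$, hence is at most $\rho M_j$. For $V(x(\tvar_{j+1}))$: if a parameter set $i_j\ge2$ is used then $V(x(\tvar_{j+1}))\le C(x(\tvar_j),\eta(\tvar_j))$, and since $\tfrac1m(V(x(\tvar_j))+\sum_{k=1}^{m-1}\eta_k(\tvar_j))\le M_j\le c_{\max}$ the cap in \eqref{eq_fir_C_loc} is inactive, giving $C(x(\tvar_j),\eta(\tvar_j))\le\max\{M_j,c_w\}$; if the fall‑back is used, \eqref{eq_eps1_loc} bounds $V(x(\tvar_{j+1}))$ by a convex combination of $V(x(\tvar_j))\le M_j$ and $\alpha_w(\abs{\bar w})/\epsilon_{1,l}\le c_w$. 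Combining these, I would deduce in all cases
\begin{equation}
\max\{M_{j+1}-c_w,\,0\}\le\rho\,\max\{M_j-c_w,\,0\},
\end{equation}
hence $\max\{M_j-c_w,0\}\le\rho^{j}\max\{M_0-c_w,0\}$. Since $V(x(\tvar_j))\le M_j$ and, between samples, $V(x(t,j+1))\le\max\{M_j,c_w\}$ (by \eqref{eq_cond_dec} in the $i_j\ge2$ case and by \eqref{eq_prop_hybrid1_loc} with $\epsilon_{1,l}>0$ in the fall‑back case), this gives $V(x(t,j))\le c_w+\rho^{\,j-1}\max\{M_0-c_w,0\}$ on the whole domain. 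Finally I would use $M_0\le V(x(0,0))+\abs{\eta(0,0)}$ and $t_{\min}(j-1)\le t\le j\,\bar h_{\max}$ to bound $\rho^{\,j-1}$ by $c_0\,\rho^{\,j/2}\rho^{\,t/(2\bar h_{\max})}$, and set
\begin{equation}
\beta(\theta,t,j)\coloneqq c_0\,\rho^{\,j/2}\rho^{\,t/(2\bar h_{\max})}\,\theta,
\end{equation}
which is a $\mathcal{K}\mathcal{L}\mathcal{L}$ function; after adjusting $c_0$ so that the trivial bound also holds at $j=0$, and using $\max\{M_0-c_w,0\}\le\max\{V(x(0,0))-c_w,0\}+\lvert[e(0,0)^\top,\eta(0,0)^\top]^\top\rvert$, this yields \eqref{eq_stab_bound_set}, i.e.\ RAS of $\mathcal R$ with ROA $\mathcal X_{c_{\max}}$.

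The step I expect to be the main obstacle is the convergence argument, specifically verifying that a \emph{single} contraction rate $\rho<1$ works irrespective of which (a priori state‑dependent) branch Algorithm~\ref{algo_trig_window_loc} selects at each sampling instant, and that the floor at $c_w$ and the cap at $c_{\max}$ in \eqref{eq_fir_C_loc} never re‑inflate $M_j$ once it has entered $[0,c_w]$ and are inactive in the range that matters for invariance (the cap being inactive precisely because $\tfrac1m(V(x(\tvar_j))+\sum_k\eta_k(\tvar_j))\le M_j\le c_{\max}$). The remaining bookkeeping --- tracking the finite, $m$‑step influence of $\eta(0,0)$ through the FIR memory, carrying all disturbance terms as $\alpha_w(\abs{\bar w})$ and only bounding them by $c_w$ at the very end via $c_w\ge\max_l\alpha_w(\abs{\bar w})/\epsilon_{1,l}$, and merging the at‑sample geometric decay with the between‑sample bound \eqref{eq_prop_hybrid1_loc} into a genuine $\mathcal{K}\mathcal{L}\mathcal{L}$ function --- is routine.
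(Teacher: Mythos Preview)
Your argument is correct in outline and differs from the paper mainly in the convergence step. The paper proves, by a single induction over $j$, the continuous-time bound
\[
V(x(t,j))\le \min\Bigl\{c_w+e^{-\bar\epsilon t}\max\{V(x(0,0))-c_w,\abs{\eta(0,0)}\},\,c_{\max}\Bigr\},\qquad \bar\epsilon=\min\{\epsilon_\refer,\min_l\epsilon_{1,l}\},
\]
by bounding $C(x(\tvar_\barj),\eta(\tvar_\barj))$ via \eqref{eq_fir_eta}--\eqref{eq_fir_C_loc} with the inductive hypothesis plugged in, and then applying Proposition~\ref{prop_hybrid_loc} in either branch. You instead separate invariance from convergence and, for the latter, run a discrete max-Lyapunov argument on $M_j=\max\{V(x(\tvar_j)),\eta_1(\tvar_j),\dots,\eta_{m-1}(\tvar_j)\}$ to get a one-step contraction of $\max\{M_j-c_w,0\}$, converting to a $\mathcal{KLL}$ bound only at the end. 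Your route is a bit more modular and does not require tracking the FIR time stamps in \eqref{eq_fir_eta}; the paper's route gives a slightly sharper, genuinely continuous decay rate $e^{-\bar\epsilon t}$ rather than your $\rho^{j}$-based one.

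Two small points to tighten. First, in the $i_j\ge2$ branch you write $V(x(\tvar_{j+1}))\le C(x(\tvar_j),\eta(\tvar_j))$, but that alone does \emph{not} give the claimed contraction $\max\{M_{j+1}-c_w,0\}\le\rho\max\{M_j-c_w,0\}$: you must keep the factor $e^{-\epsilon_\refer(\svar_{j+1}-\svar_j)}\le\rho$ from \eqref{eq_cond_dec} (which Proposition~\ref{prop_hybrid_loc} together with \eqref{eq_dec_gen_loc} indeed delivers), so that $V(x(\tvar_{j+1}))\le\rho\,\max\{M_j,c_w\}$ and hence $V(x(\tvar_{j+1}))-c_w\le\rho(M_j-c_w)$ when $M_j\ge c_w$. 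Second, the claim ``$M_j\le c_{\max}$'' need not hold for small $j$ if $\eta(0,0)$ is large, so the cap in \eqref{eq_fir_C_loc} may be active; however your conclusion $C(x(\tvar_j),\eta(\tvar_j))\le\max\{M_j,c_w\}$ still holds, since the $\min\{\cdot,c_{\max}\}$ can only decrease $C$ and the FIR average is always $\le M_j$. With these two fixes your proof goes through.
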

\begin{proof}
	The proof is given in Appendix~\ref{proof_prop_fir_loc}.
\end{proof}

\subsubsection{Dynamic STC based on an IIR filter}
The modification required for the dynamic STC mechanism based on an IIR filter for $V$ from Subsection~\ref{subsec_iir} to guarantee RAS of $\mathcal{R}$ for the ROA $\mathcal{X}_{c_{\max}}$ is quite similar as for the FIR mechanism. To ensure that the system state stays in $\mathcal{X}_{c_{\max}}$ for all times, the maximum value of $C(x,\eta)$ can again be limited by $c_{\max}$. Moreover, similar as the modification for the FIR mechanism, it is beneficial to set the value of $C(x,\eta)$ to $c_w$ if the filter state would result in smaller values in order to enlarge the time between sampling instants. Both can be achieved by replacing the definition of $C(x,\eta)$ from \eqref{eq_iir_C} by
\begin{equation}
	\label{eq_iir_C_loc}
	C(x,\eta) = \max\left\lbrace \min\left\lbrace \eta ,c_w \right\rbrace, c_{\max} \right\rbrace.
\end{equation} 
With this modification, we obtain the following result.
\begin{theo}
	\label{theo_iir_loc}
		Suppose  $w(t) \in \mathcal{W}$ for all $t\geq 0$. Assume there are $n_c$ parameters $c_l, l \in \left\lbrace 1,\dots,n_c\right\rbrace$ each with $n_\tpar$ different parameter sets $\epsilon_\newi, \gamma_\newi, L_\newi$, $i \in \left\lbrace1,\dots,n_\tpar\right\rbrace$, for which Assumption~2 holds
		with the same functions $V$ and $\alpha_w$.  Let $\epsilon_{1,l}>0$ and $\epsilon_{\newi} < 0$ for $i > 1$ and each $l$. Further, let $c_{\max} \geq c_w \geq \underset{l \in \left\lbrace 1,\dots,n_c\right\rbrace}{\max} \frac{\alpha_w\left(\abs{\bar w}\right)}{\epsilon_{1,l}}$, $r_1,r_2 > 0$ and $r_1+r_2 \leq 1$.
	 Consider $\mathcal{H}_{STC}$ with $S(\eta,x)$ and $\Gamma(x,\eta)$ defined according to \eqref{eq_S_iir} and by Algorithm~\ref{algo_trig_window_loc}, $C(x,\eta)$ according to \eqref{eq_iir_C_loc} and some $\delta \in 
	\left(0,1\right)$. Then $\svar_{j+1}- \svar_j \geq t_{\min} = \underset{l \in \left\lbrace 1,\dots, n_c \right\rbrace}{\min}\delta T_{\max}\left(\gamma_{1,l},L_{1,l}+\frac{\epsilon_{1,l}}{2}\right)$ and the set $\mathcal{R}$ is RAS for $\mathcal{H}_{STC}$ with region of attraction $\mathcal{X}_{c_{\max}}$.
\end{theo}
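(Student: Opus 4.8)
The plan is to follow the three-part template used for Theorem~\ref{prop_fir_loc} (cf.\ Appendix~\ref{proof_prop_fir_loc}), the only structural difference being that the scalar IIR recursion \eqref{eq_S_iir} for $\eta$ replaces the FIR shift register. First I would dispose of the dwell-time claim: in Algorithm~\ref{algo_trig_window_loc} the variable $\bar h$ is initialised in Line~\ref{line_fallback_loc} to $\delta T_{\max}(\gamma_{1,l},L_{1,l}+\tfrac{\epsilon_{1,l}}{2})$ for the index $l$ chosen in Line~\ref{line_l} and is only ever increased, so $\Gamma(x(\tvar_j),\eta(\tvar_j))\geq\delta T_{\max}(\gamma_{1,l},L_{1,l}+\tfrac{\epsilon_{1,l}}{2})\geq t_{\min}$; since $\epsilon_{1,l}>0$ the argument $L_{1,l}+\tfrac{\epsilon_{1,l}}{2}$ is positive, hence $T_{\max}$ is finite and strictly positive there and $t_{\min}>0$. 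The same observation gives an upper bound $\svar_{j+1}-\svar_j\leq\bar t:=\max_{i,l}\delta T_{\max}(\gamma_{i,l},\Lambda_{i,l})<\infty$, so along any solution the jump index grows at least linearly with $t$.

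Next I would prove by induction on $j$ that Line~\ref{line_l} is feasible and that $V(x(\tvar_j))\leq c_{\max}$ at every sampling instant. By \eqref{eq_iir_C_loc}, $C(x,\eta)$ is the saturation of $\eta$ to $[c_w,c_{\max}]$, so $c_w\leq C(x,\eta)\leq c_{\max}$ for every $\eta$; hence whenever $V(x(\tvar_j))\leq c_{\max}=\max_l c_l$, an index $l$ with $c_l\geq\max\{V(x(\tvar_j)),C(x(\tvar_j),\eta(\tvar_j))\}$ exists and Line~\ref{line_l} is executable. The premises needed to invoke Proposition~\ref{prop_hybrid_loc} as in the discussion around \eqref{eq_dec_gen_loc}--\eqref{eq_eps1_loc} hold for this $l$: $V(x(\tvar_j))\leq c_l$ by construction, $w(t)\in\mathcal{W}$ by hypothesis, and $\tfrac{\alpha_w(\abs{\bar w})}{\epsilon_{1,l}}\leq c_w\leq c_l$, so \eqref{eq_bound_U_loc} holds as a convex combination. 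Thus the $\Gamma$ returned by the algorithm guarantees, for $\svar_j\leq t\leq\svar_{j+1}$, either \eqref{eq_cond_dec} with $C=C(x(\tvar_j),\eta(\tvar_j))$ for some $i_j\in\{2,\dots,n_\tpar\}$, or the fall-back estimate \eqref{eq_eps1_loc} with parameters $(1,l)$. In the first case $V(x(t,j+1))\leq C(x(\tvar_j),\eta(\tvar_j))\leq c_{\max}$; in the second, the right-hand side of \eqref{eq_eps1_loc} is a convex combination of $V(x(\tvar_j))\leq c_{\max}$ and $\tfrac{\alpha_w(\abs{\bar w})}{\epsilon_{1,l}}\leq c_w\leq c_{\max}$. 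Since $x$ is unchanged at jumps and $V(x(0,0))\leq c_{\max}$, the induction closes and $x(t,j)\in\mathcal{X}_{c_{\max}}$ throughout the domain.

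For convergence, abbreviate $V_j:=V(x(\tvar_j))$, $\eta_j:=\eta(\tvar_j)$, $h_j:=\svar_{j+1}-\svar_j$, so that \eqref{eq_S_iir} gives $\eta_{j+1}=e^{-\epsilon_\refer h_j}(r_1\eta_j+r_2V_j)$ and, by \eqref{eq_iir_C_loc}, $C(x(\tvar_j),\eta_j)\leq\max\{\eta_j,c_w\}$. Set $p_j:=\max\{V_j-c_w,0\}$, $q_j:=\max\{\eta_j-c_w,0\}$, $M_j:=\max\{p_j,q_j\}$ and $\bar\rho:=\max\{e^{-\epsilon_\refer t_{\min}},\max_l e^{-\epsilon_{1,l}t_{\min}}\}<1$. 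Evaluating the two possible per-step bounds at $t=\svar_{j+1}$ and using $h_j\geq t_{\min}$, $\epsilon_\refer>0$, $\epsilon_{1,l}>0$ and $\tfrac{\alpha_w(\abs{\bar w})}{\epsilon_{1,l}}\leq c_w$ shows in both cases $V_{j+1}-c_w\leq\bar\rho M_j$, i.e.\ $p_{j+1}\leq\bar\rho M_j$; moreover $r_1+r_2\leq1$ gives $r_1\eta_j+r_2V_j\leq(r_1+r_2)(M_j+c_w)\leq M_j+c_w$, whence $\eta_{j+1}-c_w\leq e^{-\epsilon_\refer h_j}M_j\leq\bar\rho M_j$, i.e.\ $q_{j+1}\leq\bar\rho M_j$. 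Therefore $M_{j+1}\leq\bar\rho M_j$, so $M_j\leq\bar\rho^j M_0$ with $M_0\leq\max\{V(x(0,0))-c_w,0\}+\abs{[e(0,0)^\top,\eta(0,0)^\top]^\top}$. Feeding $M_j\leq\bar\rho^j M_0$ back into the per-step bounds at a general $t\in[\svar_j,\svar_{j+1}]$ yields $V(x(t,j+1))\leq c_w+\bar\rho^j M_0$ for all $(t,j+1)$ in the domain (at $(0,0)$ the bound is immediate). Combining this with the sandwich $t_{\min}\leq h_j\leq\bar t$, which forces $j$ to grow linearly with $t$ along solutions, one constructs in the standard way a function $\beta\in\mathcal{K}\mathcal{L}\mathcal{L}$ (a rescaling of $(\nu,t,j)\mapsto\nu\min\{\bar\rho^{\,j},\bar\rho^{\,t/\bar t}\}$ suffices) for which \eqref{eq_stab_bound_set} holds, establishing RAS of $\mathcal{R}$ with region of attraction $\mathcal{X}_{c_{\max}}$.

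The hard part will be the convergence step, namely merging the two triggering modes into a single contracting quantity: per sampling step the trigger certifies \emph{either} a bound on $V$ through the filter state $\eta$ (which may exceed $V$) \emph{or}, via the fall-back, a bound on $V$ through $V$ itself, so neither $V$ nor $\eta$ alone contracts. The quantity $M_j=\max\{(V_j-c_w)_+,(\eta_j-c_w)_+\}$ does, precisely because the condition $r_1+r_2\leq1$ renders the $\eta$-update non-expansive relative to $\max\{V,\eta\}$; and the lower saturation of $C$ at $c_w$, together with $\tfrac{\alpha_w(\abs{\bar w})}{\epsilon_{1,l}}\leq c_w$ in the fall-back, is what pins the attractor to $\mathcal{R}=\{V(x)\leq c_w\}$ rather than to a strictly larger sublevel set. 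By comparison, verifying that Proposition~\ref{prop_hybrid_loc} applies at each step (feasibility of Line~\ref{line_l} and the premise \eqref{eq_bound_U_loc}) is routine once one notes $C\in[c_w,c_{\max}]$, and the dwell-time estimates carry over from Theorem~\ref{prop_fir_loc}.
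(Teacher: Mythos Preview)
Your argument is correct and follows the template the paper itself points to (``a combination of the proofs of Theorems~\ref{theo_iir} and~\ref{prop_fir_loc}''): dwell time from Line~\ref{line_fallback_loc}, forward invariance of $\mathcal{X}_{c_{\max}}$ via the saturation in \eqref{eq_iir_C_loc} together with Proposition~\ref{prop_hybrid_loc}, and a joint contraction of $V$ and $\eta$ relative to $c_w$ using $r_1+r_2\leq 1$. The only cosmetic difference is that the paper's route (by analogy with \eqref{eq_iir_iss_res}--\eqref{eq_iir_ind_eta} and \eqref{eq_fir_loc_ind}) carries a continuous-time bound $V(x(t,j))\leq c_w+e^{-\bar\epsilon t}\max\{V(x(0,0))-c_w,|\eta(0,0)|\}$ through the induction, whereas you package the same information as a discrete-time contraction $M_{j+1}\leq\bar\rho M_j$ of $M_j=\max\{(V_j-c_w)_+,(\eta_j-c_w)_+\}$ and recover the inter-sample estimate afterwards; both yield the required $\mathcal{KLL}$ bound.
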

The proof of Theorem~\ref{theo_iir_loc} is omitted for brevity. It follows from a combination of the proofs of Theorems~\ref{theo_iir} and \ref{prop_fir_loc}.
\subsubsection{Dynamic STC based on a reference function}
To modify the reference function-based dynamic STC mechanism from Subsection~\ref{subsec_ref} to ensure RAS of $\mathcal{R}$ with ROA $\mathcal{X}_{c_{\max}}$, the reference function needs to be adapted. To ensure that the system state stays in $\mathcal{X}_{c_{\max}}$ for all times, the maximum value of the reference function needs to be bounded. Moreover, instead of choosing the reference function such that it converges to $0$, it is beneficial to let it converge to $c_w$ in order to maximize the time between sampling instants. We focus subsequently on the specific reference function
\begin{equation*}
	\begin{split}
		&V_\refer(t,x(0,0)) \\
		\coloneqq& c_w + e^{-\epsilon_\refer t} \min\left\lbrace\max\left\lbrace\left(V(x(0,0)) - c_w\right),0 \right\rbrace,c_{\max}-c_w\right\rbrace,
	\end{split}
\end{equation*}
which includes both modifications. We assume that $\eta(0,0) = \min\left\lbrace\max\left\lbrace\left(V(x(0,0)) - c_w\right),0 \right\rbrace,c_{\max}-c_w\right\rbrace$. Then, the reference function can be implemented with $n_\eta = 1$ and \eqref{eq_S_ref} where $\Gamma(x,\eta)$ is defined by  Algorithm~\ref{algo_trig_window_loc} with 
\begin{equation}
	\label{eq_ref_C_loc}
	C(x,\eta) = c_w+\eta.
\end{equation}
Note that for these choices, $\Gamma(x,\eta)$ is chosen if possible such that \eqref{eq_ref_goal} holds not only in the nominal case, but for all disturbance signals that may occur. We obtain the following result for the modified mechanism. 
\begin{theo}
	\label{theo_ref_loc}
	Suppose  $w(t) \in \mathcal{W}$ for all $t\geq 0$. Assume there are $n_c$ parameters $c_l, l \in \left\lbrace 1,\dots,n_c\right\rbrace$, each with $n_\tpar$ different parameter sets $\epsilon_\newi, \gamma_\newi, L_\newi$, $i \in \left\lbrace1,\dots,n_\tpar\right\rbrace$, for which Assumption~2 holds
	with the same functions $V$ and $\alpha_w$.  Let $\epsilon_{1,l}>0$ and $\epsilon_{\newi} < 0$ for $i > 1$ and each $l$. Further, let $c_{\max} \geq c_w \geq \underset{l \in \left\lbrace 1,\dots,n_c\right\rbrace}{\max} \frac{\alpha_w\left(\abs{\bar w}\right)}{\epsilon_{1,l}}$ and $\eta(0,0) = \min\left\lbrace\max\left\lbrace\left(V(x(0,0)) - c_w\right),0 \right\rbrace,c_{\max}-c_w\right\rbrace$. Consider $\mathcal{H}_{STC}$ with $S(\eta,x)$ and $\Gamma(x,\eta)$ defined according to \eqref{eq_S_ref} and by Algorithm~\ref{algo_trig_window_loc}, $C(x,\eta)$ according to \eqref{eq_ref_C_loc} and some $\delta \in 
	\left(0,1\right)$. Then $\svar_{j+1}- \svar_j \geq t_{\min} = \delta T_{\max}\left(\gamma_{1,l},L_{1,l}+\frac{\epsilon_{1,l}}{2}\right)$ and the set $\mathcal{R}$ is RAS for $\mathcal{H}_{STC}$ with region of attraction $\mathcal{X}_{c_{\max}}$.
\end{theo}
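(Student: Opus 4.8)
The plan is to follow the proof of Theorem~\ref{theo_ref} while replacing its nominal bound by the disturbance-aware estimate of Section~\ref{sec_loc}; as for Theorem~\ref{theo_iir_loc}, this essentially combines the argument for the reference-function mechanism with the localization used for Theorem~\ref{prop_fir_loc}. Concretely, I would (i) establish a positive dwell time, (ii) write the dynamic variable $\eta$ in closed form, and (iii) run an induction on the sampling index that distinguishes the two outcomes of Algorithm~\ref{algo_trig_window_loc}: a \emph{regular} step, at which~\eqref{eq_cond_dec} is enforced for every admissible disturbance, and a \emph{fall-back} step, at which~\eqref{eq_eps1_loc} holds. For (i), Algorithm~\ref{algo_trig_window_loc} initializes $\bar h$ with the fall-back value $\delta T_{\max}(\gamma_{1,l},L_{1,l}+\frac{\epsilon_{1,l}}{2})$, which is positive since $\epsilon_{1,l}>0$, and only increases it afterwards; minimizing this value over the indices $l$ that can occur along a solution gives $\svar_{j+1}-\svar_j=\Gamma(x(\tvar_j),\eta(\tvar_j))\geq t_{\min}$. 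This uses that Line~\ref{line_l} can always pick a valid $l$, i.e.\ $\max\{V(x(\tvar_j)),C(x(\tvar_j),\eta(\tvar_j))\}\leq c_{\max}$, which I obtain together with the invariant below. For (ii), \eqref{eq_S_ref} and $\Gamma(x(\tvar_j),\eta(\tvar_j))=\svar_{j+1}-\svar_j$ give inductively $\eta(\tvar_j)=e^{-\epsilon_\refer\svar_j}\eta_0$; since $x(0,0)\in\mathcal{X}_{c_{\max}}$ the cap in the definition of $\eta(0,0)$ is inactive, so $\eta_0=\max\{V(x(0,0))-c_w,0\}$ and $C(x(\tvar_j),\eta(\tvar_j))=c_w+e^{-\epsilon_\refer\svar_j}\eta_0\in[c_w,c_{\max}]$.

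For (iii) I would show by induction on $j$ that $V(x(\tvar_j))\leq c_{\max}$ and that $p_j\coloneqq\max\{V(x(\tvar_j))-c_w,0\}$ decays; the base case is $x(0,0)\in\mathcal{X}_{c_{\max}}$. At the inductive step, the bound on $C$ and the hypothesis $V(x(\tvar_j))\leq c_{\max}$ make Line~\ref{line_l} well posed with $V(x(\tvar_j)),C(x(\tvar_j),\eta(\tvar_j))\leq c_l$. If a regular set $i\geq2$ is returned (so $C\geq V$ held in Line~\ref{line_for_start_loc}), then the choice in Line~\ref{line_hi_loc}, via the implication~\eqref{eq_ln_loc}$\Rightarrow$\eqref{eq_dec_gen_loc} and Proposition~\ref{prop_hybrid_loc} (whose hypothesis~\eqref{eq_bound_U_loc} follows from~\eqref{eq_dec_gen_loc} at $t=\svar_{j+1}$, which bounds its left-hand side by $e^{-\epsilon_\refer(\svar_{j+1}-\svar_j)}C(x(\tvar_j),\eta(\tvar_j))\leq c_l$), yields that~\eqref{eq_cond_dec} holds for all $w(t)\in\mathcal{W}$ on $[\svar_j,\svar_{j+1}]$; substituting $C(x(\tvar_j),\eta(\tvar_j))=c_w+e^{-\epsilon_\refer\svar_j}\eta_0$ and using $\epsilon_\refer>0$ gives $V(x(t,j+1))\leq c_{\max}$ and the \emph{state-independent} estimate $V(x(t,j+1))-c_w\leq e^{-\epsilon_\refer t}\eta_0$. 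If the fall-back is used, Proposition~\ref{prop_hybrid_loc} with $(\epsilon_{1,l},\gamma_{1,l},L_{1,l})$ applies --- its hypothesis~\eqref{eq_bound_U_loc} holds since $\frac{\alpha_w(\abs{\bar w})}{\epsilon_{1,l}}\leq c_w\leq c_l$ --- and gives~\eqref{eq_eps1_loc}, so $V(x(t,j+1))$ is a convex combination of $V(x(\tvar_j))\leq c_l$ and $\frac{\alpha_w(\abs{\bar w})}{\epsilon_{1,l}}\leq c_w$, hence $\leq c_{\max}$, and, since $c_w\geq\frac{\alpha_w(\abs{\bar w})}{\epsilon_{1,l}}$, $V(x(t,j+1))-c_w\leq e^{-\epsilon_{1,l}(t-\svar_j)}(V(x(\tvar_j))-c_w)$, i.e.\ $p_{j+1}\leq e^{-\epsilon_{1,l}(\svar_{j+1}-\svar_j)}p_j$. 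This closes the induction.

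To conclude, let $\underline\epsilon\coloneqq\min\{\min_{l}\epsilon_{1,l},\epsilon_\refer\}>0$. Chaining the two step estimates --- the regular step being a reset $p_{j+1}\leq e^{-\epsilon_\refer\svar_{j+1}}\eta_0$ that ignores the past, the fall-back step a contraction by $e^{-\epsilon_{1,l}(\svar_{j+1}-\svar_j)}$ with $\epsilon_{1,l}\geq\underline\epsilon$ --- gives $p_j\leq e^{-\underline\epsilon\svar_j}(p_0+\eta_0)$, and the same reasoning for intermediate times yields $V(x(t,j))-c_w\leq e^{-\underline\epsilon t}(p_0+\eta_0)$ on the whole solution domain. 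Since $\svar_j\geq j\,t_{\min}$, one has $e^{-\underline\epsilon t}\leq e^{\underline\epsilon t_{\min}/2}e^{-\underline\epsilon t/2}e^{-\underline\epsilon t_{\min}j/2}$ there, so $\beta(s,t,j)\coloneqq 2e^{\underline\epsilon t_{\min}/2}e^{-\underline\epsilon t/2}e^{-\underline\epsilon t_{\min}j/2}\,s$ is a class $\mathcal{K}\mathcal{L}\mathcal{L}$ function with $V(x(t,j))\leq c_w+\beta\big(\max\{V(x(0,0))-c_w,0\}+\abs{[e(0,0)^\top,\eta(0,0)^\top]^\top},t,j\big)$, which is~\eqref{eq_stab_bound_set}; the positive dwell time and the uniform bound $V(x(t,j))\leq c_{\max}$ (hence boundedness of $x$, $e$, $\eta$) also give completeness of solutions. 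Therefore $\mathcal{R}$ is RAS for $\mathcal{H}_{STC}$ with region of attraction $\mathcal{X}_{c_{\max}}$.

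The main obstacle I anticipate is the fall-back analysis and its coupling to the well-posedness of Algorithm~\ref{algo_trig_window_loc}. Verifying the hypotheses of Proposition~\ref{prop_hybrid_loc} at each jump forces the induction to carry $V(x(\tvar_j))\leq c_{\max}$ (so that Line~\ref{line_l} can pick a valid $l$), which is circular-looking but legitimate. More subtly, when $\epsilon_\refer>\min_{l}\epsilon_{1,l}$ a fall-back step cannot match the reference rate $e^{-\epsilon_\refer t}$, so $V(x(t,j))\leq V_\refer(t,x(0,0))$ need not hold exactly; the resolution is that a regular step re-establishes a bound independent of the past state, acting as a reset, so the slower but positive rate $\underline\epsilon$ is what survives. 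Assembling from the continuous-time exponential estimate a genuine $\mathcal{K}\mathcal{L}\mathcal{L}$ function --- one decaying also in $j$ --- is then routine given the dwell time.
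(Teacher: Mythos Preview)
Your proposal is correct and follows precisely the route the paper indicates: the proof of Theorem~\ref{theo_ref_loc} is omitted there with the remark that it combines the proofs of Theorems~\ref{theo_ref} and~\ref{prop_fir_loc}, and your argument is exactly that combination made explicit (closed-form $\eta(\tvar_j)=e^{-\epsilon_\refer\svar_j}\eta_0$ as in Theorem~\ref{theo_ref}, the regular/fall-back dichotomy and the invariant $V\leq c_{\max}$ carried through the induction as in Theorem~\ref{prop_fir_loc}, and the final $\mathcal{K}\mathcal{L}\mathcal{L}$ bound via the dwell time). Your handling of the rate mismatch when $\epsilon_\refer>\min_l\epsilon_{1,l}$ by passing to $\underline\epsilon$ is the same device the paper uses (the $\bar\epsilon$ in the proof of Theorem~\ref{prop_fir_loc}).
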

The proof of Theorem~\ref{theo_ref_loc} is omitted for brevity. It follows from a combination  of the proofs of Theorems~\ref{theo_ref} and \ref{prop_fir_loc}.
\section{Numerical Example}
\label{sec_ex}
In this section, we present two numerical examples to illustrate the dynamic STC mechanisms from Sections~\ref{sec_spec}~and~\ref{sec_loc}. 
\subsection{Example~1}
\label{sec_ex_1}
In this subsection, we illustrate the dynamic STC mechanisms from Section~\ref{sec_spec} with the nonlinear example system from \cite{postoyan2014tracking}. The example system is a perturbed single-link robot arm described by
\begin{equation}
	\begin{split}
		\dot x_1 =& x_2+w\\
		\dot x_2 =& -a \sin(x_1) + b \hat{u}
	\end{split}
\end{equation} 
with the static state feedback controller 
	$u = b^{-1}\left(a \sin({x}_1) - {x}_1 -{x}_2\right).$
We define $x = \begin{bmatrix}
	x_1 & x_2
\end{bmatrix}^\top$ and $e = \begin{bmatrix}
e_1 & e_2
\end{bmatrix}^\top = \begin{bmatrix}
	\hat{x}_1 - x_1 & \hat{x}_2 - x_2
\end{bmatrix}^\top$. Observe that 
\begin{equation*}
	\begin{split}
		&-a\left(\sin(x_1) - \sin(x_1+e_1)\right)\\
		=&2\cos\left(\frac{2x_1+e_1}{2} \sin\left(-\frac{e_1}{2}\right)\right) = \tilde{a} e_1
	\end{split}
\end{equation*}
for a varying parameter $\tilde{a} \in \left[-a,a\right]$, depending on $x_1$. Hence, we obtain 
\begin{equation*}
	\begin{split}
		f(x,e,w) =& \begin{bmatrix}
			x_1 & -x_1-x_2+(\tilde{a} -1)e_1-e_2+w
		\end{bmatrix}^\top\\
	&=\begin{bmatrix}
		0 & 1\\
		-1 & -1
	\end{bmatrix} x 
	+\begin{bmatrix}
		0 & 0\\
		\tilde{a}-1 & 1
	\end{bmatrix} e+ \begin{bmatrix}
	0\\ 1
\end{bmatrix} w.
	\end{split}
\end{equation*}
For $V(x) = x^\top P x$, $\alpha_w(\abs{w}) = \theta^2 w^2$ for $\theta > 0$ and any fixed $\tilde{a}\in\left[-a, a\right]$, the LMI-based approach from \cite[Section~4]{hertneck20simple} to verify Assumption~\ref{asum_hybrid_lyap} can be easily adapted to the setup from this article. 
In particular, \eqref{eq_w_est} is convex in $\tilde{a}$ and can thus be verified for all $\tilde{a}\in\left[-a,a\right]$ by taking the maximum value of $L$ for the extremal values $\tilde{a} = -a$ and $\tilde{a} = a$. Inequality~\eqref{eq_v_desc_hybrid} can be factorized such that the result is convex in $\tilde{a}$ and $\tilde{a}^2$. Then, $\gamma$ can be minimized with one LMI constraint similar as in \cite[Section~4]{hertneck20simple} for each combination of the extremal values for $\tilde{a}$ and $\tilde{a}^2$.

Subsequently, we consider $a = \frac{9.81}{2}$ and $b = 2$. We have computed $n_\tpar = 21$ different parameter sets that satisfy Assumption~\ref{asum_hybrid_lyap_loc} with $\epsilon_i \in \left[-20,0.01\right]$ and $\theta = 10$. The maximum sampling interval, for which ISS can be guaranteed for periodic sampling, is \SI{0.175}{\second}. It serves also as fall-back strategy for the dynamic STC mechanisms. 
\begin{figure}
	\centering
	\resizebox{.92\linewidth}{!}{
		\input{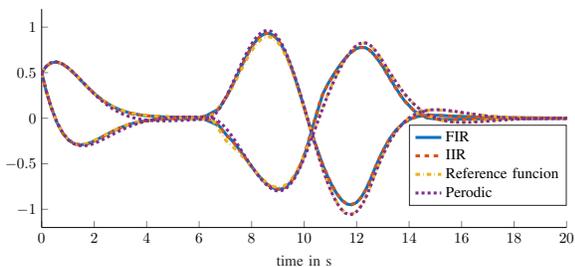}
	}
	\caption{State trajectories for the dynamic STC mechanisms from Section~\ref{sec_spec} and periodic sampling for a simulation with $x(0) = \left[0.5,0.5\right]^\top$.
	}
	\label{fig_states}
\end{figure}
\begin{figure}
	\centering
	\resizebox{.92\linewidth}{!}{
		\input{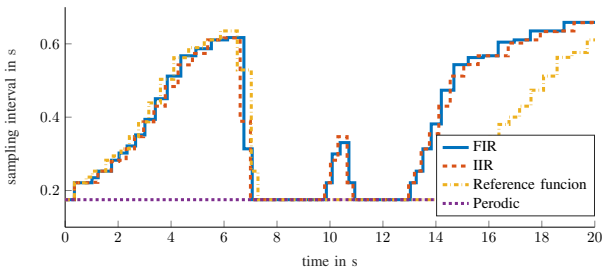}}
	\caption{Comparison of the sampling intervals for dynamic STC mechanisms from Section~\ref{sec_spec} and periodic sampling for a simulation with $x(0) = \left[0.5,0.5\right]^\top$.
	}
	\label{fig_inters}
\end{figure}

In Figures~\ref{fig_states} and \ref{fig_inters}, state trajectories and the evolution of the sampling intervals for simulations of the three dynamic STC mechanisms from Section~\ref{sec_spec}  and of periodic sampling with sampling period  \SI{0.175}{\second} are plotted. The initial condition for the trajectories is $x(0) = \left[0.5,0.5\right]^\top$. The disturbance signal is $w(t) = \sin(t)$ for $ 2\pi\SI{}{\second}\leq t \leq  4\pi\SI{}{\second}$ and $w(t) = 0$ else.  For the three dynamic STC mechanisms, we have used $\epsilon_\refer = 0.2$. For the FIR mechanism from Subsection~\ref{subsec_fir}, we have in addition chosen $n_\eta = 20$ and for the IIR mechanism from Subsection~\ref{subsec_iir}, we have chosen $r_1 = 0.9$ and $r_2 = 0.1$. 

It can be seen that significantly larger sampling intervals are achieved by all dynamic STC mechanisms in comparison to periodic sampling. Nevertheless, the state trajectories are qualitatively similar. All three dynamic STC mechanisms reduce the sampling intervals as soon as the disturbance drives the system states away from the origin. When the influence of the disturbance reduces, the sampling intervals are increased again. When comparing the three dynamic STC mechanisms, the FIR and IIR mechanisms show similar behavior. The reference function mechanism takes longer to enlarge the sampling intervals after the influence of the disturbance on the system state has diminished.

Note that all the dynamic STC mechanisms from Section~\ref{sec_spec} aim for stabilizing the origin, which may be disadvantageous when this is impossible due to the disturbance, since then the sampling interval may even be reduced to the fall-back strategy. This potential disadvantage is overcome by aiming to stabilize an invariant set containing the origin, as it is done by the dynamic STC mechanisms from Section~\ref{sec_loc}.
\subsection{Example~2}
\color{black}
In this subsection, we use the example from \cite{tiberi2013simple} to illustrate the dynamic STC mechanisms from Section~\ref{sec_loc}. The example system is given by
\begin{equation*}
	\begin{split}
		\dot{x}_1 &= -x_1\\
		\dot{x}_2 &= \left(x_1^2+x_2^2\right)x_2+\hat{u}+w
	\end{split}
\end{equation*}
with the static state feedback controller
	$u = -(1+x_1^2+x_2^2)x_2.$
We assume that $\abs{w(t)} \leq 0.4$ for all $t$. Using again $x = \begin{bmatrix}
	x_1 & x_2
\end{bmatrix}^\top$ and $e = \begin{bmatrix}
	e_1 & e_2
\end{bmatrix}^\top = \begin{bmatrix}
	\hat{x}_1 - x_1 & \hat{x}_2 - x_2
\end{bmatrix}^\top$, we obtain
\begin{equation}
	\label{eq_f_ex_2}
	\begin{split}	
	&f(x,e,w)\\
	 =& \begin{bmatrix}
			-1& 0\\
			0& -1
		\end{bmatrix} x 
	+ \begin{bmatrix}
			0 & 0\\
			a_1(x,e) & a_2(x,e)-1 
		\end{bmatrix} e 
		+\begin{bmatrix}
			0\\1
		\end{bmatrix}w
	\end{split}
\end{equation}
with $a_1(x,e) = -2x_1x_2-x_2e_1$ and $a_2(x,e) = -2x_2^2-x_2e_2-\hat{x}_1^2-\hat{x}_2^2$.
We consider $V(x) = 1.5x^\top x$. Note that for any $c_l \in \mathbb{R}$, all $x\in\mathcal{X}_{c_{l}}$ satisfy $\abs{x} \leq \sqrt{\frac{c_l}{28}}$ and all $e = \hat{x}-x$ with $\hat{x}\in\mathcal{X}_{c_{l}}$ and $x\in\mathcal{X}_{c_{l}}$ satisfy $\abs{e} \leq 2 \sqrt{\frac{c_l}{28}}$. Thus \eqref{eq_f_ex_2} can be rewritten 
as
$
	f(x,e,w) = \begin{bmatrix}
		-1& 0\\
		0& -1
	\end{bmatrix} x 
	+ \begin{bmatrix}
		0 & 0\\
		\tilde{a}_1 & \tilde{a}_2 
	\end{bmatrix} e 
	+\begin{bmatrix}
		0\\1
	\end{bmatrix}w
$
for varying parameters $\tilde{a}_1 \in \left[-\frac{c_l}{7},\frac{c_l}{7}\right]$ and $\tilde{a}_2 \in \left[-\frac{3c_l}{14},\frac{3c_l}{14}\right]$, depending on $x$ and $e$. Thus, Assumption~\ref{asum_hybrid_lyap_loc} can be verified for this example for any $c_l$ and any $x\in\mathcal{X}_{c_l}$ and $\hat{x}\in\mathcal{X}_{c_l}$ by using the LMI-based approach from \cite[Section~4]{hertneck20simple} for all combinations of extremal values of $\tilde{a}_1$ and $\tilde{a}_2$.

 Similar as in \cite{tiberi2013simple}, we aim for stabilizing the set\footnote{Note that the ultimate bound in \cite{tiberi2013simple} is stated wrongly to be $0.325$.} $\left\lbrace x|\abs{x} \leq 0.65 \right\rbrace$ for all initial conditions $x(0)$ that satisfy $\abs{x(0)} \leq 5$. For our choice of $V(x)$, this translates to $c_w = 0.64$ and $c_{\max} = 37.87$ for $\mathcal{R}$ and $\mathcal{X}_{c_{\max}}$. 
We have selected $n_l = 40$ variables $c_l, l\in\left\lbrace 1,\dots,40\right\rbrace$ and  for each $c_l$ computed $n_\tpar = 20$ different parameter sets that satisfy Assumption~\ref{asum_hybrid_lyap_loc} with $\epsilon_{i,l} \in \left[-15,1\right]$ and $\theta = 2$. 

\begin{figure}
	\centering
	\resizebox{.92\linewidth}{!}{
		\input{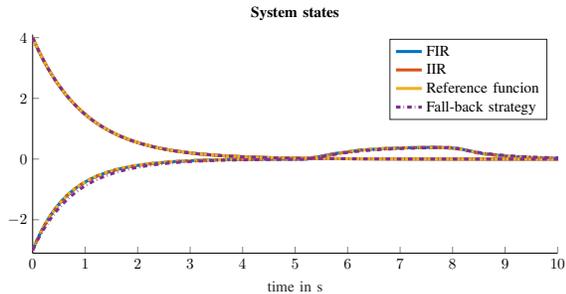}
	}
	\caption{State trajectories for the dynamic STC mechanisms from Section~\ref{sec_loc} and periodic sampling for a simulation with $x(0) = \left[4,-3\right]^\top$.
	}
	\label{fig_states_loc}
\end{figure}
\begin{figure}
	\centering
	\resizebox{.92\linewidth}{!}{
		\input{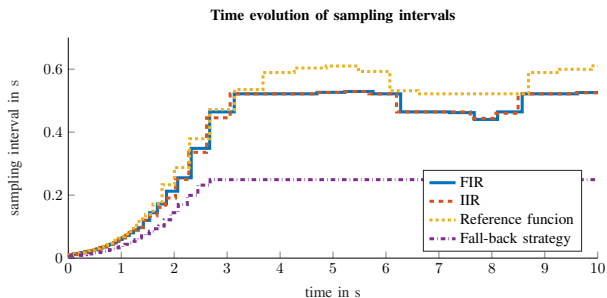}}
	\caption{Comparison of the sampling intervals for dynamic STC mechanisms from Section~\ref{sec_loc} and periodic sampling for a simulation with $x(0) = \left[4,-3\right]^\top$.
	}
	\label{fig_inters_loc}
\end{figure} 

In Figures~\ref{fig_states_loc} and \ref{fig_inters_loc}, state trajectories and the evolution of the sampling intervals for simulations of the three dynamic STC mechanisms from Section~\ref{sec_spec}  and of using always the fall-back strategy as sampling interval for the current sublevel set of $V(x)$, i.e., the largest sampling period for which we could guarantee stability for periodic sampling for this sublevel set, are plotted. The initial condition for the trajectories is $x(0) = \left[4,-3\right]^\top$. The disturbance signal is $w(t) = 0.4$ for $\SI{5.3}{\second}\leq t \leq\SI{8}{\second}$ and $w(t) = 0$ else.  For the three dynamic STC mechanisms, we have used $\epsilon_\refer = 1$. For the FIR mechanism from Subsection~\ref{subsec_fir}, we have in addition chosen $n_\eta = 20$ and for the IIR mechanism from Subsection~\ref{subsec_iir}, we have chosen $r_1 = 0.9$ and $r_2 = 0.1$. 
It can be seen that significantly larger sampling intervals can be achieved for all three dynamic STC mechanisms in comparison to using the fall-back strategy. Moreover, the disturbance does not force the mechanisms to use the sampling interval of the fall-back strategy, since the mechanisms do not longer aim to stabilize the origin, which would be impossible due to the disturbance. 

In Table~\ref{tab_comp}, a comparison of the required number of sampling instants for the three dynamic STC mechanisms from Section~\ref{sec_loc} is given. It can be seen that all three mechanisms lead to a comparable number of sampling instants. When comparing the dynamic STC mechanisms to the static STC mechanism from \cite{tiberi2013simple} it can be seen, that the dynamic STC mechanisms require significantly less sampling instants.

\begin{table}
	\caption{Number of sampling instants for a simulation with $x_0 = \left[4,-3\right]^\top$.}
	\centering
	\begin{tabular}{ |c|c|c|c| } 
		\hline
		FIR & IIR & Reference Function& \cite{tiberi2013simple} \\\hline 
		73 & 73 & 71 & 12907 \\ 
		\hline
	\end{tabular}
	\label{tab_comp}
\end{table} 

Note that an additional comparison of a preliminary version of the dynamic STC mechanism based on the FIR filter to the static STC mechanism from \cite{delimpaltadakis2020isochronous} can be found in \cite{hertneck21dynamic}.

\section{Conclusion}
\label{sec_conc}
This article showed how information about the past system behavior can be exploited to increase sampling intervals for nonlinear self-triggered control. We presented a general framework to encode this past behavior in a dynamic variable.
The general framework allowed us to design different particular STC mechanisms and to study ISS and RAS of the resulting systems using hybrid Lyapunov function techniques.
The ISS variant of the framework has the advantage that no knowledge of the disturbance signal is required. 
 If a bound on the disturbance is known, then additional benefits can be obtained using the RAS variant of the framework. For this variant, the main assumption needs then to hold only locally and less frequent triggering may be possible, since a set with size depending on the disturbance bound is stabilized. Moreover for the RAS variant, the parameters of the STC mechanism can be adapted online depending on the actual sublevel set 
 which the system state is located in. Both variants were extensively studied in numerical examples.
There are still some open points for future research. Currently, information of the entire plant state is required for the STC framework, which may be restrictive. Therefore, modifying the framework to support output feedback, e.g., by using an observer for the plant state, would be beneficial. Moreover, in many NCS setups, sensors are spatially distributed and only one sensor can transmit at a time. Extending the framework to such a setup, e.g., by considering a transmission protocol, would make it applicable to a wider range of NCS. Finally, it would be interesting to extend existing static nonlinear STC approaches from the literature, such as those from \cite{tiberi2013simple,delimpaltadakis2020isochronous} such that they incorporate past information of the plant state
 as well.

	\bibliographystyle{IEEEtran}   
	\bibliography{../../../Literatur/literature}
	
	\appendix
	\subsection{Proofs of main results}
\subsubsection{Proof of Proposition~\ref{prop_hybrid}}
\label{proof_prop_hyb}
	This proof follows the same lines as the proof of \cite[Proposition~12]{hertneck20stability}, but is adapted to the setup of this article.
	Recall from \cite{carnevale2007lyapunov} that $\phi(\tau) \in \left[\lambda, \lambda^{-1}\right]$ for all $\tau \in \left[0, \tilde{T}_{\max} \right]$, where $\tilde{T}_{\max} = \tilde{T}_{\max}(\lambda,\gamma,\lvar )$ with
	\begin{equation*}
	\tilde{T}_{\max}(\lambda,\gamma,\lvar ) \coloneqq \begin{cases}\vspace{1mm}
	\frac{1}{\lvar r} \mathrm{arctan}\left(\frac{r(1-\lambda)}{2 \frac{\lambda}{1+\lambda} \left(\frac{\gamma}{\lvar }-1\right)+1+\lambda}\right) & \gamma > \lvar \\ \vspace{1mm}
	\frac{1}{\lvar } \frac{1-\lambda}{1+\lambda} & \gamma = \lvar \\
	\frac{1}{\lvar r} \mathrm{arctanh}\left(\frac{r(1-\lambda)}{2 \frac{\lambda}{1+\lambda} \left(\frac{\gamma}{\lvar }-1\right)+1+\lambda}\right) &\gamma < \lvar 
	\end{cases}
	\end{equation*}
	and  $r$ defined by \eqref{eq_def_r}.
	For any $\auxvar(s_j^+) < T_{\max} (\gamma,\lvar )$, there is a $\lambda \in \left(0,1\right)$ such that $\auxvar(s_j^+)  = \tilde{T}_{\max}(\lambda,\gamma,\lvar )$ (cf. \cite{nesic2009explicit}).
	We observe with Assumption~\ref{asum_hybrid_lyap} that for $\tau \in \left[0,\auxvar(s_j^+)\right]$, all $w$ and almost all
	 $(x,e)$
	\begin{equation*}
	\begin{split}
	&\left\langle \nabla U(\xi), F(\xi,w) \right\rangle\\
	\leq&-\epsilon V(x) -H^2(x) + \gamma^2 W^2(e) + \alpha_w\left(\abs{w}\right)  \\
	&+2\gamma\phi(\tau)W(e)(L W(e) + H(x))\\
	& - \gamma W^2(e) (2\lvar \phi(\tau)+\gamma(\phi^2(\tau)+1))\\
	\leq&-\epsilon V(x) -(H(x) -\gamma\phi(\tau)W(e))^2\\ 
	&+ 2\gamma\phi(\tau)W^2(e)(L-\lvar )+ \alpha_w\left(\abs{w}\right)\\
	\leq& -\epsilon V(x) + 2\gamma\phi(\tau)W^2(e)(L-\lvar )+ \alpha_w\left(\abs{w}\right).
	\end{split}
	\end{equation*} 
	Thus, we obtain %
	\begin{equation*}
	\label{eq_dec_u}
	\begin{split}
	&\frac{d}{dt} U(\xi(\svar,j+1))\\ 
	\leq & \max \left\lbrace -\epsilon , 2(L-\lvar) \right\rbrace U(\xi(t,j+1))+\alpha_w\left(\abs{w}\right) 
	\end{split}	
	\end{equation*} 
	and hence, due to the comparison Lemma (cf. \cite[p. 102]{khalil2002nonlinear}) and with $U(\xi(s_j^+)) = V(x(s_j^+))=V(x(s_j))$ and $V(x(t,j+1)) \leq U(\xi(t,j))$ since $\phi(\tau) > 0$ for all $\tau \in\left[0,T_{\max}(\gamma,\Lambda)\right)$, we obtain for $\svar_j \leq t \leq \svar_j + \auxvar(\tvar_j^+)$
	\begin{align*}
	 V(x(t,j+1)) \leq&U(\xi(\svar,j+1))\\ \nonumber 
	\leq& e^{ \max \left\lbrace -\epsilon, 2(L - \lvar ) \right\rbrace (t-\svar_j)}U(\xi(\tvar_j^+))\nonumber \\
	&+ \int_{t_j}^{t} e^{\max \left\lbrace -\epsilon, 2(L - \lvar ) \right\rbrace (t-\tau)}
	\alpha_w\left(\abs{w}\right) d\tau \nonumber\\
	\leq& e^{\max \left\lbrace -\epsilon, 2(L - \lvar ) \right\rbrace (t-\svar_j)}V(x(\tvar_j))\nonumber \\
	&+ \int_{t_j}^{t} e^{\max \left\lbrace -\epsilon, 2(L - \lvar ) \right\rbrace (t-\tau)}\alpha_w\left(\abs{w}\right)   d\tau.  \hfill\hfill \qed\nonumber
	\end{align*}

\subsubsection{Proof of Theorem~\ref{prop_fir}}
\label{proof_prop_fir}
	Recall that jumps of $\mathcal{H}_{STC}$ occur exactly at sampling instants that are described by $\tvar_j \coloneqq (\svar_j,j)$ and by $\tvar_j^+ \coloneqq (\svar_j,j+1)$. Therefore, it holds that $e(\tvar_j^+) = 0, \eta(\tvar_j^+) = S(\eta(\tvar_j),x(\tvar_j))$ with $S(\eta,x)$ defined by \eqref{eq_S_window} and that $\auxvar(\tvar_j^+)$ is the output $\Gamma(x(\tvar_j),\eta(\tvar_j))$ of Algorithm~\ref{algo_trig_window} for $C(x(\tvar_j),\eta(\tvar_j))$ defined by \eqref{eq_fir_C}. 
	
	Obviously, $\bar h \geq  \delta T_{\max}\left(\gamma_1,L_1+\frac{\epsilon_1}{2}\right) = t_{\min}$ in Algorithm~\ref{algo_trig_window} and thus, $\svar_{j+1} - \svar_j = \Gamma(x(\tvar_j),\eta(\tvar_j)) \geq t_{\min}$ holds for any $j \in \mathbb{N}_0$. Moreover,  $\svar_{j+1} - \svar_j \leq t_{\max} \coloneqq \underset{i \in \left\lbrace 1,\dots,n_\tpar\right\rbrace}{\max} \delta T_{\max}(\gamma_i,L_i+\frac{\epsilon_i}{2})
	$ holds for any $j \in \mathbb{N}_0$ when Algorithm~\ref{algo_trig_window} terminates due to the updated of $\bar h$ in the algorithm. 
	
	Thus, there is for each $j \in \mathbb{N}_0$ an $\ivar{j} \in \left\lbrace 1,\dots,n_\tpar\right\rbrace$, such that $\svar_{j+1} - \svar_j \leq \delta T_{\max}\left(\gamma_\ivar{j},L_\ivar{j}+\frac{\epsilon_\ivar{j}}{2}\right)$.  Proposition~\ref{prop_hybrid} thus implies
	for $\svar_j \leq t \leq \svar_{j+1}$ with $\Lambda_\ivar{j} = L_\ivar{j}+\frac{\epsilon_\ivar{j}}{2}$ that 
	\begin{equation}
	\label{eq_V_dec_fir}
	\begin{split}
	&V(x(\svar,j+1)) \leq U_\ivar{j}(\xi(t,j+1))	\\
	\leq 	&	e^{ -\epsilon_\ivar{j} (t-\svar_j)}V(x(\tvar_j)) + \int_{\svar_j}^{t} e^{ -\epsilon_\ivar{j} (t-\tau)} \alpha_w\left(\abs{w(\tau)}\right) d\tau\\
	\leq& e^{ -\epsilon_\ivar{j} (t-\svar_j)}V(x(\tvar_j)) 
	+ \max \left\lbrace e^{\epsilon_\refer t_{\max}},e^{\left(-\epsilon_i+\epsilon_\refer\right) t_{\max}}
	\right\rbrace \\
	&\cdot	\int_{\svar_j}^{t} e^{-\epsilon_\refer (t-\tau)} \alpha_w\left(\abs{w(\tau)}\right) d\tau\\
	\leq&  e^{ -\epsilon_\ivar{j} (t-\svar_j)}V(x(\tvar_j)) +\maxvar  \int_{\svar_j}^{t}e^{-\epsilon_\refer (t-\tau)} \alpha_w\left(\abs{w(\tau)}\right) d\tau,
	\end{split}
	\end{equation}
	where $U_\ivar{j}(\xi)$ is the function according to \eqref{eq_def_u} for the parameters $\gamma_\ivar{j}$ and $\Lambda_\ivar{j}$, holds for	 
	\begin{equation}
	\label{eq_def_maxvar}
	\begin{split}
	&\maxvar 
	\coloneqq \underset{i \in \left\lbrace 1,\dots,n_\tpar\right\rbrace}{\max} \left\lbrace \max \left\lbrace e^{\epsilon_\refer t_{\max}},e^{\left(-\epsilon_i+\epsilon_\refer\right) t_{\max}}
	\right\rbrace\right\rbrace.
	\end{split}			
	\end{equation}
	
	 Next, we show by induction that
	 \begin{equation}
	 \label{eq_fir_iss_res}
	 \begin{split}
	 V(x(t,j))
	 \leq& e^{-\bar\epsilon t}  \max \left\lbrace V(x(0,0)), \abs{\eta(0,0)} \right\rbrace\\
	 &+ \maxvar \int_{0}^{t} e^{-\bar \epsilon(t-\tau)} \alpha_w(\abs{w(\tau)}) d\tau		
	 \end{split}
	 \end{equation}
	 holds with $\bar \epsilon = \min \left\lbrace \epsilon_1,\epsilon_\refer \right\rbrace$ for all $(t,j) \in \text{dom}~\xi$. It trivially holds for $t = j = 0$. Further, suppose it holds  for all $\tvar_j$ with $j \leq \barj$ for some $\barj \in \mathbb{N}$. 
	 
	 Plugging \eqref{eq_fir_iss_res} for $(t,j) = \tvar_{\barj-m+k}$ into \eqref{eq_fir_eta}, we obtain for $m-1 \geq k \geq m-\barj$ that	 
	 \begin{equation}
	 \label{eq_eta_fir_res}
	 \begin{split}
	 &\eta_k(s_\barj)\\ 
	 \leq& e^{-\bar\epsilon \svar_\barj}  \max \left\lbrace V(x(0,0)), \abs{\eta(0,0)} \right\rbrace +  e^{-\epsilon_\refer(\svar_\barj-\svar_{\barj-m+k})}\maxvar\\
	 &\cdot	 \int_{0}^{\svar_{\barj-m+k}} e^{-\bar\epsilon(\svar_{\barj-m+k}-\tau)} \alpha_w\left(\abs{w(\tau)}\right) d\tau\\
	 \leq &  e^{-\bar\epsilon \svar_{\barj}} \max \left\lbrace V(x(0,0)), \abs{\eta(0,0)} \right\rbrace\\
	 &+\maxvar\int_{0}^{\svar_{\barj-1}} e^{-\bar\epsilon(\svar_{\barj}-\tau)} \alpha_w\left(\abs{w(\tau)}\right) d\tau.
	 \end{split} 
	 \end{equation}
	 For $k < m-\barj$, we obtain due to the update of $\eta$ according to \eqref{eq_S_window} that
	 \begin{equation}
	 \label{eq_eta_fir_res2}
	 	\eta_k(\tvar_{\barj}) = e^{-\epsilon_\refer\svar_{\barj}} \eta_{k+{\barj}}(0,0)
	 \end{equation}
	 holds.  Combining \eqref{eq_fir_iss_res}, \eqref{eq_eta_fir_res} and \eqref{eq_eta_fir_res2}, we can conclude that
	 \begin{equation}
		 \label{eq_eta_fir_res3}
		 \begin{split}
			 &\frac{1}{m} \left(V(x(\tvar_{\barj}))+\sum_{k=1}^{m-1} \eta_k(\tvar_{\barj})\right)\\
			 \leq& e^{-\bar\epsilon \svar_\barj} \max \left\lbrace V(x(0,0)), \abs{\eta(0,0)} \right\rbrace\\
			  &+ \maxvar\int_{0}^{\svar_\barj} e^{-\bar\epsilon(\svar_\barj-\tau)} \alpha_w\left(\abs{w(\tau)}\right) d\tau .
		 \end{split}		
	 \end{equation} 
	 Recall from Subsection~\ref{sec_def_algo} that Algorithm~\ref{algo_trig_window} ensures for $C(x,\eta)$ according to \eqref{eq_fir_C} either that 
 	\begin{equation}
 	\label{eq_tr_fir_1}
 	\begin{split}	
	 	&e^{ -\epsilon_\ivar{\barj}  \left(t-\svar_\barj\right) }V(x(\tvar_j))\\
	 	 	\leq& e^{-\epsilon_\refer \left(t-\svar_\barj\right)}  \frac{1}{m}\left(V(x(\tvar_j))+\sum_{k=1}^{m-1} \eta_k(\tvar_j)\right)\\
	 	\overset{\eqref{eq_eta_fir_res3}}{\leq} & e^{-\bar\epsilon t} \max \left\lbrace V(x(0,0)), \abs{\eta(0,0)} \right\rbrace\\
	 	&+ \maxvar\int_{0}^{\svar_\barj} e^{-\bar\epsilon(t-\tau)} \alpha_w\left(\abs{w(\tau)}\right) d\tau 
	 	\end{split}
 	\end{equation}
 	holds for $\svar_\barj \leq t \leq \svar_{\barj+1}$ for some $i_\barj$ with $\svar_{j+1} - \svar_j \leq \delta T_{\max}\left(\gamma_\ivar{j},L_\ivar{j}+\frac{\epsilon_\ivar{j}}{2}\right)$ or, if it uses the fall-back strategy, i.e., if $\ivar{\barj} = 1$, that
 	\begin{equation}
	 	\begin{split}	
		   &\hphantom{\leq}~\,e^{-\epsilon_1 \left(t-\svar_\barj\right)} V(x(\tvar_j))\\
		 	\overset{\eqref{eq_fir_iss_res} \text{ for } \tvar_\barj}&{\leq} e^{-\bar \epsilon t}  \max \left\lbrace V(x(0,0)), \abs{\eta(0,0)} \right\rbrace\\
		 	&\phantom{\leq}+\maxvar \int_{0}^{\svar_\barj} e^{-\bar \epsilon(t-\tau)}\alpha_w(\abs{w(\tau)}) d\tau		
	 	\end{split}	
	 	\label{eq_tr_fir_2}
 	\end{equation}
 	holds for $\svar_\barj \leq t \leq \svar_{\barj+1}$. Using \eqref{eq_tr_fir_1} and \eqref{eq_tr_fir_2} in \eqref{eq_V_dec_fir}, it follows that 
 	\begin{equation*}
 		\begin{split}
	 		V(x(\svar,\barj+1))~~&\\
	 		 \leq U_\ivar{\barj}(\xi(t,\barj+1))& \leq e^{-\bar\epsilon t} \max \left\lbrace V(x(0,0)), \abs{\eta(0,0)}\right\rbrace\\
	 		 &\phantom{\leq}+\maxvar \int_{0}^{t} e^{-\bar \epsilon(t-\tau)}\alpha_w(\abs{w(\tau)}) d\tau 
 		\end{split}
 	\end{equation*} 	
 	holds for $\svar_\barj \leq t \leq \svar_{\barj+1}$ and thus also for $\tvar_{\barj+1}$. 
 	Thus \eqref{eq_fir_iss_res} holds by induction for all $(t,j)\in \text{dom}~\xi$.
 	
 	Further, since $\phi \in \left[\lambda,\lambda^{-1}\right]$ for some $\lambda\in \left(0,1\right)$ in the definition of $U_{i_j}$, there exists a function $\phi_1$ such that for all $\xi$ and all ${i_j} \in \left\lbrace 1,\dots,n_\tpar\right\rbrace$, 
 		$\phi_1\left(\abs{\begin{bmatrix}
 			x\\
 			e
 			\end{bmatrix}}\right) \leq U_{i_j}(\xi) 
		$ holds, which implies
 		\begin{equation*}
	 		\begin{split}
	 			\abs{\begin{bmatrix}
	 				x(t,j)\\
	 				e(t,j)
 				\end{bmatrix}} \leq& \phi_1^{-1}\left( U_{\ivar{j}}(\xi(t,j))\right)\\
 			\leq& \phi_1^{-1}\left( 2 e^{-\bar\epsilon t} \max \left\lbrace V(x(0,0)), \abs{\eta(0,0)}\right\rbrace\right)\\
 			&+  \phi_1^{-1}\left(2 \maxvar \int_{0}^{t} e^{-\bar \epsilon(t-\tau)}\alpha_w(\abs{w(\tau)}) d\tau\right)\\
 			\leq&  \phi_1^{-1}\left( 2 e^{-\bar\epsilon \left(\frac{t+j t_{\min}}{2}\right)} \max \left\lbrace V(x(0,0)), \abs{\eta(0,0)}\right\rbrace\right)\\
 			 &+ \phi_1^{-1}\left(\frac{2\maxvar}{\bar\epsilon}\alpha_w\left(\norm{w}_\infty\right) \right).
	 		\end{split}
 		\end{equation*}
 		Here, we used that since $t_{j+1} - t_j \geq t_{\min}$, $t \geq \frac{t+j t_{\min}}{2}$ holds for all $(t,j) \in \text{dom}~\xi$.
 		
 		Moreover, it holds for $\svar_{j-1} \leq t \leq \svar_j$ due to the update of $\eta$ at sampling instants and due to \eqref{eq_eta_fir_res} and \eqref{eq_eta_fir_res2} that
 		\begin{equation}
	 		\begin{split}
		 		&\abs{\eta(t,j)} = \sum_{k = 1}^{m-1}\abs{\eta_k(\tvar_j)}\\
		 		 \leq& m  e^{-\bar\epsilon \svar_{\barj}} \max \left\lbrace V(x(0,0)), \abs{\eta(0,0)} \right\rbrace\\
		 		&+m\maxvar\int_{0}^{\svar_{\barj-1}} e^{-\bar\epsilon(\svar_{\barj}-\tau)} \alpha_w\left(\abs{w(\tau)}\right) d\tau\\
		 		\leq & m  e^{-\bar\epsilon \left(\frac{t+j t_{\min}}{2}\right)} \max \left\lbrace V(x(0,0)), \abs{\eta(0,0)} \right\rbrace\\
		 		&+\frac{m\maxvar}{\bar\epsilon}\alpha_w\left(\norm{w}_\infty\right).
	 		\end{split}			
 		\end{equation}
 	 We thus obtain
 		\begin{equation*}
 		\begin{split}
	 		&\abs{\begin{bmatrix}
 			x(t,j)\\
 			e(t,j)\\
 			\eta(t,j)
 			\end{bmatrix}} \leq \abs{\begin{bmatrix}
 			x(t,j)\\
 			e(t,j)
 			\end{bmatrix}} + \abs{\begin{bmatrix}
 			\eta(t,j)
 			\end{bmatrix}}\\
 		 \leq& \phi_1^{-1}\left( 2 e^{-\bar\epsilon \left(\frac{t+j t_{\min}}{2}\right)} \max \left\lbrace V(x(0,0)), \abs{\eta(0,0)}\right\rbrace\right)\\ 
 		 &+me^{-\bar\epsilon \left(\frac{t+j t_{\min}}{2}\right)} \max \left\lbrace V(x(0,0)), \abs{\eta(0,0)} \right\rbrace\\
 		 &+ \phi_1^{-1}\left(\frac{2\maxvar}{\bar\epsilon}\alpha_w\left(\norm{w}_\infty\right) \right) + \frac{m\maxvar}{\bar\epsilon}\alpha_w\left(\norm{w}_\infty\right),
 		\end{split}
 		\end{equation*}
 		 which proves ISS of $\mathcal{H}_{STC}$ trivially. \hfill\hfill\qed
\subsubsection{Proof of Theorem~\ref{theo_iir}}
\label{proof_theo_iir}
This proof follows the same lines as the proof of Theorem~\ref{prop_fir}. We thus sketch here only the differences. Similar as in the proof of Theorem~\ref{prop_fir}, we obtain that 		
	$t_{\min} \leq \svar_{j+1} - \svar_j \leq t_{\max} \coloneqq \underset{i \in \left\lbrace 1,\dots,n_\tpar\right\rbrace}{\max} \delta T_{\max}(\gamma_i,L_i+\frac{\epsilon_i}{2})
	$ holds and that there is for each $j \in \mathbb{N}_0$ an $\ivar{j} \in \left\lbrace 1,\dots,n_\tpar\right\rbrace$, such that $\svar_{j+1} - \svar_j \leq \delta T_{\max}\left(\gamma_\ivar{j},L_\ivar{j}+\frac{\epsilon_\ivar{j}}{2}\right)$.  Proposition~\ref{prop_hybrid} thus implies
	for $\svar_j \leq t \leq \svar_{j+1}$ with $\Lambda_\ivar{j} = L_\ivar{j}+\frac{\epsilon_\ivar{j}}{2}$ that 
	\begin{equation}
	\label{eq_V_dec_iir}
	\begin{split}
	&V(x(\svar,j+1))\leq U_\ivar{j}(\xi(t,j+1))	\\
	\leq&  e^{ -\epsilon_\ivar{j} (t-\svar_j)}V(x(\tvar_j)) +\maxvar  \int_{\svar_j}^{t}e^{-\epsilon_\refer (t-\tau)} \alpha_w\left(\abs{w(\tau)}\right) d\tau,
	\end{split}
	\end{equation}
	where $U_\ivar{j}(\xi)$ is the function according to \eqref{eq_def_u} for the parameters $\gamma_\ivar{j}$ and $\Lambda_\ivar{j}$, holds for $\maxvar$ according to \eqref{eq_def_maxvar}. 
	
	The next step is, similar as in the proof of Theorem~\ref{theo_iir}, to show by induction that, for all $(t,j) \in \text{dom}~\xi$,
	\begin{equation}
	\label{eq_iir_iss_res}
	\begin{split}
	V(x(t,j))
	\leq& e^{-\bar\epsilon t}  \max \left\lbrace V(x(0,0)), \abs{\eta(0,0)} \right\rbrace\\
	&+ \maxvar \int_{0}^{t} e^{-\bar \epsilon(t-\tau)} \alpha_w(\abs{w(\tau)}) d\tau		
	\end{split}
	\end{equation}
	and
	\begin{equation}
		\label{eq_iir_ind_eta}
		\begin{split}
			\eta(t,j) \leq& e^{-\bar\epsilon t} \max \left\lbrace V(x(0,0)), \abs{\eta(0,0)} \right\rbrace\\
			&+ \maxvar \int_{0}^{t} e^{-\bar \epsilon(t-\tau)} \alpha_w(\abs{w(\tau)}) d\tau
		\end{split}		
	\end{equation}
	hold with $\bar \epsilon = \min \left\lbrace \epsilon_1,\epsilon_\refer \right\rbrace$ . Both inequalities trivially hold for $t = j = 0$. Further, suppose the inequalities hold for all $\tvar_j$  with $j \leq \barj$ for some $\barj \in \mathbb{N}$. 		
	Plugging \eqref{eq_iir_iss_res} and \eqref{eq_iir_ind_eta} for $(t,j) = \tvar_{\barj}$ into \eqref{eq_S_iir}, we obtain for $\svar_\barj \leq t \leq \svar_{\barj+1}$ since $r_1+r_2 \leq 1$ that	 
	\begin{equation}
	\label{eq_eta_iir_res}
	\begin{split}
	\eta(t,\barj+1) =& S(\eta(\tvar_\barj),x(\tvar_\barj))\\
	\leq& e^{-\bar \epsilon \svar_{\barj+1}} \max \left\lbrace V(x(0,0)), \abs{\eta(0,0)} \right\rbrace\\
	&+ \maxvar \int_{0}^{t} e^{-\bar \epsilon(\svar_{\barj+1}-\tau)} \alpha_w(\abs{w(\tau)}) d\tau.	
	\end{split} 
	\end{equation}
	Recall from Subsection~\ref{sec_def_algo} that Algorithm~\ref{algo_trig_window} ensures for $C(x,\eta)$ according to \eqref{eq_iir_C} either that 
	\begin{equation}
	\label{eq_tr_iir_1}
	\begin{split}	
	&e^{ -\epsilon_\ivar{\barj}  \left(t-\svar_\barj\right) }V(x(\tvar_j))	\leq e^{-\epsilon_\refer \left(t-\svar_\barj\right)}  \eta(\tvar_j)\\
	\overset{\eqref{eq_iir_ind_eta} \text{ for } \tvar_\barj}{\leq} & e^{-\bar\epsilon t} \max \left\lbrace V(x(0,0)), \abs{\eta(0,0)} \right\rbrace\\
	&+ \maxvar\int_{0}^{\svar_\barj} e^{-\bar\epsilon(t-\tau)} \alpha_w\left(\abs{w(\tau)}\right) d\tau 
	\end{split}
	\end{equation}
	or, if it uses the fall-back strategy, i.e., if $\ivar{\barj} = 1$, that
	\begin{equation}
	\begin{split}	
	&\hphantom{\leq}~\,e^{-\epsilon_1 \left(t-\svar_\barj\right)} V(x(\tvar_\barj))\\
	\overset{\eqref{eq_iir_iss_res} \text{ for } \tvar_\barj}&{\leq} e^{-\bar \epsilon t}  \max \left\lbrace V(x(0,0)), \abs{\eta(0,0)} \right\rbrace\\
	&\phantom{\leq}+\maxvar \int_{0}^{\svar_\barj} e^{-\bar \epsilon(t-\tau)}\alpha_w(\abs{w(\tau)}) d\tau		
	\end{split}	
	\label{eq_tr_iir_2}
	\end{equation}
	hold for $\svar_\barj \leq t \leq \svar_{\barj+1}$. Using \eqref{eq_tr_iir_1} and \eqref{eq_tr_iir_2} in \eqref{eq_V_dec_iir}, it follows that 
	\begin{equation*}
	\begin{split}
		V(x(\svar,\barj+1))~&\leq\\
	U_\ivar{\barj}(\xi(t,\barj+1))& \leq e^{-\bar\epsilon t} \max \left\lbrace V(x(0,0)), \abs{\eta(0,0)}\right\rbrace\\
	&\phantom{\leq}+\maxvar \int_{0}^{t} e^{-\bar \epsilon(t-\tau)}\alpha_w(\abs{w(\tau)}) d\tau 
	\end{split}
	\end{equation*}	
	holds for $\svar_\barj \leq t \leq \svar_{\barj+1}$ and thus also for $\tvar_{\barj+1}$. 
	Thus \eqref{eq_iir_iss_res} and \eqref{eq_iir_ind_eta} hold by induction for all $(t,j)\in\text{dom}~\xi$. The remainder of this proof is similar to the corresponding part of the proof of Theorem~\ref{prop_fir} and thus omitted. \hfill\hfill\qed
\subsubsection{Proof of Theorem~\ref{theo_ref}}
\label{proof_theo_ref}
	This proof follows the same lines as the proof of Theorem~\ref{prop_fir}. We thus sketch here only the differences. Similar as in the proof of Theorem~\ref{prop_fir}, we obtain that 		
	$t_{\min} \leq \svar_{j+1} - \svar_j \leq t_{\max} \coloneqq \underset{i \in \left\lbrace 1,\dots,n_\tpar\right\rbrace}{\max} \delta T_{\max}(\gamma_i,L_i+\frac{\epsilon_i}{2})
	$ holds and that there is for each $j \in \mathbb{N}_0$ an $\ivar{j} \in \left\lbrace 1,\dots,n_\tpar\right\rbrace$, such that $\svar_{j+1} - \svar_j \leq \delta T_{\max}\left(\gamma_\ivar{j},L_\ivar{j}+\frac{\epsilon_\ivar{j}}{2}\right)$.  Proposition~\ref{prop_hybrid} thus implies
	for $\svar_j \leq t \leq \svar_{j+1}$ with $\Lambda_\ivar{j} = L_\ivar{j}+\frac{\epsilon_\ivar{j}}{2}$ that 
	\begin{equation}
	\label{eq_V_dec_ref}
	\begin{split}
	&V(x(\svar,j+1)) \leq U_\ivar{j}(\xi(\svar,j+1))	\\
	\leq&  e^{ -\epsilon_\ivar{j} (t-\svar_j)}V(x(\tvar_j)) +\maxvar  \int_{\svar_j}^{t}e^{-\epsilon_\refer (t-\tau)} \alpha_w\left(\abs{w(\tau)}\right) d\tau,
	\end{split}
	\end{equation}
	where $U_\ivar{j}(\xi)$ is the function according to \eqref{eq_def_u} for the parameters $\gamma_\ivar{j}$ and $\Lambda_\ivar{j}$, holds for $\maxvar$ according to \eqref{eq_def_maxvar}.

	The next step is, similar as in the proof of Theorem~\ref{theo_iir}, to show by induction that, for all $(t,j) \in \text{dom}~\xi$,
	\begin{equation}
	\label{eq_ref_iss_res}
	\begin{split}
	V(x(t,j))
	\leq& e^{-\bar\epsilon t} V(x(0,0)) \\
	&+ \maxvar \int_{0}^{t} e^{-\bar \epsilon(t-\tau)} \alpha_w(\abs{w(\tau)}) d\tau		
	\end{split}
	\end{equation}
	holds with $\bar \epsilon = \min \left\lbrace \epsilon_1,\epsilon_\refer \right\rbrace$ . It trivially holds for $t = j = 0$. Further suppose it holds  for all $\tvar_j$ with $j \leq \barj$ for some $\barj \in \mathbb{N}$. 
	Recall from Subsection~\ref{sec_def_algo} that Algorithm~\ref{algo_trig_window} ensures for $C(x,\eta)$ according to \eqref{eq_ref_C} either that 
	\begin{equation}
	\label{eq_tr_ref_1}
	\begin{split}	
	e^{ -\epsilon_\ivar{\barj}  \left(t-\svar_\barj\right) }V(x(\tvar_\barj))	\leq e^{-\epsilon_\refer \left(t-\svar_\barj\right)}  \eta(\tvar_\barj)
	\leq e^{-\bar \epsilon t} V(x(0,0)) 
	\end{split}
	\end{equation}
	or, if it uses the fall-back strategy, i.e., if $\ivar{\barj} = 1$, that
	\begin{equation}
	\begin{split}	
	&\hphantom{\leq}~\,e^{-\epsilon_1 \left(t-\svar_\barj\right)} V(x(\tvar_\barj))\\
	\overset{\eqref{eq_ref_iss_res} \text{ for } \tvar_\barj}&{\leq} e^{-\bar \epsilon t}  \max \left\lbrace V(x(0,0)), \abs{\eta(0,0)} \right\rbrace\\
	&\phantom{\leq}+\maxvar \int_{0}^{\svar_\barj} e^{-\bar \epsilon(t-\tau)}\alpha_w(\abs{w(\tau)}) d\tau		
	\end{split}	
	\label{eq_tr_ref_2}
	\end{equation}
	holds for $\svar_\barj \leq t \leq \svar_{\barj+1}$. Using \eqref{eq_tr_ref_1} and \eqref{eq_tr_ref_2} in \eqref{eq_V_dec_ref}, it follows that 
	\begin{equation*}
	\begin{split}
		V(x(\svar,\barj+1))~&\leq\\
	U_\ivar{\barj}(\xi(t,\barj+1)) &\leq e^{-\bar\epsilon t} \max \left\lbrace V(x(0,0)), \abs{\eta(0,0)}\right\rbrace\\
	&\phantom{\leq}+\maxvar \int_{0}^{t} e^{-\bar \epsilon(t-\tau)}\alpha_w(\abs{w(\tau)}) d\tau 
	\end{split}
	\end{equation*}	
	holds for $\svar_\barj \leq t \leq \svar_{\barj+1}$ and thus also for $\tvar_{\barj+1}$. 
	Thus \eqref{eq_ref_iss_res} holds by induction for all $(t,j)\in\text{dom}~\xi$. The remainder of this proof is similar to the corresponding part of the proof of Theorem~\ref{prop_fir} and thus omitted. \hfill\hfill\qed
\subsubsection{Proof of Proposition~\ref{prop_hybrid_loc}}
\label{proof_prop_hyb_loc}
\begin{proof}
	Similar as in the proof of Proposition~\ref{prop_hybrid}, we obtain that 
	\begin{equation*}
	\label{eq_dec_u_loc}
	\begin{split}
	&\frac{d}{dt} U(\xi(\svar,j+1))
	\leq  -\epsilon  U(\xi(t,j+1))+\alpha_w\left(\abs{w}\right) 
	\end{split}	
	\end{equation*} 
	holds if $x(t,j+1) \in \mathcal{X}_c$ and $\hat{x}(t,j+1) = \hat{x}(s_{j}^+) = x(s_j) \in \mathcal{X}_c$. Note that $V(x(\tvar_j)) < c$ and  $e(\tvar_j) = 0$ such that $x(\tvar_j) = \hat{x}(\tvar_j)\in \mathcal{X}$. 
	Thus, we obtain for $t$ sufficiently close to $\svar_j$ due to the comparison Lemma (cf. \cite[p. 102]{khalil2002nonlinear}) that
	\begin{align*}
	& U(\xi(\svar,j+1)) \nonumber\\ 
	\leq& e^{ -\epsilon (t-\svar_j)}V(x(\tvar_j))\nonumber 
	+ \int_{t_j}^{t} e^{\ -\epsilon (t-\tau)} \alpha_w\left(\abs{w(\tau)}\right) d\tau \nonumber\\
	\overset{\abs{w(\tau)}\leq\bar w}{\leq}&  e^{-\epsilon (t-\svar_j)}V(x(\tvar_j)) + \frac{\alpha_w\left(\abs{\bar w}\right)}{\epsilon} \left(1-e^{-\epsilon\left(t-\svar_j\right)}\right). \nonumber
	\end{align*}
	From \eqref{eq_bound_U_loc} and $V(x(\tvar_j)) \leq c$, we obtain using simple derivative arguments that $e^{-\epsilon (t-\svar_j)}V(x(\tvar_j)) + \frac{\alpha_w\left(\abs{\bar w}\right)}{\epsilon} \left(1-e^{-\epsilon\left(t-\svar_j\right)}\right) \leq c$ and thus that $x(t) \in \mathcal{X}$ for $t$ sufficiently close to $\svar_j$. 
	We can now use this argumentation iteratively to observe with $V(x(\svar,j+1)) \leq U(\xi(\svar,j+1))$ that \eqref{eq_prop_hybrid1_loc} holds for $\svar_j \leq t \leq \svar_j + \auxvar(\svar_j^+)$.
\end{proof}				
\subsubsection{Proof of Theorem~\ref{prop_fir_loc}}
\label{proof_prop_fir_loc}
			Similar as in the proof of Theorem~\ref{prop_fir}, we obtain that 		
			$t_{\min} \leq \svar_{j+1} - \svar_j \leq t_{\max} \coloneqq \underset{i \in \left\lbrace1,\dots,n_\tpar\right\rbrace, l \in\left\lbrace1,\dots,n_c \right\rbrace}{\max} \delta T_{\max}\left(\gamma_\newi,L_\newi+\frac{\epsilon_\newi}{2}\right)
			$ holds.

			Now, we will show by induction that 
			\begin{equation}
				\label{eq_fir_loc_ind}
				\begin{split}
					&V(x(t,j)) \\
					\leq& \min \left\lbrace c_w + e^{-\bar\epsilon t} \max \left\lbrace V(x(0,0))-c_w,\abs{\eta(0,0)} \right\rbrace, c_{\max} \right\rbrace.
				\end{split}
			\end{equation}
			holds for all $(t,j) \in \text{dom}~\xi$ and $\bar \epsilon = \min\left\lbrace \underset{l \in \left\lbrace 1,\dots,n_c \right\rbrace}{\min} \epsilon_{1,l},\epsilon_\refer\right\rbrace$. It trivially holds for $t = j = 0$ if $x(\tvar_0) \in \mathcal{X}_{c_{\max}}$. Now suppose it holds for all $\tvar_j$ with $j \leq \barj$ for some $\barj \in \mathbb{N}$. 
			Let $l_\barj =  \underset{l\in\left\lbrace 1,\dots,n_c \right\rbrace}{\min}~l$ s.t. $c_l \geq \max\left\lbrace V(x(\tvar_{\barj})),C(x(\tvar_{\barj}),\eta(\tvar_{\barj})) \right\rbrace$, which is selected in Line~\ref{line_l} in Algorithm~\ref{algo_trig_window_loc}. 		
			 We can conclude from the algorithm for $\tilde{j}$ and $l_\barj$ that either \eqref{eq_dec_gen_loc} and \eqref{eq_bound_aux_loc} hold for some $\ivar{\barj}\in\left\lbrace 2,\dots,n_\tpar\right\rbrace$ and $\svar_\barj \leq t \leq \svar_{\barj+1}$, or that $\svar_{\barj+1} = \svar_{\barj} + T_{\max}(\gamma_{1,l_\barj},L_{1,l_\barj}+\frac{\epsilon_{1,l_\barj}}{2})$.
			 
			 In the former case, \eqref{eq_dec_gen_loc} implies since $\max \left\lbrace V(x(\tvar_j)),C(x(\tvar_j),\eta(\tvar_j))\right\rbrace \leq c_{l_\barj}$ that \eqref{eq_bound_U_loc} holds for $\barj$ with $\epsilon = \epsilon_{i_\barj,l_\barj}$. Thus, it follows in this case from Proposition~\ref{prop_hybrid_loc} 
			 that \eqref{eq_cond_dec} holds for $\barj$ and $\svar_\barj \leq t \leq \svar_{\barj+1}$. Similar as in the proof of Theorem~\ref{prop_fir}, we obtain due to the update of $\eta$ according to \eqref{eq_S_window} with \eqref{eq_fir_loc_ind} for $(t,j) = \tvar_j$ with $j \leq \barj$ that 
			 \begin{equation}
				 \label{eq_fir_loc_C}
				 \begin{split}
				    &C(x(\tvar_\barj),\eta(\tvar_\barj))\\
	 			 	\leq&\frac{1}{m} \left(V(x(\tvar_{\barj})) + \sum_{k=1}^{m-1} \eta_k(\tvar_\barj)\right)\\
	 			 	\leq& \min \left\lbrace c_w + e^{-\bar\epsilon \svar_\barj} \max \left\lbrace V(x(0,0))-c_w,\abs{\eta(0,0)} \right\rbrace, c_{\max} \right\rbrace
				 \end{split}
			 \end{equation}
			 Plugging \eqref{eq_fir_loc_C} in \eqref{eq_cond_dec}, we can conclude that \eqref{eq_fir_loc_ind} holds in this case for $\barj+1$ and $\svar_\barj \leq t \leq \svar_{\barj+1}$.
			 
			 In the other case, i.e., if $\svar_{\barj+1} = \svar_{\barj} + T_{\max}\left(\gamma_{1,l_\barj},L_{1,l_\barj}+\frac{\epsilon_{1,l_\barj}}{2}\right)$, we observe that \eqref{eq_bound_U_loc} holds for $\epsilon = \epsilon_{1,l_\barj}$ since $V(x(\tvar_\barj)) \leq c_{l_\barj}$, $\frac{\alpha_w\left(\abs{\bar w}\right)}{\epsilon_{1,l_\barj}} \leq c_w \leq c_{l_\barj}$ and $e^{-\epsilon_1 \auxvar(\tvar_\barj^+)} \leq 1$. Thus, we can use in this case Proposition~\ref{prop_hybrid_loc} and obtain that 
			 \begin{equation}
			 \label{eq_fir_eps1_loc}
			 	\begin{split}
			 	V(x(t,\barj+1)) 
			 	\leq&	e^{ -\epsilon_{1,l_\barj} (t-\svar_\barj)}V(x(\tvar_\barj))\\
			 	& + \frac{\alpha_w\left(\abs{\bar w}\right)}{\epsilon_{1,l_\barj}} \left(1-e^{-\epsilon_{1,l_\barj}(t-\svar_\barj)}\right)
			 	\end{split}
			 \end{equation} 
			 holds for $\svar_\barj \leq t \leq \svar_{\barj+1}$. Plugging \eqref{eq_fir_loc_ind} for $(t,j) = \tvar_\barj$ into \eqref{eq_fir_eps1_loc}, we obtain for $\svar_\barj\leq t \leq \svar_{\barj+1}$
			 \begin{equation}
			 	\begin{split}
			 	&V(x(t,\barj+1))\\ 
			 	\leq& \frac{\alpha_w\left(\abs{\bar w}\right)}{\epsilon_{1,l_\barj}} \left(1-e^{-\epsilon_{1,l_\barj}(t-\svar_\barj)}\right) +	e^{ -\epsilon_{1,l_\barj} (t-\svar_\barj)}\\
			 	&\cdot\min \left\lbrace c_w + e^{-\bar\epsilon \svar_\barj} \max \left\lbrace V(x(0,0))-c_w,\abs{\eta(0,0)} \right\rbrace, c_{\max} \right\rbrace\\	
			 	\leq& c_w \left(1-e^{-\epsilon_{1,l_\barj}(t-\svar_\barj)}\right)+ e^{ -\epsilon_{1,l_\barj} (t-\svar_\barj)}\\
			 	&\cdot \min \left\lbrace c_w + e^{-\bar\epsilon \svar_\barj} \max \left\lbrace V(x(0,0))-c_w,\abs{\eta(0,0)} \right\rbrace, c_{\max} \right\rbrace\\	 	
			 	\leq& \min \left\lbrace c_w + e^{-\bar\epsilon t} \max \left\lbrace V(x(0,0))-c_w,\abs{\eta(0,0)} \right\rbrace, c_{\max} \right\rbrace. 
			 	\end{split}
			 \end{equation}
			 Thus \eqref{eq_fir_loc_ind} holds in both cases  also for $\tvar_{\barj+1}$ and hence by induction for all $(t,j) \in\text{dom}~\xi$. RAS of $\mathcal{R}$ follows immediately from \eqref{eq_fir_loc_ind} and $t \geq \frac{t+j t_{\min}}{2}$, which implies that we can chose
			 \begin{equation*}
				 \begin{split}
					 &\beta\left(\max\left\lbrace V(x(0,0)) - c_w, 0 \right\rbrace + \abs{\begin{bmatrix}
					 		e(0,0)\\
					 		\eta(0,0)
					 \end{bmatrix}},t, j\right)\\
					  =& 2e^{-\bar \epsilon \left(\frac{t+jt_{\min}}{2}\right)} \left(\max \left\lbrace V(x(0,0))-c_w,0 \right\rbrace + \abs{\eta(0,0)}\right).\hfill\hfill\qed
				 \end{split}			 	
			 \end{equation*}

\end{document}